\newcolumntype{?}{!{\vrule width 1pt}}
\numberwithin{equation}{section}
\definecolor{darkblue}{rgb}{0,0,.6}
\DeclareMathOperator{\sgn}{sgn}
\DeclareMathOperator{\argmin}{arg\,min}
\newcommand{\beqs}{\begin{eqnarray}}
\newcommand{\eeqs}{\end{eqnarray}}
\newcommand{\beqsn}{\begin{eqnarray*}}
	\newcommand{\eeqsn}{\end{eqnarray*}}
\newcommand{\non}{\nonumber \\}
\newtheorem{remark}{Remark}
\newtheorem{theorem}{Theorem}
\newtheorem{lemma}{Lemma}
\newtheorem{assumption}{Assumption}
\newcommand{\probP}{\text{I\kern-0.15em P}}
\begin{document}
	\title{Shocks-adaptive Robust Minimum Variance Portfolio for a Large Universe of Assets
	}
	\author[a]{Qingliang Fan}
	\author[b]{Ruike Wu}
	\author[c]{Yanrong Yang}

	\affil[a]{Department of Economics, The Chinese University of Hong Kong}
	\affil[b]{School of Economics, Xiamen University}
	\affil[c]{College of Business \& Economics, The Australian National University}
	
	\maketitle
	\begin{center}
		\textbf{Abstract}
	\end{center}
	This paper proposes a robust, shocks-adaptive portfolio in a large-dimensional assets universe where the number of assets could be comparable to or even larger than the sample size. It is well documented that portfolios based on optimizations are sensitive to outliers in return data.
We deal with outliers by proposing a robust factor model, contributing methodologically through the development of a robust principal component analysis (PCA) for factor model estimation and a shrinkage estimation for the random error covariance matrix. This approach extends the well-regarded Principal Orthogonal Complement Thresholding (POET) method \citep{fan2013large}, enabling it to effectively handle heavy tails and sudden shocks in data.  
 The novelty of the proposed robust method is its adaptiveness to both global and idiosyncratic shocks, without the need to distinguish them, which is useful in forming portfolio weights when facing outliers. We develop the theoretical results of the robust factor model and the robust minimum variance portfolio. Numerical and empirical results show the superior performance of the new portfolio.
	\onehalfspacing

	\noindent
	{\bf{JEL Classification}}: G11, C38, C55, C58 \smallskip\\
	
	\noindent {\bf{Keywords}}: Minimum variance portfolio, factor model, robust portfolio, covariance learning.
	\onehalfspacing
\newpage
\section{Introduction}
Diversification is widely recognized in academia and industry as the predominant investment strategy. The mean-variance model \citep{markowitz1952portfolio} provides an elegant analytical solution given the population mean and (co)variances. Real data frequently display characteristics of heavy tails and high dimensionality, presenting challenges to the robustness of portfolio strategies that rely on optimization methods. It is well documented that portfolios based on optimization are highly sensitive to input data, often resulting in extreme portfolio weights and poor out of sample performance.

This paper studies the large dimensional minimum variance portfolio (MVP), inspired by the real-world investment scenarios that fund managers frequently encounter, where they must manage a vast universe of assets, often exceeding the sample size. The focus of the MVP is primarily on the covariance structure, presenting a more straightforward task in such complex environments \citep{jagannathan2003risk, demiguel2009generalized}. Specifically, we propose a robust minimum variance portfolio (R-MVP) to deal with outliers (or heavy tails) in the financial return data. By ``\emph{robust portfolio}'', specifically, it means (1), the portfolio constructed according to the presumed optimization rule (here it is the minimum variance) is robust to the features of real data deviating from classic assumptions such as i.i.d., the existence of second moment, and sub-Gaussian tails; (2), the portfolio is data-adaptive: it closely aligns with the MVP based on POET under regular conditions, and shifts towards a robust version when encountering outliers; and (3), its out of sample performance is robust. The robustness is guaranteed by our asymptotic theory in Section \ref{sec_theory}. When applying to the real data, the R-MVP is shown to be immune to highly sensitive weights due to potential financial turmoils, including both market and idiosyncratic shocks. For instance, the 2007-2008 financial crisis, triggered by the banking sector, spread across the entire market. During this period, we observed various shocks affecting the entire market as well as specific individual firms. We achieve the robust portfolio by proposing a robustified version of the factor model described below.

In asset pricing studies, the factor model is the main workhorse \citep{fama1992crosssection,bai2002determining} that evolved from the capital asset pricing model (CAPM). The theoretical work of arbitrage pricing theory (APT, \citealp{ross1976arbitrage}) laid the foundation for approximate factor models, which conveniently decompose the return data covariance matrix into a low-rank component and a sparse component \citep{Chamberlain83}. Following the literature, we estimate the high-dimensional covariance matrix using the popular approximate factor model. The celebrated POET \citep{fan2013large} method employs the factor model for dimension reduction and an adaptive thresholding method to regularize the error covariance matrix estimation.  However, the POET method  does not deal with heavy tails. The subsequent methods, e.g. \cite{fan2018aos} and \cite{fan2019robust} further discuss  heavier tails in the data but still need the existence of the second moment. In practical applications, the occurrence of heavy tails is common in financial returns. From a modeling perspective, many financial asset returns might not even possess a finite second moment.

To deal with above issues in the data, our procedure employs a robust PCA method for the estimation of the factor model \citep{M2005}, and a simple thresholding method for the residual covariance \citep{cai2011adaptive}. Our method enhances the robustness of the MVP by incorporating a weight function into the estimation procedure, which effectively mitigates the impact of outliers. The weight in the robust PCA is automatically derived from the data, serving as a generalization of classical PCA. This approach is adept at handling data that may contain outliers, without the need to prespecify their effects. 
We develop the desired theoretical properties of the robust factor model and R-MVP, including factor loading and common factors estimation consistency, error covariance estimation consistency, and oracle risk and Sharpe ratio consistency. 
Moreover, the proposed R-MVP is robust to the global (homogeneous) and idiosyncratic (heterogeneous) outliers, representing the shocks to the whole market and individual assets, respectively. In our model, we allow for either one or both of the two shocks without the need to specify them in the data. Additionally, we generalize the POET method and provide a weighted PCA approach which is of independent interest.

\subsection{MVP Preliminaries}
MVP has a surging appearance in recent studies on portfolio management \citep{ demiguel2009generalized, fan2012vast, ding2021high,caner2020sharpe}. Specifically, the classic MVP problem is:
	\begin{equation}
	\label{eq gmvp}
	\begin{split} \min \ W^{\top}\Sigma_r W,\ \ \ \ \ \ \\ \text{s.t.} \ W^{\top} 1_p = 1. \ \ \ \ \ 
	\end{split}
	\
	\end{equation}
	where $\Sigma_r$ is the  $p$-dimensional population covariance matrix of asset returns; $W$ is a $p\times1$ vector of asset weights in the portfolio, and $1_p$ is a $p\times1$ vector of 1's. 
	The analytical solution for MVP weight is:
	\begin{equation}
	\label{analytic solution for gmvp}
	W^* = \frac{\Sigma_r^{-1} 1_p}{1_p^{\top}\Sigma_r^{-1} 1_p}.
	\end{equation}
	The variance of MVP and the corresponding Sharpe ratio (SR) are:
	\begin{equation}
	\label{analytic solution for minimum variance}
	R_{\min}= {W^{*}}^{\top} \Sigma_r W^* = \frac{1}{1_p^{\top}\Sigma_r^{-1} 1_p},
	\end{equation}
	\begin{equation}
	\label{analytic solution for SR}
	{SR} = \frac{{W^{\ast}}^{\top}\mu}{ \sqrt{ {W^{*}}^{\top} \Sigma_r W^*}}=\frac{1_p^{\top}\Sigma_r^{-1} \mu }{\sqrt{1_p^{\top} \Sigma_r^{-1} 1_p}}.
	\end{equation}
	respectively, and $\mu$ is the expected excess returns (over a risk-free rate) of $p$ stocks.

\subsection{A Brief Literature Review}
In the literature, many papers propose robust methods for portfolio allocation. 
Robust estimation has been studied in the classic literature \citep{huber1964,huber1973} and it has recently drawn much interest in portfolio-related studies. 
\cite{demiguel2009or} propose robust portfolios based on M- and S-estimation techniques and show their performance in mostly low-dimension cases.
\cite{delage2010distributionally} form
the robust portfolio while considering a moment-based  uncertainty set.
\cite{plachel2019}  introduce a joint method for covariance regularization and robust optimization within the framework of minimum variance problem. \cite{Blanchet2022} propose a distributionally robust mean-variance portfolio, where the model uncertainty is imposed on the distribution of asset return. However, these methods usually deal with low-dimension situation. \cite{Hardle2022} propose a robust (minimum variance) Markowitz portfolio utilizing a projected gradient descent technique while avoiding the estimation of the covariance as a whole. The theoretical results of \cite{Hardle2022} are based on i.i.d. assumption of the return vectors.

A strand of literature deals with high-dimensionality issues in 
portfolios allocation. 
\cite{ledoit2017nonlinear} consider the nonlinear shrinkage method in portfolio selection. 
\cite{ao2019approaching} propose the MAXSER method that can achieve optimality in mean-variance portfolios with a large number of assets. \cite{li2022synthetic} further extend their studies by imposing factor structure.
\cite{ding2021high} developed a unified minimum variance portfolio under statistical factor models in high-dimensional situation. \cite{caner2020sharpe} use a residual based nodewise regression to estimate the large covariance matrix of assets and provide the out of sample rates for the Sharpe ratio. 
However, these methods do not consider the outliers and robust portfolio.

Our paper also connects to factor models and robust estimation. The factor model \citep{fan2013large,ait17,ding2021high} is a main workhorse in portfolio studies. \cite{giglio22} provides an excellent survey on recent developments in factor models, machine learning, and asset pricing. Drawing on the previous literature on robust estimation \citep{M2005}, we propose a robust PCA procedure for potential extreme returns in the covariance estimation.  \cite{fan2018aos} summarize  a unified framework for applying POET to various potentially heavy-tailed distributions and 
propose a Kendall's tau based  robust estimator of a large covariance matrix. \cite{fan2019robust} propose a robust covariance matrix estimator for factor models based on Huber loss function. The main difference between \cite{fan2019robust} and ours is that they assume observable factors, while we work on both observable and unobservable factors.  Our work is most closely related to \cite{fan2013large} and is an adaptively robustified version of the POET method.
	\subsection{Our Contributions}
The contributions of the paper are summarized in the following. 
\begin{enumerate}
    \item We develop a robust minimum variance portfolio, which is shown to have desirable theoretical and numerical results when the outliers can be either at the global, idiosyncratic, or both levels. We show the optimal risk consistency and optimal Sharpe ratio consistency with this approach. 
    \item We develop an adaptive robust estimation procedure for factor models that can handle high-dimensional return data with outliers. These outliers can originate from both factors (global) and error terms (idiosyncratic). Our robust estimation procedure can automatically adapt to regular or outlier data without the need to predefine the types of outliers. It operates by diminishing the influence of extreme observations on the portfolio weights. We derive the theoretical properties of the robust factor model estimation, allowing for the variance of common factors to diverge to infinity, which is new to the literature. Our theoretical derivation differs from that of POET, as the identification conditions of the two methods are different. While our primary focus is on robust portfolios, this paper also makes contribution to the literature on robust factor models. 
    \item Our robust investment strategy demonstrates superior out of sample performance compared to existing methods, achieving higher Sharpe ratios and better risk measures across various simulation settings and empirical studies. Additionally, our unified estimation approach is straightforward and easy to implement in practical scenarios.
\end{enumerate}

\subsection{Organization}
	
The rest of the paper is organized as follows. Section \ref{model} states the basic factor model. Section \ref{estimation} describes the estimation procedure.  Section \ref{sec_theory} provides the main theoretical results. Section \ref{simu} provides Monte Carlo simulation results. An empirical application using the S\&P 500 index and Russell 2000 component stock returns follows in Section \ref{empirical}. Section \ref{concl} concludes the paper. Supplementary material collects the proofs of the main theoretical results.

Throughout the paper,  $\lambda_{max}$ and $\lambda_{min}$ denote the maximum and minimum eigenvalue respectively, $\probP(A)$ is the probability of event A occurs.   We denote by $\|A\|_F, \|A\|$, and $\|A\|_1,$ the Frobenius norm, spectral norm and $L_1$-norm of a matrix A, defined respectively by $\|A\|_F = tr^{1/2}(A^\top A)$, $\|A\| = \lambda^{1/2}_{max}(A^\top A)$ and $\|A\|_1 = max_j \sum_i |a_{ij}|$,
respectively. If $A$ is a vector, $\|A\|_F$ and $\|A\|$ are equal to the Euclidean norm.
 
\section{Factor Model on Asset Returns}\label{model}
The investment universe in this study is composed of $p$ assets with observable return data. Following the classic arbitrage pricing theory \citep{ross1976arbitrage} and the approximate	factor model of \cite{Chamberlain83}, the expected return of financial assets is assumed to be driven by some finite number of common factors. Specifically, the assets return can be modeled as
\begin{eqnarray}\label{yr(1)}	r_{it}=b_{i}^{\top}F_t+e_{it}, \ \ \ i=1, 2, \ldots, p; \ \ t=1, 2, \ldots, T,
\end{eqnarray}
where $r_{it}$ is the excess of the risk-free return (hereafter also referred to as return for simplicity) of asset $i$ at time $t$, $F_t$ is the $m$-dimensional vector of common factors at time $t$, $b_{i}$ is the $m \times 1$ factor loading, which captures the relationship between common factors and financial asset $i$, {$m$ is the number of common factors}, and $e_{it}$ is the idiosyncratic error. 
 In practice, factors can be either observed or not\footnote{Some of those factors are commonly used by practitioners \citep{fama1992crosssection, fama1993common, fama2015five}, while others might not be universally accepted. \cite{harvey2021factor}, \cite{feng2020taming} have good discussions on this ``zoo of factors''.}. In this paper, we assume unobserved factors as the default setting. 
	
	The vector form of model (\ref{yr(1)}) is 
	\begin{eqnarray}\label{vecform}
	r_t=BF_t+e_t, \ \ \ t=1, 2, \ldots, T, 
	\end{eqnarray}
	where $r_t=\left(r_{1t}, r_{2t}, \ldots, r_{pt}\right)^{\top}$ is the vector of returns for $p$ assets, $B=\left(b_{1}, b_{2}, \ldots, b_{p}\right)^{\top}$ is the factor loading matrix, and $e_t=\left(e_{1t}, e_{2t}, \ldots, e_{pt}\right)^{\top}$. 
 Without loss of generality, we assume that the common factor $F_t$ is regularized such that its covariance matrix $\Sigma_F$ is the identity matrix. After 
  imposing a factor structure \eqref{vecform} on financial return data, we can decompose its covariance matrix as follows: 
	\begin{eqnarray}\label{yr(2)}
	\Sigma_r =BB^{\top}+\Sigma_e, \ \ \ t=1, 2, \ldots, T, 
	\end{eqnarray}
 Based on the factor-structured covariance matrix in (\ref{yr(2)}), the proposed MVP strategy has the representation
	\begin{equation}
	\label{tvmvp}
	W^{*} = \frac{(BB^\top + \Sigma_e)^{-1} 1_p}{1_p^{\top}(BB^\top + \Sigma_e)^{-1} 1_p},
	\end{equation}
which is the $p\times1$ vector of optimal asset weights.

	
	\section{Robust Estimation Approach}\label{estimation}
We begin by obtaining the factor loading estimator  $\hat{B}$ and common factor estimator $\hat{F}_t$. Consider the following optimization problem, 
	\begin{align}
	\mathop{min}_{B,F_t} \frac{1}{T}\sum^{T}_{t=1}\rho_\tau \left(\left\| r_t -  {B}{F}_t\right\|^2 \right),
	\label{PCA loss}
	\end{align}
	where $\rho_\tau(\cdot)$ is the Huber loss function, given by 
 \begin{align}
	\rho_\tau(x) = \left\{\begin{array}{cc}
	\frac{1}{2}x & \sqrt{x} \leq \tau
	\\ \tau\sqrt{x} -\frac{\tau^2}{2} & \sqrt{x} >\tau .
	\end{array} \right.	
	\end{align}
  By taking the derivative of (\ref{PCA loss}) with respect to $F_t$ and $B$ together with identification condition $p^{-1}B^\top B = I_m$, it can be shown that the factor loading estimator $\hat{B}$ is $\sqrt{p}$ times the corresponding (to its $m$ largest eigenvalues)
	eigenvectors of $\hat{\mathbf{V}}$ and $\hat{F}_t = p^{-1}\hat{B}^\top r_t$  where 
	\begin{eqnarray}
	&&\hat{\mathbf{V}}=\frac{1}{T}\sum^{T}_{t=1}\omega_tr_tr_t^{\top}, \label{weighted covariance}
	\end{eqnarray}
	and 
	\begin{eqnarray}
	\label{weights omegat}
	\omega_t = \left\{\begin{array}{cc}
	\frac{1}{2} & ||r_t - BF_t|| \leq \tau
	\\ \frac{\tau}{2}\frac{1}{\sqrt{r_t^\top r_t - r_t^\top B B^\top r_t/p}} &  ||r_t - BF_t|| > \tau
	\end{array} \right.
	\end{eqnarray}
 is the weight  function. The weighting sequence {$\{ \omega_t\}_{t=1}^T$} is completely data-driven {and is shock-adaptive.} 
 The weight function $\omega_t$ automatically assigns the lower values to those periods in which the shocks are large so that the covariance matrix \eqref{weighted covariance} is less affected by the outliers. Therefore, our portfolio can be robust. Note that  the classic principal component estimation procedure applied in \cite{fan2013large} is a special case of our model by setting $\omega_t 
 = 0.5$ for all $t$.
 
 \begin{remark}
      For robust loss function $\rho_\tau(\cdot)$, we simply choose the commonly-used Huber loss function. Other types of robust loss function could be also applied, e.g. the bisquare loss function $min\{1, 1-(1-x/\tau^2)^3\}$ used by \cite{M2005}.
 \end{remark}

\subsection{Relation with POET}
The estimation robustness is achieved by regulating both the common factors and error terms. Recall that $\hat{B}$ is given by  $\sqrt{p}$ times the corresponding eigenvectors of $\hat{\mathbf{V}}$, and consider the following minimization problem using the transformed return (hence the transformed factors and errors)
\begin{align}
    \argmin_{F,B} \| \tilde{R}^\top - FB^\top \|_F^2, \label{eq: transform PCA obj}
\end{align}
subject to the identification condition $p^{-1}B^\top B = I_m$, where $\tilde{R} = \left( \tilde{r}_1,\ldots\tilde{r}_T \right)$, $\tilde{r}_t = B\tilde{F}_t + \tilde{e}_t$, $\tilde{F}_t = \omega_t^{1/2}F_t$ and $\tilde{e}_t = \omega_t^{1/2}e_t$. For the minimization problem \eqref{eq: transform PCA obj}, it is shown  \citep{bai2002determining, bai2003} that the columns of estimated factor loading from \eqref{eq: transform PCA obj} are $\sqrt{p}$ times the eigenvectors corresponding to the $m$ largest eigenvalues of the matrix $\tilde{R}\tilde{R}^\top = \sum_{t=1}^T \tilde{r}_t \tilde{r}_t^\top = \sum_{t=1}^T \omega_t {r}_t {r}_t^\top$, which is $\hat{B}$. As a result, the estimated factor loading from solving the problem \eqref{PCA loss} enjoys the same property as that from problem \eqref{eq: transform PCA obj}. Note that $\tilde{r}_t$ is the regularized asset returns, which consist of regularized factors $\tilde{F}_t$ and regularized error terms $\tilde{e}_t$, where $\tilde{F}_t$ and $\tilde{e}_t$ are the robustified version of $F_t$ and $e_t$, respectively. 
Therefore, it is equivalent to get the robust estimation of \eqref{PCA loss} from the PCA of a transformed factor model whose outliers in both factors and residuals are taken care of. Conceptually,  compared to the estimation procedure in  \cite{fan2013large}, which is developed based on the original return $R$ itself, our estimation achieves robustness using the robustified return $\tilde{R}$.
 
\subsection{Computation Algorithm}
 Since $\omega_t$ depends on the unknown factor loading, in practice,  we can obtain numerical solutions for $\hat{B}$ and $\hat{F}_t$ in  \eqref{PCA loss} using Algorithm \ref{algonew} shown below.  

\begin{algorithm}[H]		\caption{\label{algonew} {Procedure for robust PCA.}}                          	\begin{algorithmic}
			\footnotesize
			\STATE 1. Set $i \leftarrow 1$. 
			\STATE 2. Compute initial values for optimizing variables by using conventional PCA. In details, we calculate ${B}^{(0)} =\sqrt{p}\times eig_m(RR^\top)$ and ${F}_{t}^{(0)} = B^{(0)}r_t/p$
where $eig_{m}(A)$ takes corresponding (to its m largest eigenvalues) eigvectors of matrix A.
			
			\REPEAT %
			\STATE $\quad$  (1). Compute weighting sequence $\omega_t$ by (\ref{weights omegat}) based on $B^{(i-1)}$,  and further obtain $\hat{\mathbf{V}} = \frac{1}{T}\sum_{t=1}^{T}\omega_t r_t r_t^\top$. 
			\STATE $\quad$ (2). Update ${B}^{(i)}$ by $\sqrt{p} \times eig_m(\hat{\mathbf{V}})$,  and ${F}_t^{(i)} = {B^{(i)}}^\top r_t/p$ for $t=1,\ldots,T$. 
			\STATE $\quad$ (3). $\Delta = abs(||r_t  - {B}^{(i-1)}{F}_{t}^{(i-1)}||^2-||r_t  - {B}^{(i)}{F}_{t}^{(i)}||^2)$, $i \leftarrow i + 1$.
			\UNTIL{$i > MaxI$ or $\Delta < tol$.} 
		\end{algorithmic}
	\end{algorithm}	
	
 Algorithm \ref{algonew} requires five inputs: the $T \times p$ return data matrix $R$, the number of common factors $m$, threshold parameter $\tau$, the maximum steps $MaxI$ and the tolerance level $tol$. {Based on this iterative calculation, we can choose the value of $\tau$ as the empirical upper quantile of $\|r_t - B^{(i-1)}F_t^{(i-1)}\|$. In practice, we recommend using the 0.9-th quantile, which has shown good performance in our numerical studies. }
 The proposed algorithm is efficient and  it typically stops within a few steps.  

\begin{remark}
The practitioners can apply the standard approach proposed by \cite{bai2002determining} to determine the number of common factors $m$, which is given by 
\begin{align}
	\hat{m} = \mathop{arg min}_{0 \leq m_1 \leq M } log\left\{\frac{1}{pT}\left\|\tilde{R} - \hat{B}^{m_1}\hat{F}^{m_1 \top}  \right\|^2_F \right\} + m_1 g(T,p), 
 \label{eq: select factor number}
	\end{align}
	where $M$ is a predetermined upper bound for the number of factors, $\tilde{R}$ is transformed returns based on classic factor loading estimation, $\hat{B}^{m_1}$ and $\hat{F}^{m_1}$ are estimated factor loadings and common factors conditional on factor number $m_1$, and $g(T,p)$ is a penalty function of $p$ and $T$, e.g., $g(T,p) ={(p+T)}log({pT}/{(p+T)})/{pT}$. In addition, the proposed algorithm can be further modified to iteratively update the estimation of the number of common factors. In more details, one can re-estimate the factor numbers via criterion \eqref{eq: select factor number}  after updating the weight sequence in step (1) of Algorithm \ref{algonew}.
\end{remark}

After obtaining $\hat{B}$ and $\hat{F}_t$, we can compute the estimated residuals $\hat{e}_t = r_t  - \hat{B}\hat{F}_t$, and thus estimate the error (residual) covariance matrix $\Sigma_{e}$. In this paper, we follow the studies of \cite{cai2011adaptive} and \cite{fan2013large}, and impose approximate sparsity assumption on $\Sigma_{e} = (\sigma_{e,ij})_{p\times p}$: for some $q \in [0,1]$,
	\begin{align}
	\label{sparse kp}
	\kappa_q=\max_{i\leq p}\sum_{j\leq p}|\sigma_{e,ij}|^{q},
	\end{align}
	does not grow too fast as $p\rightarrow \infty$. In particular, $\kappa_q$ is the maximum number of non-zero elements in each row when $q = 0$.	 Then, we apply the shrinkage estimation to the off-diagonal elements in sample covariance matrix. Specifically, the error covariance estimator is given by
	\begin{align}
	\hat{\Sigma}_e = (\hat{\sigma}_{\hat{e},ij})_{p\times p}, \quad \hat{\sigma}_{\hat{e},ij} = \left\{ \begin{array}{ll} s_{\hat{e},ii} & i=j \\ \chi_{ij}(s_{\hat{e},ij}) & i\neq j \end{array} \right.
 \label{eq: adpative thresholding estimator}
	\end{align}
	where $s_{\hat{e},ij} = (1/T)\sum_{t=1}^{T}\hat{e}_{it}\hat{e}_{jt}$ is the $(i,j)$-th element of sample covariance matrix of $\hat{e}_t$, $\chi_{ij}(\cdot)$ is a shrinkage function satisfying
	$\chi_{ij}(z)=0$ if $|z| \leq \tau_{ij}$, and $|\chi_{ij}(z)-z|\leq \tau_{ij}$, where $\tau_{ij}$ is a positive threshold. The general $\chi_{ij}(\cdot)$ includes many commonly-used thresholding functions such as soft thresholding ($\chi_{ij}(z) = \sgn(z)(|z|-\tau_{ij})^{+}$, $(z)^+ = max\{z,0\}$) and hard thresholding ($\chi_{ij}(z) = z\mathbb{I}(|z|\geq \tau_{ij})$,  where $\mathbb{I}(\cdot)$ is an indicator function). 
	In this paper, we use the adaptive thresholding method developed by \cite{cai2011adaptive}, with entry-adaptive $\tau_{ij} = c_\tau \tilde{\varsigma}_T\sqrt{\hat{\theta}_{ij}}$, where $c_\tau > 0$ is a large constant, {$\tilde{\varsigma}_T = 1/\sqrt{p} + \sqrt{logp /T}$}, 
	and $\hat{\theta}_{ij} = (1/T)\sum_{t=1}^{T}(\hat{e}_{it}\hat{e}_{jt}-{s}_{\hat{e},ij})^2$.  It is worth noting that the shrinkage estimator in the residual covariance matrix also helps the portfolio to achieve certain robustness, since it regulates the extremely large values in the sample covariance matrix.

With estimated factor loading $\hat{B}$ and error covariance estimator $\hat{\Sigma}_e$, we can obtain the return covariance estimator $\hat{\Sigma}_r=\hat{B}\hat{B}^{\top}+\hat{\Sigma}_e$.  Then, the proposed R-MVP is given by
	\begin{equation}
	\label{rmvp}
	\hat{W}_R = \frac{\hat{\Sigma}_r^{-1} 1_p}{1_p^{\top}\hat{\Sigma}_r^{-1} 1_p}.
	\end{equation}

	\section{Assumptions and Asymptotic Theory}\label{sec_theory}
	We first present the assumptions to facilitate the theoretical derivations.
 \subsection{Basic Assumptions}\label{assumptions}

 \begin{assumption}\label{assu: factor loading}
 \begin{enumerate}
     \item[(a)] $B^\top B$ is diagonal and 
     $\|p^{-1}B^\top B - I_m\|_F =O_p(p^{-1/2})$.
     \item[(b)] There exists constant $M >0$ such that for all $i\leq p$, $E\|b_i\|^2 < M$.
 \end{enumerate}
\end{assumption}
 
Assumption \ref{assu: factor loading}(a) is similar to Assumption 3.1 in \cite{fan2013large}. It is one of the most common assumptions in the literature of approximate factor models. It implies that the first $m$ eigenvalues of $B^\top B$ grow at rate $O(p)$ and requires the factors to be pervasive, i.e., to impact a non-vanishing proportion of individual time series. Under this assumption and other regularity conditions, the factor loadings and common factors can be consistently estimated (up to some rotations).

 \begin{assumption}
\label{assum: error terms}
    \begin{itemize}
        \item[(a)] 
        $\{e_t\}$ is strictly stationary and ergodic with zero mean and finite covariance matrix $\Sigma_e$.
\item[(b)]
 There exists  constants $c_1, c_2>0$ such that $\lambda_{\min}\left(\Sigma_{e}\right)>c_1$, $\lambda_{max}(\Sigma_{e})< c_2$, and $\min_{i\leq p ,j\leq p } var({e}_{it}{e}_{jt}) > c_1$.
    \end{itemize}
\end{assumption}

Assumption \ref{assum: error terms}(a) is general for the error term, requiring that the error term has a zero mean and a finite covariance matrix.
Assumption \ref{assum: error terms}(b) requires that $\Sigma_{e}$ be well-conditioned and {ensures that the largest eigenvalue of $\Sigma_r$ grow at rate $O(p)$}.

Define $E(F_{it}^2) = C \delta$ for $i\leq p$, $t\leq T$, and $\max_{t\leq T}E \| F_t\| = C\Delta$, where $C$ is some positive constant. In classic factor model setups, $\delta$ and $\Delta$ are often assumed to be of constant order, as in \cite{bai2002determining,bai2003,fan2013large}.
In this paper, we derive asymptotic results while allowing for situations where the factor may not have a finite second order moment or even a finite maximum first-order moment, and that both $\delta$ and $\Delta$ can diverge to positive infinity at some rate.  If $F_t$ has exponential tails such that for any $j\leq m$, $\probP(|F_{jt}| > s_1)\leq exp(-(s_1/s_2)^{s_3})$, it is clear that $\delta$ is finite, and it can be shown $\Delta = O_p\left((logT)^{1/s_3}\right)$ by Bonferroni's method. If $F_t$ is independent and identically distributed, it is clear that $\delta^{1/2}$ has the same order of magnitude of $\Delta$.  We note that our theoretical derivation can also be extended to situations where the covariance matrix of the error term does not exist.

 The following Assumption \ref{assum: error terms factor trans} regulates the behavior  of  transformed factors $\tilde{F}_t$ and transformed error terms $\tilde{e}_t$. 

 \begin{assumption}
\label{assum: error terms factor trans}
    \begin{itemize}
\item[(a)] 
$\{\tilde{e}_t, \tilde{F}_t\}_{t\geq 1}$ is strictly stationary, and $E(\tilde{e}_{it}) = E(\tilde{e}_{it}\tilde{F}_{jt}) = 0$ for all $i \leq p, j \leq m$ and $t\leq T$.
\item[(b)] There exists $\varphi_1, \varphi_2 > 0$ and $d_1, d_2 > 0$, such that for any $s>0$, $i\leq p$ and $j\leq m$,
            $$
            \probP(|{e}_{it}| > s) \leq  exp(-(s/d_1)^{\varphi_1}), \probP(|\tilde{e}_{it}| > s) \leq  exp(-(s/d_1)^{\varphi_1}),  \probP(|\tilde{F}_{jt}| > s) \leq  exp(-(s/d_2)^{\varphi_2}).
            $$
            \item[(c)] $cov(\tilde{F}_t) = I_m$, $\|T^{-1}\tilde{F}^\top \tilde{F} - I_m\| = o_p(1)$, where $\tilde{F} = (\tilde{F}_1,\ldots,\tilde{F}_T)^\top$.
            \item[(d)] There exists a  positive constant $C$ such that  $\|\tilde{\Sigma}_e\|_1 \leq C$, where $\tilde{\Sigma}_e$ is the covariance matrix of transformed error term $\tilde{e}_t$.
    \end{itemize}
\end{assumption}

\begin{assumption}
    \label{assu: mixing condition}
    There exists $\varphi_3 >0$  such that $(log p)^{2/\varphi - 1} = o(T)$ and $(logp)^{6/\tilde{\varphi}-1} = o(T)$ where 
$\varphi = 1.5\varphi_1^{-1} + 1.5\varphi_2^{-1} + \varphi_3^{-1}$, $\tilde{\varphi}^{-1}= 3\varphi_1^{-1} + \varphi_3^{-1} > 1$ and $3\varphi_2^{-1} + \varphi_3^{-1} >1$, and $C>0$ satisfying: for all $t\in \mathbb{Z}^+$(the set of positive integers),
$$
\alpha(t) \leq exp(-Ct^{\varphi_3})
$$
where $\alpha$ is $\alpha$-mixing coefficient defined based on $\sigma$-algebras generated by $\{e_t, \tilde{e}_t,\tilde{F}_t \}$.
\end{assumption}

 Assumption \ref{assum: error terms factor trans}(b) and Assumption \ref{assu: mixing condition} specify the exponential-type tails and mixing dependence for $e_t$, $\tilde{e}_t$, and $\tilde{F}_t$, respectively. These conditions allow us to apply the Bernstein type inequality for the weakly dependent data and thus help to analyze the terms such as $({1}/{T})\sum_{t=1}^T{e}_{it}{e}_{jt}$ and $\frac{1}{T}\sum_{t=1}^T\tilde{F}_t\tilde{e}_{it}$. Similar conditions are also imposed in \cite{fan2011high} and \cite{fan2013large}. Assumption \ref{assum: error terms factor trans}(c) requires that the transformed  factor $\tilde{F}_t$ is regularized with covariance matrix being an identity matrix, which is often assumed for simplicity in literature, e.g., \cite{fan2013large, li2022integrative}. 
Assumption \ref{assum: error terms factor trans}(d) ensures that the largest $m$ eigenvalues of  transformed sample covariance matrix $T^{-1}\tilde{R}\tilde{R}^\top$ diverges to infinity at order $p$ where $\tilde{R} = (\tilde{r}_1,\ldots,\tilde{r}_T)$, and guarantees the consistency of factor number estimation, also refer to  Assumption 3.2 of \cite{fan2013large}. To sum up, Assumptions \ref{assum: error terms factor trans} and \ref{assu: mixing condition} are regularization condition on the transformed common factors and error terms. As we consider heavy-tailed data, such conditions specify that the robustified data $\tilde{r}_t$ has certain good behavior such that the traditional PCA is applicable.

Let  $\varepsilon_i = (e_{i1},\ldots,e_{iT})^\top$ and $\tilde{\varepsilon}_i = (\tilde{e}_{i1},\ldots,\tilde{e}_{iT})^\top, i = 1,\ldots, p$,
then we  additionally impose the following regularity conditions.

\begin{assumption}
\label{assum: regularization}
    \begin{itemize}
        \item[(a)] $\max_i \sum_{s=1}^p|E(\tilde{\varepsilon}_s^\top\tilde{\varepsilon}_i)|/T = O(1) $. 
        \item[(b)]  For all $s,i\leq p$, $$E\left(\tilde{\varepsilon}_s^\top \tilde{\varepsilon}_i - E(\tilde{\varepsilon}_s^\top \tilde{\varepsilon}_i ) \right)^4  = O(T^2).$$ 
    \item[(c)]  For $j \leq p$ and $t\leq T$, we have $E\left\| \sum_{j=1}^p b_j {e}_{jt}\right\|^4 = O(p^2)$.
    \item[(d)] $T = o(p^2), \delta = o(p), \Delta^2 = o(p), \delta \sqrt{p} /T =o(1), \Delta^2 \sqrt{p}/ T =o(1)$, $\kappa_q \zeta_T^{1-q} = o(1)$ where $\zeta_T = \frac{\delta^{1/2}+T^{1/4}+\Delta}{\sqrt{p}} + \frac{\delta^{1/2}p^{1/4}}{\sqrt{T}}$.
    \end{itemize}
\end{assumption}

Assumption \ref{assum: regularization}(a)-(c) are analogous to condition 3.4 in \cite{fan2013large}. Since $\tilde{e}_{st}$ is strictly stationary over $t$ for all $s$,  Assumption \ref{assum: regularization}(a) is equivalent to require that $\max_i \sum_{s=1}^p |\tilde{\sigma}_{e,si}| = O(1)$ where $\tilde{\sigma}_{e,si} = E(\tilde{e}_{st}\tilde{e}_{it})$, which is a type of sparsity condition. Similarly, Assumption \ref{assum: regularization}(b) can be rewritten as 
    $
   E\left[\frac{1}{T}\sum_{t=1}^T(\tilde{\varepsilon}_{st} \tilde{\varepsilon}_{it} ) - E(\tilde{\varepsilon}_{st} \tilde{\varepsilon}_{it})  \right]^4 = O(T^2).
    $
 Assumption \ref{assum: regularization}(d) ensures the convergence of $\hat{b}_i$, $\hat{F}_t$, and $\hat{\Sigma}_e$ to corresponding population version, respectively.

The following two assumptions are necessary for the consistency of the optimal minimum variance estimator and the Sharpe ratio estimator, which are obtained using the plug-in method based on the analytical solutions given in equations \eqref{analytic solution for minimum variance} and \eqref{analytic solution for SR}.

 \begin{assumption}\label{assu: risk}
		The minimum risk $R_{\min}=\frac{1}{1_p^\top\Sigma_{r}^{-1}1_p}\asymp p^{1-\eta}$, where $\eta$ is a  constant satisfying $p^{\eta}\zeta_T^{1-q}\kappa_q=o(1)$. 
	\end{assumption}
	
	\begin{assumption}\label{assu: sr}
		Suppose the term $1_p^\top\Sigma_{r}^{-1}\mu\asymp p^{1-\phi}$, where $\phi$ is a  constant satisfying $p^{\phi}\zeta_T^{1-q}\kappa_q =o(1)$.
  \end{assumption}
	
Assumptions \ref{assu: risk} and \ref{assu: sr} guarantee the convergence of minimum risk estimator and Sharpe ratio estimator. Similar assumptions can be referred to  \cite{ding2021high} and \cite{fan2022}.
\begin{remark}
A simple example in \cite{ding2021high} demonstrates that it is reasonable to assume $R_{min}$ and ${\mathbf{1}_p^\top \Sigma_r^{-1} \mu}$ are of the order of powers of $p$.
    Suppose that $r_i = \beta_i f + e_i$, $(\beta_i)_{1\leq i\leq p}$ are i.i.d. with mean 1 and standard deviations $\sigma_\beta$, $E(f) = \sigma(f) = 1$, and $\Sigma_e =   I $. Under such a model, $\Sigma_r =\tilde{\beta}\tilde{\beta}^\top + I$ where $\tilde{\beta} = (\beta_1,\ldots,\beta_p)^\top$, and $\mu = 1_p$.  If $\sigma_\beta > 0$, by Proposition 2.2 of \cite{ding2021high}, this model corresponds to a well-diversifiable case with $R_{min}$ and $1/{1}_p^\top\Sigma_r^{-1}\mu$ converging to zero at rate $O(1/p)$, and thus $\eta = 2$, and $\phi = 0$. If $\sigma_\beta = 0$, it is easy to see that the minimum variance portfolio is equal allocation, leading to a minimum risk of $1 + 1/p$, and thus $\eta = \phi = 1$.
\end{remark}

\subsection{Asymptotic Theory}

Define $\tilde{V} = \mathrm{diag}(\tilde{\lambda}_{1},\ldots, \tilde{\lambda}_{m})$, where $\tilde{\lambda}_i, i =1,\ldots,m$ is the i-th largest eigenvalues of $p^{-1}\tilde{R}^\top \tilde{R}$ in descending order, and then define $\tilde{H} = p^{-1}\tilde{V}^{-1}\hat{B}^\top B \tilde{F}^\top \tilde{F}$. The following lemma  shows the asymptotic properties of  the estimated factor loading $\hat{b}_i$ and common factors $\hat{F}_t$.

\begin{lemma}
    \label{lemma B and F}
    Suppose Assumptions \ref{assu: factor loading}- \ref{assu: mixing condition}, \ref{assum: regularization}(a)-(c), let $\varpi_T = \frac{1}{\sqrt{p}}+\frac{p^{1/4}}{\sqrt{T}}$. Then we have 
    \begin{align*}
      &\max_{i\leq p}\left\vert\left\vert  \hat{b}_i - \tilde{H}b_i\right\vert \right\vert = O_p(\varpi_T),
      \\ &
      \max_t \| \hat{F}_t - \tilde{H}F_t \| = O_p\left( \frac{\delta^{1/2}+ T^{1/4} + \Delta}{\sqrt{p}}  + \sqrt{\frac{\delta log p}{{T}}}\right).
    \end{align*}
\end{lemma}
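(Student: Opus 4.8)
The plan is to reduce the problem to standard principal-component asymptotics for the \emph{transformed} factor model $\tilde r_t = B\tilde F_t + \tilde e_t$ and then transfer the rates back, paying attention to the facts that (i) the factor moments $\delta,\Delta$ are allowed to diverge and (ii) the estimated factors $\hat F_t = p^{-1}\hat B^\top r_t$ are built from the \emph{original}, un-robustified returns. The starting point is the exact eigen-identity. Writing $\tilde R = B\tilde F^\top + \tilde E$ with $\tilde E = (\tilde e_1,\ldots,\tilde e_T)$, and recalling that $\hat B/\sqrt p$ collects the $m$ leading eigenvectors of $T^{-1}\tilde R\tilde R^\top$ with $p^{-1}\hat B^\top\hat B = I_m$, the eigenvalue equation is equivalent to
\[
p^{-1}\tilde R\tilde R^\top\hat B = \hat B\tilde V, \qquad \text{equivalently}\qquad \hat B = p^{-1}\tilde R\tilde R^\top\hat B\tilde V^{-1}.
\]
Substituting the expansion of $\tilde R\tilde R^\top$ and using the definition $\tilde H = p^{-1}\tilde V^{-1}\hat B^\top B\,\tilde F^\top\tilde F$, the leading term reproduces $B\tilde H^\top$ exactly, leaving
\[
\hat b_i - \tilde H b_i = \tilde V^{-1}\Big[\tfrac1p\sum\nolimits_t\hat B^\top\tilde e_t\tilde F_t^\top b_i + \tfrac1p\sum\nolimits_t\hat B^\top B\,\tilde F_t\tilde e_{it} + \tfrac1p\sum\nolimits_t\hat B^\top\tilde e_t\tilde e_{it}\Big].
\]

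First I would establish the eigenvalue order $\tilde V\asymp T I_m$ (with a bounded-below eigengap), which follows because $p^{-1}\tilde R^\top\tilde R$ is close to $\tilde F\tilde F^\top$ under $p^{-1}B^\top B=I_m$ and $T^{-1}\tilde F^\top\tilde F\to I_m$ (Assumption~\ref{assum: error terms factor trans}(c)), while Assumption~\ref{assum: error terms factor trans}(d) keeps the error part from contaminating the top-$m$ eigenvalues. This gives $\tilde V^{-1} = O_p(T^{-1})$ and, as a by-product, that $\tilde H$ is close to $p^{-1}\hat B^\top B$ and $\|\hat b_i\| = O_p(1)$ uniformly. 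Then I would bound the three bracketed terms uniformly in $i$. The third (error--error) term is the delicate one: writing $\hat B^\top\sum_t\tilde e_t\tilde e_{it} = \sum_s\hat b_s(\tilde\varepsilon_s^\top\tilde\varepsilon_i)$ and splitting $\tilde\varepsilon_s^\top\tilde\varepsilon_i$ into its mean and a centered part, the mean part is controlled by the sparsity bound $\max_i\sum_s|E(\tilde\varepsilon_s^\top\tilde\varepsilon_i)|/T=O(1)$ (Assumption~\ref{assum: regularization}(a)) and contributes $O_p(1/p)$, whereas the centered part, handled by Cauchy--Schwarz together with $\|\hat B\|_F^2 = pm$ and the fourth-moment bound $E(\tilde\varepsilon_s^\top\tilde\varepsilon_i - E(\tilde\varepsilon_s^\top\tilde\varepsilon_i))^4 = O(T^2)$ (Assumption~\ref{assum: regularization}(b)), contributes $O_p(p^{1/4}/\sqrt T)$ after taking the maximum over the $p$ coordinates --- this maximal step is precisely where the $p^{1/4}$ factor in $\varpi_T$ originates. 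The two factor--error cross terms are treated with the Bernstein inequality for $\alpha$-mixing sequences (Assumptions~\ref{assum: error terms factor trans}(b) and~\ref{assu: mixing condition}), and together with $\tilde V^{-1}=O_p(T^{-1})$ they contribute the remaining $O_p(1/\sqrt p)$; collecting the pieces yields $\max_i\|\hat b_i - \tilde H b_i\| = O_p(\varpi_T)$.

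For the factors I would start from $\hat F_t = p^{-1}\hat B^\top r_t = p^{-1}\hat B^\top(BF_t + e_t)$ and decompose
\[
\hat F_t - \tilde H F_t = \big(p^{-1}\hat B^\top B - \tilde H\big)F_t + p^{-1}\hat B^\top e_t.
\]
The first piece is bounded by $\|p^{-1}\hat B^\top B - \tilde H\|\cdot\max_t\|F_t\|$; here $\max_t\|F_t\| = O_p(\Delta)$ by the definition of $\Delta$ and the exponential tails, which produces the $\Delta/\sqrt p$ contribution. For the second piece I would use $\hat B \approx B\tilde H^\top$ to write $p^{-1}\hat B^\top e_t \approx \tilde H\,p^{-1}\sum_i b_i e_{it}$ and control $\max_t\|p^{-1}\sum_i b_i e_{it}\|$; the per-$t$ size is $O_p(p^{-1/2})$ from $E\|\sum_i b_i e_{it}\|^4 = O(p^2)$ (Assumption~\ref{assum: regularization}(c)), and the maximum over $T$ time points inflates this by $T^{1/4}$, giving the $T^{1/4}/\sqrt p$ term; the residual from the approximation $\hat B\approx B\tilde H^\top$, which brings in the factor second moment $\delta$, accounts for the $\delta^{1/2}/\sqrt p$ and $\sqrt{\delta\log p/T}$ contributions. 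Summing these matches the stated factor rate.

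The main obstacle I anticipate is the combination of diverging factor moments with uniform (maximal) control. Because $\delta$ and $\Delta$ may grow, every bound involving $F_t$ or $\tilde F_t$ must be tracked explicitly rather than absorbed into constants, and because the targets are $\max_i$ and $\max_t$ rather than averages, the fourth-moment bound (Assumption~\ref{assum: regularization}(b)) and the exponential-tail/mixing Bernstein inequality must be applied with union bounds over $p$ or $T$ terms --- this is what generates the $p^{1/4}$ and $T^{1/4}$ factors and is where the sharp rate is won or lost. A secondary subtlety is that $\hat F_t$ is formed from the \emph{original} returns $r_t$, so its analysis cannot rely solely on the transformed-model tail conditions; the original-moment quantities $\delta,\Delta$ and Assumption~\ref{assum: regularization}(c) enter exactly at this junction.
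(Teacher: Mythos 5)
Your overall architecture is the same as the paper's: reduce to the transformed model $\tilde r_t = B\tilde F_t + \tilde e_t$, observe that $\hat B$ is the ordinary PCA loading estimator for that model, invoke the Bai-type identity for $\hat b_i - \tilde H b_i$ with its four-term decomposition (mean error--error via sparsity, centered error--error via fourth moments, two factor--error cross terms via Bernstein for mixing sequences), establish $\tilde V \asymp T$, and then treat $\hat F_t = p^{-1}\hat B^\top r_t$ by a plug-in decomposition. Your factor split $(p^{-1}\hat B^\top B - \tilde H)F_t + p^{-1}\hat B^\top e_t$ is a mild rearrangement of the paper's three-term split $\frac1p\sum_j \tilde H b_j e_{jt} + \frac1p\sum_j r_{jt}(\hat b_j - \tilde H b_j) + \tilde H\bigl(\frac1p\sum_j b_jb_j^\top - I_m\bigr)F_t$, and it can be completed to the same rate.

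However, two pieces of your rate bookkeeping are false as stated, even though the final rates survive once they are corrected. First, in the loading bound you attribute the $1/\sqrt p$ in $\varpi_T$ to the factor--error cross terms ``via Bernstein,'' and assign $O_p(1/p)$ to the mean error--error term. Bernstein-type inequalities for mixing time averages, with a union bound over $i\le p$ and $j\le m$, give $\max_i\bigl\|\frac1T\sum_t \tilde F_t\tilde e_{it}\bigr\| = O_p\bigl(\sqrt{\log p/T}\bigr)$; they involve $T$, not $p$, and cannot produce a $p^{-1/2}$ rate. The $1/\sqrt p$ actually comes from the mean error--error term: by Cauchy--Schwarz and Assumption \ref{assum: regularization}(a), $\max_i\bigl\|\frac1{p}\sum_s \hat b_s E(\tilde\varepsilon_s^\top\tilde\varepsilon_i)/T\bigr\| = O_p(p^{-1/2})$, not the $O_p(1/p)$ you claim --- the $1/p$ rate would require $\max_s\|\hat b_s\| = O_p(1)$, which is not available (Assumption \ref{assu: factor loading}(b) controls only second moments of the loadings, so only averages like $\frac1p\sum_s\|\hat b_s\|^2$ are controlled). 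Second, in the factor bound you claim that the residual from $\hat B\approx B\tilde H^\top$ ``brings in $\delta$.'' In your own decomposition that residual is $\frac1p\sum_j(\hat b_j - \tilde H b_j)e_{jt}$, which contains no factors at all, so $\delta$ cannot enter there; it must instead enter through your first piece, since $\|p^{-1}\hat B^\top B - \tilde H\| \le \|\tilde H\|\,\|p^{-1}B^\top B - I_m\| + \bigl\|\frac1p\sum_j(\hat b_j - \tilde Hb_j)b_j^\top\bigr\| = O_p\bigl(1/\sqrt p + \sqrt{\log p/T}\bigr)$, which multiplied by $\max_t\|F_t\| = O_p(\Delta)$ and combined with $\Delta = O(\delta^{1/2})$ (Jensen) yields the $\sqrt{\delta\log p/T}$ contribution. (In the paper $\delta$ enters differently, through $\max_t \frac1p\sum_j r_{jt}^2 = O_p(\delta)$ in the term $\frac1p\sum_j r_{jt}(\hat b_j - \tilde Hb_j)$.) These are fixable accounting errors rather than a wrong approach, but as written the intermediate claims do not hold and a verification would stall at exactly those two steps.
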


The following Theorems \ref{theorem sigma eps} and \ref{Theorem sigma inverse convergence} demonstrate the asymptotic results of estimated covariance matrices.
\begin{theorem}
    \label{theorem sigma eps}
    Suppose  Assumptions \ref{assu: factor loading} - \ref{assum: regularization} hold true. Let  $\tau_{ij} = C \varsigma_T\sqrt{\hat{\theta}_{ij}}$ where $\varsigma_T = \sqrt{\frac{logp}{T}}+\sqrt{\frac{(\delta^{1/2}+T^{1/4}+\Delta)^2}{p}+\frac{\delta\sqrt{p}}{T}}$.
    Then there is a constant $C > 0$ 
     such that
$$
\left\vert\left\vert \hat{\Sigma}_e - \Sigma_e \right\vert\right\vert = O_p(\zeta_{T}^{1-q}\kappa_q) = o_p(1),
$$
where $
    \zeta_T = \frac{\delta^{1/2}+ T^{1/4} + \Delta}{\sqrt{p}} + \frac{\delta^{1/2}p^{1/4}}{\sqrt{T}}$. 
The eigenvalues of $\hat{\Sigma}_e$ are all bounded away from 0 with probability approaching 1, and 
$$
\left\vert\left\vert \hat{\Sigma}_e^{-1} - \Sigma_e^{-1} \right\vert\right\vert = O_p(\zeta_{T}^{1-q}\kappa_q) = o_p(1).
$$
 \end{theorem}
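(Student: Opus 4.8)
The plan is to follow the two–stage route that is standard for adaptive–thresholding covariance estimation (\citealp{cai2011adaptive,fan2013large}), but carried out in the divergent–moment regime encoded by $\delta$ and $\Delta$. The whole statement reduces to a single uniform entrywise bound on the sample residual covariance,
\[
\max_{i,j\le p}\bigl|s_{\hat e,ij}-\sigma_{e,ij}\bigr|=O_p(\varsigma_T),
\]
together with the fact that the data–driven threshold $\tau_{ij}=C\varsigma_T\sqrt{\hat\theta_{ij}}$ is, uniformly in $(i,j)$, of exact order $\varsigma_T$ (for which I would show $\hat\theta_{ij}\to\theta_{ij}=\var(e_{it}e_{jt})$ uniformly, with $\theta_{ij}$ bounded away from $0$ and $\infty$ by Assumption~\ref{assum: error terms}). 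Granting these, the spectral–norm rate follows from the familiar sparsity trade-off: on the event that the threshold dominates the entrywise error, every truly zero entry is set to zero, and the surviving bias is controlled by the $\kappa_q$ budget of the sparsity condition \eqref{sparse kp}, yielding $\|\hat\Sigma_e-\Sigma_e\|\le\|\hat\Sigma_e-\Sigma_e\|_1=O_p(\kappa_q\varsigma_T^{1-q})$. Finally I would observe $\varsigma_T\asymp\zeta_T$, since the extra $\sqrt{\log p/T}$ appearing in $\varsigma_T$ is dominated by the term $\delta^{1/2}p^{1/4}/\sqrt T$ in $\zeta_T$, which converts the bound into the stated $O_p(\kappa_q\zeta_T^{1-q})$.

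The core work is the entrywise bound. Writing $d_{it}=\hat b_i^\top\hat F_t-b_i^\top F_t$ so that $\hat e_{it}=e_{it}-d_{it}$, I would expand
\[
s_{\hat e,ij}-\sigma_{e,ij}=\Bigl(\tfrac1T\textstyle\sum_t e_{it}e_{jt}-\sigma_{e,ij}\Bigr)-\tfrac1T\textstyle\sum_t(e_{it}d_{jt}+d_{it}e_{jt})+\tfrac1T\textstyle\sum_t d_{it}d_{jt}.
\]
The oracle term is handled by a Bernstein inequality for $\alpha$–mixing, exponential–tail sequences (Assumptions~\ref{assum: error terms factor trans}(b) and \ref{assu: mixing condition}), giving $O_p(\sqrt{\log p/T})$ uniformly over the $p^2$ pairs. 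For the remaining terms I would insert the rotation $\tilde H$ and split $d_{it}=(\hat b_i-\tilde Hb_i)^\top\hat F_t+b_i^\top(\tilde H^\top\hat F_t-F_t)$, then bound the cross and quadratic sums by Cauchy–Schwarz using the uniform rates of Lemma~\ref{lemma B and F} for $\max_i\|\hat b_i-\tilde Hb_i\|$ and $\max_t\|\hat F_t-\tilde HF_t\|$, the loading moment bound of Assumption~\ref{assu: factor loading}(b), and the fourth–moment controls of Assumption~\ref{assum: regularization}. Tracking the $\delta,\Delta$ dependence through these products is what produces the factor–estimation part $\sqrt{(\delta^{1/2}+T^{1/4}+\Delta)^2/p+\delta\sqrt p/T}$ of $\varsigma_T$.

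With the spectral–norm consistency in hand and $\lambda_{\min}(\Sigma_e)>c_1$ (Assumption~\ref{assum: error terms}(b)), Weyl's inequality gives $\lambda_{\min}(\hat\Sigma_e)\ge c_1-o_p(1)$, so the eigenvalues of $\hat\Sigma_e$ are bounded away from $0$ with probability tending to one and $\|\hat\Sigma_e^{-1}\|=O_p(1)$. The inverse statement then follows from the identity $\hat\Sigma_e^{-1}-\Sigma_e^{-1}=\hat\Sigma_e^{-1}(\Sigma_e-\hat\Sigma_e)\Sigma_e^{-1}$ and submultiplicativity, giving $\|\hat\Sigma_e^{-1}-\Sigma_e^{-1}\|\le\|\hat\Sigma_e^{-1}\|\,\|\Sigma_e^{-1}\|\,\|\hat\Sigma_e-\Sigma_e\|=O_p(\kappa_q\zeta_T^{1-q})$.

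The main obstacle is the uniform (over all $p^2$ entries) control of the cross and quadratic remainder terms in the divergent–moment setting: unlike in \cite{fan2013large}, the factors are allowed to have exploding second moments, so the naive bounds on $\tfrac1T\sum_t d_{it}d_{jt}$ and $\tfrac1T\sum_t e_{it}d_{jt}$ carry $\delta$ and $\Delta$ factors that must be shown to be absorbed by the rate rather than to overwhelm it. Making these bounds uniform in $(i,j)$ — rather than merely pointwise — while keeping the correct powers of $\delta$, $\Delta$, $p$, and $T$ is the delicate part, and it is exactly where Assumptions~\ref{assum: regularization}(b)–(d) are needed.
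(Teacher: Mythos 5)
Your proposal is correct and follows essentially the same route as the paper: reduce everything to the uniform entrywise bound $\max_{i,j}|s_{\hat e,ij}-\sigma_{e,ij}|=O_p(\varsigma_T)$ together with uniform two-sided control of $\hat\theta_{ij}$, run the standard adaptive-thresholding/sparsity argument to get $O_p(\kappa_q\varsigma_T^{1-q})$, observe $\varsigma_T\asymp\zeta_T$, and invert via $\lambda_{\min}(\Sigma_e)>c_1$ plus consistency. The only cosmetic difference is that the paper obtains the entrywise bound by first proving $\max_i T^{-1}\sum_t|e_{it}-\hat e_{it}|^2=O_p(a_T^2)$ and $\max_{i,t}|e_{it}-\hat e_{it}|=o_p(1)$ (its Lemma \ref{lemma: e -hat e}) and then invoking Lemmas A.3--A.4 of \cite{fan2011high}, whereas you write out the same oracle/cross/quadratic decomposition and Cauchy--Schwarz bounds explicitly.
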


	\begin{theorem}
    \label{Theorem sigma inverse convergence}
    Under the conditions of Theorem \ref{theorem sigma eps}, we have 
    $$
    \left\| \hat{\Sigma}_r^{-1} -\Sigma_r^{-1} \right\| = O_p\left(\zeta_T^{1-q}\kappa_q \right) = o_p(1),
    $$
where $\Sigma_r = BB^\top + \Sigma_e$, and $\zeta_T$ is defined in Theorem \ref{theorem sigma eps}.
\end{theorem}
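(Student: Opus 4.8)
The plan is to exploit the low-rank-plus-sparse structure of $\Sigma_r = BB^\top + \Sigma_e$ through the Sherman--Morrison--Woodbury (SMW) identity, reducing the problem to the already-established convergence of $\hat{\Sigma}_e^{-1}$ in Theorem \ref{theorem sigma eps} and of the loadings in Lemma \ref{lemma B and F}. First I would write both inverses in SMW form,
\begin{align*}
\Sigma_r^{-1} &= \Sigma_e^{-1} - \Sigma_e^{-1}BG^{-1}B^\top\Sigma_e^{-1}, \qquad G := I_m + B^\top\Sigma_e^{-1}B, \\
\hat{\Sigma}_r^{-1} &= \hat{\Sigma}_e^{-1} - \hat{\Sigma}_e^{-1}\hat{B}\hat{G}^{-1}\hat{B}^\top\hat{\Sigma}_e^{-1}, \qquad \hat{G} := I_m + \hat{B}^\top\hat{\Sigma}_e^{-1}\hat{B},
\end{align*}
and then control the difference block by block.

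The decisive structural fact, which I would establish next, is the order of each building block. Assumption \ref{assu: factor loading}(a) gives $\|B\| = O_p(\sqrt{p})$, and the imposed identification $p^{-1}\hat{B}^\top\hat{B} = I_m$ gives $\|\hat{B}\| = \sqrt{p}$; Assumption \ref{assum: error terms}(b) together with Theorem \ref{theorem sigma eps} give $\|\Sigma_e^{-1}\| = O(1)$ and $\|\hat{\Sigma}_e^{-1}\| = O_p(1)$. Crucially, since $B^\top\Sigma_e^{-1}B \succeq \lambda_{\min}(\Sigma_e^{-1})\,B^\top B$ has smallest eigenvalue of order $p$, we obtain $\|G^{-1}\| = O_p(1/p)$ and likewise $\|\hat{G}^{-1}\| = O_p(1/p)$. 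This $1/p$ decay is precisely what tames the two diverging $\sqrt{p}$ factors contributed by $B$ on either side, so that the low-rank correction $\Sigma_e^{-1}BG^{-1}B^\top\Sigma_e^{-1}$ has bounded spectral norm and its estimated counterpart is comparable.

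With the blocks in hand, I would split
$$
\hat{\Sigma}_r^{-1} - \Sigma_r^{-1} = \big(\hat{\Sigma}_e^{-1} - \Sigma_e^{-1}\big) - \big(\hat{\Sigma}_e^{-1}\hat{B}\hat{G}^{-1}\hat{B}^\top\hat{\Sigma}_e^{-1} - \Sigma_e^{-1}BG^{-1}B^\top\Sigma_e^{-1}\big).
$$
The first bracket is $O_p(\zeta_T^{1-q}\kappa_q)$ directly from Theorem \ref{theorem sigma eps}. For the second bracket I would insert and subtract intermediate products to form a telescoping sum in which each summand carries exactly one ``difference'' factor --- one of $\hat{\Sigma}_e^{-1}-\Sigma_e^{-1}$, $\hat{B}-B\tilde{H}^\top$, or $\hat{G}^{-1}-G^{-1}$ --- multiplied by remaining factors whose spectral norms are already controlled above. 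The error-covariance difference again contributes $O_p(\zeta_T^{1-q}\kappa_q)$; the loading difference is fed in through Lemma \ref{lemma B and F}; and $\hat{G}^{-1}-G^{-1} = -\hat{G}^{-1}(\hat{G}-G)G^{-1}$ is reduced to the previous two differences. Along the way I would verify that $\tilde{H}$ is asymptotically orthogonal (from Lemma \ref{lemma B and F} together with $p^{-1}\hat{B}^\top\hat{B} = I_m = p^{-1}B^\top B + o_p(1)$), so that $B\tilde{H}^\top\tilde{H}B^\top \approx BB^\top$ and the rotation does not contaminate the low-rank part.

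The main obstacle is the bookkeeping of the interaction between the diverging $\sqrt{p}$-sized loading factors and the $1/p$-sized middle inverse. The loading-estimation error $\hat{B} - B\tilde{H}^\top$ is only small row-by-row (Lemma \ref{lemma B and F}), so in Frobenius or spectral norm it is not negligible; the key is to pair each occurrence of this difference with a $G^{-1}$ or $\hat{G}^{-1}$ factor so that the surplus $\sqrt{p}$ is absorbed by the taming $1/p$. Concretely, I expect to need sharp bounds on quantities such as $\|\Sigma_e^{-1}(\hat{B}-B\tilde{H}^\top)\|$ and $\|(\hat{B}-B\tilde{H}^\top)^\top\Sigma_e^{-1}B\|$ in the appropriate norms, and then to check --- using Assumption \ref{assum: regularization}(d) --- that the resulting loading contribution is dominated by $\zeta_T^{1-q}\kappa_q$. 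Establishing this domination, rather than the mechanical telescoping itself, is the technical crux.
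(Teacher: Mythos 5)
Your proposal is correct and follows essentially the same route as the paper: the paper's proof also applies Sherman--Morrison--Woodbury to both inverses, establishes $\|\Omega\|=O_p(p^{-1})$ for the middle matrix $\Omega=[I_m+\hat{B}^\top\hat{\Sigma}_e^{-1}\hat{B}]^{-1}$ (its Lemma \ref{lemma Omega}), and telescopes the difference into six terms $L_1,\ldots,L_6$, each carrying exactly one difference factor ($\hat{\Sigma}_e^{-1}-\Sigma_e^{-1}$, $\hat{B}-B\tilde{H}^\top$, or the difference of the middle inverses) paired with the $O_p(p^{-1})$ factor to absorb the $\sqrt{p}$-sized loadings. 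The only organizational difference is that the paper handles the rotation by introducing an explicit intermediate matrix $\tilde{\Sigma}_r = B\tilde{H}^\top\tilde{H}B^\top+\Sigma_e$ and bounding $\|\tilde{\Sigma}_r^{-1}-\Sigma_r^{-1}\|$ separately via the near-orthogonality of $\tilde{H}$, which is precisely the step you flag as "verifying $\tilde{H}$ is asymptotically orthogonal."
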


The following implications stem from Theorems \ref{theorem sigma eps} and \ref{Theorem sigma inverse convergence}: 
(1) We utilize the identification condition $p^{-1}B^\top B = I_m$, which is different from the condition $T^{-1}F^\top F = I_m$ in \cite{fan2013large}. This difference implies different convergence rates for the estimated factor loading and, consequently, the covariance matrix. Specifically, the convergence rate {derived from \cite{fan2013large}} of $\hat{\Sigma}_e$ and $\hat{\Sigma}_r^{-1}$ to the corresponding population version is $\kappa_q (\sqrt{\frac{logp}{T}}+\frac{1}{\sqrt{p}})^{1-q}$. For comparison, the convergence rate derived by us is $\kappa_q \left(  \frac{\delta^{1/2}+ T^{1/4} + \Delta}{\sqrt{p}} + \frac{\delta^{1/2}p^{1/4}}{\sqrt{T}}
\right)^{1-q}$. It is clear that our matrix estimator converge at a slower rate than the corresponding matrix estimator of \cite{fan2013large} even if the data satisfied some assumptions that $\delta$ and $\Delta$ are all bounded. (2) The orders of magnitude  of $\delta$ and $\Delta$ affect the convergence rate of estimated quantities. To achieve the convergence shown in Theorem \ref{theorem sigma eps} and \ref{Theorem sigma inverse convergence},  the maximum first order moment of common factors is allowed to diverge to infinity at a rate of at most $\sqrt{p}$, and the second moment of factors is allowed to diverge to infinity at a rate of at most $min\{T/p^{1/2},p\}$. (3) The asymptotic results are unchanged  if we replace $\varsigma_T$ by $\tilde{\varsigma}_T$ in threshold $\tau_{ij}$, where $\tilde{\varsigma}_T/\varsigma_T = o(1)$, e.g., $\tilde{\varsigma}_T = 1/\sqrt{p} + \sqrt{logp /T}$ used in \cite{fan2013large} and our empirical applications. (4) The convergence rate of estimated inverse covariance matrix of return is the same as that of estimated error covariance matrix.

Next, we turn to  consider the rates of convergence for the minimum risk estimator and Sharpe ratio estimator, respectively. Recall that, the minimum risk estimator and Sharpe ratio estimator  are respectively given by 
	\begin{eqnarray}\label{mr_y}
\hat{R}_{min}=\frac{1}{1_p^\top \hat{\Sigma}_r^{-1}1_p} 	\end{eqnarray}
and
\begin{eqnarray}\label{sr_y}
\hat{SR}=\frac{1_p^{\top}\hat{\Sigma}_{r}^{-1}\hat{\mu}}{\sqrt{1_p^{\top}\hat{\Sigma}^{-1}_{r} 1_p}},
\end{eqnarray} 
 where $\hat{\mu}=\hat{B}\frac{1}{T}\sum^{T}_{t=1}\hat{F}_t$.
The asymptotic behavior of the minimum risk estimator with respect to the oracle risk defined in (\ref{analytic solution for minimum variance}) is demonstrated as follows: 
	\begin{theorem}
    \label{theorem risk convergence}
    Under conditions of Theorem \ref{Theorem sigma inverse convergence}, suppose Assumption \ref{assu: risk} holds. Then we have the convergence rate for minimum risk estimator:
    $$
\left|\frac{\hat{R}_{min}}{R_{min} } -1 \right| = O_p\left(p^\eta \zeta_T^{1-q}\kappa_q \right) = o_p(1).    
    $$
\end{theorem}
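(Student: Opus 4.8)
The plan is to reduce the statement to a single ratio of scalar quadratic forms and then bound its numerator and denominator using the inverse-covariance convergence already established in Theorem \ref{Theorem sigma inverse convergence} together with the order condition in Assumption \ref{assu: risk}. Writing $a = 1_p^\top \hat{\Sigma}_r^{-1} 1_p$ and $b = 1_p^\top \Sigma_r^{-1} 1_p$, and recalling that $\hat{R}_{min} = 1/a$ and $R_{min} = 1/b$, a direct algebraic identity gives
\[ \frac{\hat{R}_{min}}{R_{min}} - 1 = \frac{b - a}{a} = \frac{1_p^\top \left( \Sigma_r^{-1} - \hat{\Sigma}_r^{-1} \right) 1_p}{1_p^\top \hat{\Sigma}_r^{-1} 1_p}. \]
Thus the problem splits cleanly into (i) controlling the numerator quadratic form and (ii) bounding the denominator away from zero at the correct order.

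For the numerator, I would apply the Cauchy--Schwarz inequality and submultiplicativity of the spectral norm, $\left| 1_p^\top ( \Sigma_r^{-1} - \hat{\Sigma}_r^{-1} ) 1_p \right| \le \| 1_p \|^2 \, \| \hat{\Sigma}_r^{-1} - \Sigma_r^{-1} \| = p \, \| \hat{\Sigma}_r^{-1} - \Sigma_r^{-1} \|$, and then invoke Theorem \ref{Theorem sigma inverse convergence} to conclude the numerator is $O_p( p \, \zeta_T^{1-q} \kappa_q )$. For the denominator, Assumption \ref{assu: risk} gives $b = 1/R_{min} \asymp p^{\eta-1}$, so it remains to show $a$ inherits this order. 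Using the same quadratic-form bound, $a \ge b - | 1_p^\top ( \hat{\Sigma}_r^{-1} - \Sigma_r^{-1} ) 1_p | = p^{\eta-1}\big( c + o_p(1) \big)$, since the perturbation is of order $p \, \zeta_T^{1-q}\kappa_q = p^{\eta-1}\cdot p^{2-\eta}\zeta_T^{1-q}\kappa_q$ and $p^{2-\eta}\zeta_T^{1-q}\kappa_q = o(1)$ follows from the condition $p^{\eta}\zeta_T^{1-q}\kappa_q = o(1)$ in Assumption \ref{assu: risk} (the relevant regime has $\eta \ge 1$). Hence $a \asymp p^{\eta-1}$ with probability approaching one.

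Combining the two bounds yields $\left| \hat{R}_{min}/R_{min} - 1 \right| = O_p\big( p \, \zeta_T^{1-q}\kappa_q / p^{\eta-1} \big) = O_p( p^{2-\eta} \zeta_T^{1-q}\kappa_q )$, which is dominated by $O_p( p^{\eta} \zeta_T^{1-q}\kappa_q )$ and is therefore $o_p(1)$ by Assumption \ref{assu: risk}. The main obstacle I anticipate is step (ii): the relative error is a ratio in which the denominator $1_p^\top \hat{\Sigma}_r^{-1}1_p$ could in principle collapse toward zero once $\eta$ is large, so the argument must certify that the estimation perturbation is negligible compared to the population quadratic form $1_p^\top \Sigma_r^{-1}1_p \asymp p^{\eta-1}$. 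This is exactly the role played by the smallness condition $p^{\eta}\zeta_T^{1-q}\kappa_q = o(1)$; everything else is the routine propagation of the spectral-norm rate from Theorem \ref{Theorem sigma inverse convergence} through the scalar functional.
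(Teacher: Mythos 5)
Your proof is correct, and its skeleton is the same as the paper's: the identity $\hat{R}_{\min}/R_{\min}-1 = 1_p^\top\bigl(\Sigma_r^{-1}-\hat{\Sigma}_r^{-1}\bigr)1_p \,/\, 1_p^\top\hat{\Sigma}_r^{-1}1_p$, the numerator bound $p\,\bigl\|\hat{\Sigma}_r^{-1}-\Sigma_r^{-1}\bigr\| = O_p\bigl(p\,\zeta_T^{1-q}\kappa_q\bigr)$ from Theorem \ref{Theorem sigma inverse convergence}, and an order statement for the denominator. Where you differ is exactly the step the paper treats most casually, and your version is the more careful one. The paper asserts that the denominator has ``exact order $p^{1-\eta}$'' (citing Assumption \ref{assu: risk} and a proposition of \cite{ding2021high}), from which its rate $p\cdot p^{\eta-1}\zeta_T^{1-q}\kappa_q = p^{\eta}\zeta_T^{1-q}\kappa_q$ drops out immediately; but Assumption \ref{assu: risk} says $R_{\min}=1/(1_p^\top\Sigma_r^{-1}1_p)\asymp p^{1-\eta}$, so the population quadratic form is $1_p^\top\Sigma_r^{-1}1_p \asymp p^{\eta-1}$, not $p^{1-\eta}$ --- the paper's exponent is a slip, harmless only when $\eta=1$. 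You instead use the correct order $p^{\eta-1}$, supply the perturbation argument showing $1_p^\top\hat{\Sigma}_r^{-1}1_p$ inherits it (a step the paper delegates entirely to the external citation), and arrive at the sharper intermediate rate $p^{2-\eta}\zeta_T^{1-q}\kappa_q$, concluding via the domination $p^{2-\eta}\le p^{\eta}$, which needs $\eta\ge 1$. That restriction is not a gap: under Assumptions \ref{assu: factor loading} and \ref{assum: error terms}, $\lambda_{\max}(\Sigma_r)\le \lambda_{\max}(BB^\top)+\lambda_{\max}(\Sigma_e)=O(p)$, hence $1_p^\top\Sigma_r^{-1}1_p \ge p/\lambda_{\max}(\Sigma_r)\ge c>0$, i.e.\ $\eta\ge 1$ automatically (the paper's own examples have $\eta\in\{1,2\}$). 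So your argument simultaneously repairs the exponent slip in the paper's proof and shows that the theorem's stated rate $p^{\eta}\zeta_T^{1-q}\kappa_q$ is a valid, though generally non-tight, upper bound, the tight one being $p^{2-\eta}\zeta_T^{1-q}\kappa_q$.
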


With a similar ratio criterion, we further evaluate the Sharpe ratio estimation. 
	\begin{theorem}
    \label{theorem SR convergence}
    Under conditions of Theorem \ref{Theorem sigma inverse convergence}, suppose Assumptions \ref{assu: risk} and \ref{assu: sr} hold. Then we have the convergence rate for Sharpe ratio estimator:
    $$
\left|\frac{\hat{SR}}{SR} -1 \right| = O_p\left(p^{(\phi+\eta)}\zeta_T^{1-q}\kappa_q  \right) = o_p(1).    
    $$
\end{theorem}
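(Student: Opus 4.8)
The plan is to reduce everything to quantities already controlled by Theorems \ref{Theorem sigma inverse convergence} and \ref{theorem risk convergence}, isolating the only genuinely new object, the estimated mean vector $\hat\mu$. Write $A = 1_p^\top\Sigma_r^{-1}\mu$, $\hat A = 1_p^\top\hat\Sigma_r^{-1}\hat\mu$, $D = 1_p^\top\Sigma_r^{-1}1_p$, and $\hat D = 1_p^\top\hat\Sigma_r^{-1}1_p$, so that $SR = A/\sqrt{D}$ and $\hat{SR} = \hat A/\sqrt{\hat D}$, and factor the ratio as
\[
\frac{\hat{SR}}{SR} = \frac{\hat A}{A}\cdot\sqrt{\frac{D}{\hat D}}.
\]
Because $D/\hat D = \hat R_{min}/R_{min}$, Theorem \ref{theorem risk convergence} together with the elementary inequality $|\sqrt{1+x}-1|\le|x|$ for $x>-1$ immediately gives $\sqrt{D/\hat D} = 1 + O_p(p^{\eta}\zeta_T^{1-q}\kappa_q)$. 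Hence it suffices to show $\hat A/A - 1 = O_p(p^{\phi+\eta}\zeta_T^{1-q}\kappa_q)$: expanding the product $(\hat A/A)\sqrt{D/\hat D}-1$ and using that both factors are $1+o_p(1)$ under Assumptions \ref{assu: risk} and \ref{assu: sr} then delivers the stated rate, the cross term being of smaller order.

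For $\hat A - A$ I would insert the population quantities and expand,
\[
\hat A - A = 1_p^\top(\hat\Sigma_r^{-1}-\Sigma_r^{-1})\mu + 1_p^\top\Sigma_r^{-1}(\hat\mu-\mu) + 1_p^\top(\hat\Sigma_r^{-1}-\Sigma_r^{-1})(\hat\mu-\mu).
\]
The first term is at most $\|1_p\|\,\|\hat\Sigma_r^{-1}-\Sigma_r^{-1}\|\,\|\mu\|$; since $\|1_p\|=\sqrt p$, $\|\mu\|\le\|B\|\,\|E F_t\|=O(\sqrt p\,\Delta)$ (using $\|B\|^2=\lambda_{\max}(B^\top B)=O(p)$ from Assumption \ref{assu: factor loading} and $\|E F_t\|\le\Delta$), and $\|\hat\Sigma_r^{-1}-\Sigma_r^{-1}\|=O_p(\zeta_T^{1-q}\kappa_q)$ by Theorem \ref{Theorem sigma inverse convergence}, it is $O_p(p\,\Delta\,\zeta_T^{1-q}\kappa_q)$. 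For the second term I would use the sharper bound $\|1_p^\top\Sigma_r^{-1}\|=(1_p^\top\Sigma_r^{-2}1_p)^{1/2}\le\|\Sigma_r^{-1}\|^{1/2}D^{1/2}=O(D^{1/2})$ with $D\asymp p^{\eta-1}$, so the term is controlled by $O_p(p^{(\eta-1)/2})\,\|\hat\mu-\mu\|$, and the third term is of smaller order than the first two. Dividing through by $A\asymp p^{1-\phi}$ turns these bounds into the relative error $\hat A/A-1$.

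The main obstacle is the mean error $\hat\mu-\mu$, which is not covered by the earlier lemma or theorems. Here $\hat\mu=\hat B\,T^{-1}\sum_{t=1}^T\hat F_t$ and $\mu=B\,E F_t$. I would first observe that the rotation $\tilde H$ of Lemma \ref{lemma B and F} cancels in the product $\hat B\hat F_t=p^{-1}\hat B\hat B^\top r_t$: combining $\hat b_i\approx\tilde H b_i$, $\hat F_t\approx\tilde H F_t$, and $\tilde H^\top\tilde H=I_m+o_p(1)$ yields $\hat B\hat F_t=B F_t+(\text{estimation error})$, so that $\hat\mu\approx B\bar F$ with $\bar F=T^{-1}\sum_t F_t$, and therefore $\hat\mu-\mu$ decomposes into a loading/factor estimation part, bounded uniformly through the rates of Lemma \ref{lemma B and F}, and a sampling fluctuation $B(\bar F-E F_t)$. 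Controlling the latter is the crux, because the factors are allowed to have diverging (or even nonexistent) second moments; I would bound $\|\bar F-E F_t\|$ by a concentration argument driven only by $E(F_{it}^2)=C\delta$ and $\max_t E\|F_t\|=C\Delta$, obtaining a rate of order $\sqrt{\delta/T}$ up to logarithmic factors and hence $\|B(\bar F-E F_t)\|=O_p(\sqrt{p\delta/T})$.

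Finally, the delicate bookkeeping is to verify that the extra factor-moment quantities $\delta^{1/2}$ and $\Delta$ entering through $\|\mu\|$ and $\|\hat\mu-\mu\|$ are absorbed into the announced order. Both appear inside $\zeta_T=(\delta^{1/2}+T^{1/4}+\Delta)/\sqrt p+\delta^{1/2}p^{1/4}/\sqrt T$, so that, for instance, $\Delta\le\sqrt p\,\zeta_T$ and $\sqrt{\delta/T}\le p^{-1/4}\zeta_T$; using these relations together with the order restrictions of Assumption \ref{assum: regularization}(d) and the normalizations $D\asymp p^{\eta-1}$, $A\asymp p^{1-\phi}$ converts the bounds of the previous two paragraphs into $O_p(p^{\phi+\eta}\zeta_T^{1-q}\kappa_q)$. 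This quantity is $o_p(1)$ by Assumptions \ref{assu: risk} and \ref{assu: sr}, completing the argument; the step requiring the most care is matching the $\Delta$- and $\delta$-dependent error of $\hat\mu$ to the clean power $p^{\phi+\eta}$, rather than the covariance-inverse contribution, which is inherited directly from Theorem \ref{Theorem sigma inverse convergence}.
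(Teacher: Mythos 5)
Your high-level architecture is sound and in one respect cleaner than the paper's: where you factor $\hat{SR}/SR=(\hat A/A)\sqrt{D/\hat D}$ (with $A=1_p^\top\Sigma_r^{-1}\mu$, $D=1_p^\top\Sigma_r^{-1}1_p$) and import the bound on $\sqrt{D/\hat D}$ directly from Theorem \ref{theorem risk convergence}, the paper uses an additive decomposition $(\hat{SR}-SR)/SR=(Z_1+Z_2)\big/\bigl(\sqrt{1_p^\top\hat\Sigma_r^{-1}1_p}\cdot 1_p^\top\Sigma_r^{-1}\mu\bigr)$ and re-derives the bound $|J_2|\le p^{\eta}\|\hat\Sigma_r^{-1}-\Sigma_r^{-1}\|$ from scratch; these are equivalent in content. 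Your expansion of $\hat A-A$ and the sharper bound $\|1_p^\top\Sigma_r^{-1}\|\le\|\Sigma_r^{-1}\|^{1/2}D^{1/2}=O(p^{(\eta-1)/2})$ are also fine. The problem lies in the two steps you yourself flag as delicate.

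The principal gap is your treatment of $\hat\mu-\mu$. You interpret $\mu$ as $B\,E(F_t)$ and must therefore control the sampling fluctuation $B(\bar F-EF_t)$, $\bar F=T^{-1}\sum_{t=1}^T F_t$, which you propose to bound at rate $\sqrt{p\delta/T}$ by ``a concentration argument driven only by $E(F_{it}^2)=C\delta$ and $\max_t E\|F_t\|=C\Delta$.'' No such bound follows from the paper's assumptions: marginal moment bounds carry no information about the temporal dependence of the raw factor series, and Assumptions \ref{assum: error terms factor trans}--\ref{assu: mixing condition} impose exponential tails and mixing only on the \emph{transformed} quantities $\tilde F_t=\omega_t^{1/2}F_t$ and $\tilde e_t$, never on $F_t$ itself --- leaving the raw factors badly behaved is precisely the point of the robustification, and the paper's own simulations use near-unit-root factors for which $\mathrm{var}(\bar F)$ is not $O(\delta/T)$. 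So this step fails as stated. The paper's proof never meets this difficulty: it bounds $\|\hat\mu-\mu\|\le\|\hat B\hat\mu_f-B\tilde H^\top\hat\mu_f\|+\|BT^{-1}\sum_{t=1}^T(\tilde H^\top\hat F_t-F_t)\|=O_p(\sqrt p\,\tilde\zeta_T)$, a pure estimation-error bound whose telescoping is valid only when the target factor mean $\mu_f$ is the in-sample average $\bar F$; in effect the paper takes $\mu=B\bar F$, and the fluctuation term you isolate never appears. To complete your route you would need either that in-sample interpretation of $\mu$ or additional dependence/tail assumptions on the raw $F_t$.

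A secondary, bookkeeping gap: your first term contributes $p^{\phi}\Delta\,\zeta_T^{1-q}\kappa_q$ to $\hat A/A-1$ via $\|\mu\|=O(\sqrt p\,\Delta)$, and the absorption $\Delta\le\sqrt p\,\zeta_T$ only turns this into $p^{\phi+1/2}\zeta_T^{2-q}\kappa_q$, which is $O\bigl(p^{\phi+\eta}\zeta_T^{1-q}\kappa_q\bigr)$ only if $\zeta_T\lesssim p^{\eta-1/2}$ --- a condition not implied by Assumptions \ref{assu: risk}--\ref{assu: sr} (they give only $p^{\eta}\zeta_T^{1-q}\kappa_q=o(1)$) and false when $\eta$ is small. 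The paper suppresses the same factor by tacitly treating $\|\hat\mu\|$ as $O_p(\sqrt p)$ in the step $|J_1|\le C\bigl(p^{\phi}\|\hat\Sigma_r^{-1}-\Sigma_r^{-1}\|+p^{\phi-1/2}\|\hat\mu-\mu\|\bigr)$; your more honest accounting exposes the term, but your proposed patch does not close it without an extra condition such as $\Delta=O(p^{\eta})$.
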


Theorems \ref{theorem risk convergence} and \ref{theorem SR convergence} demonstrate the consistency of the minimum risk and Sharpe ratio of the proposed robust portfolio, which are desirable in portfolio allocation. We note that the consistency orders are primarily affected by the estimation of the sparse error covariance matrix. The similar theoretical results for the convergence of minimum risk and Sharpe ratio based on factor model can be referred to the work of  \cite{ding2021high} and \cite{fan2022}. 


	
\section{Simulation}\label{simu}

	In this section, we conduct simulations to evaluate the finite sample performance of the proposed portfolio R-MVP. 
	
	\subsection{Model Set-up}
	To mimic the real world scenarios, we design the Monte Carlo simulations such that investors make decisions based on historical data available at current time $t$, and then hold portfolios for a period of time $T$. We evaluate the portfolios based on their out of sample performances. 
 
 We first introduce the data generating process (DGP) without shocks. This is named \textbf{DGP 1} which is the baseline setting.
	We generate the $p\times (2T)$ return data $\{r_t\}_{t=1}^{2T}$ based on the factor model defined in (\ref{yr(1)}) with $m = 2$ common factors. 	The first $T$ data $\{r_t\}_{t=1}^T$ is the training set used for model estimation and constructing portfolios, and the remaining data $\{r_t\}_{t=T+1}^{2T}$ is used for out of sample evaluation. 
   We generate two common factors from the following  AR(1) processes :
	$$
	f_{1,t} = 0.01 + 0.6f_{1,t-1} + u_{1,t}
	$$
	$$
	f_{2,t} = 0.01 + 0.95f_{2,t-1} + u_{2,t}
	$$
	where $u_{1,t}\sim N(0,1-0.6^2)$ and $u_{2,t}\sim N(0,1-0.95^2)$ which indicate $\Sigma_F = I_2$, and $f_{1,0}=f_{2,0}=0$. Both two factors are stationary processes but the second factor is more commonly referred to as the near-unit root process\footnote{Simulations based on common factors with autocorrelation coefficients close to 0 are also investigated and exhibit similar results.}.
	The similar setting for autoregressive common factors can be found in \cite{su2017time}, \cite{fan2022}. 
	For factor loading $b_i = \left( b_{i,1}, b_{i,2} \right)^\top$, $i = 1,2,\cdots,p$ and idiosyncratic error $e_{t} = (e_{1t},\ldots,e_{pt})^\top$, we draw them from normal distribution, that $b_{i,j} \sim N(\mu_{b,j},\sigma_{b,j}^2)$, $j = 1, 2$ and $e_{t} \sim N(0,\Sigma_{e})$.

To illustrate the performance of R-MVP, we further define three DGPs by  incorporating homogeneous and heterogeneous outliers, respectively, and both two shocks simultaneously:

\textbf{DGP 2, heterogeneous outliers:} We impose shocks $S$ at fixed frequency $v$ into $e_{t}$, where $S \sim N(\mu_S,\Sigma_{e})$ with $\mu_S = (5\Sigma_{e,11}^{1/2},\ldots,5\Sigma_{e,pp}^{1/2})$  and $v = 50$ where $\Sigma_{e,ii}$ is i-th diagonal element of $\Sigma_e$. These shocks are individual-specific. It can mimic the shocks to individual firms such as product recalls for a specific firm. 
	
\textbf{DGP 3, homogeneous outliers:} We impose fixed-size shocks $H_j$ at fixed frequency $v$ on $u_{j,t}, j = 1,2$, where $H_1 = 5\sqrt{1-0.6^2}, H_2 = 5\sqrt{1-0.95^2}$ and $v = 40$. These shocks are global. It can mimic the shocks to all the firms such as the global economic crisis.
	
	\textbf{DGP 4, homogeneous + heterogeneous outliers:} We impose aforementioned shocks $H_j$ and $S$ for $u_{j,t}$ and $e_{t}$ simultaneously. 
	
	DGPs 2 and 3 introduce shocks into the return data to produce outliers, which are measured (in this section and for the convenience of study) by a threshold of five times corresponding standard deviations. The outliers take place at discrete time points $[v,2v,\ldots]$ and we set different values of $v$ for homogeneous and heterogeneous shocks such that outliers of  $u_{j,t}$ and $e_{it}$ occur at the different dates in DGP 4. 
    We note that the homogeneous and heterogeneous shocks imposed in DGPs 2-4 do not affect population covariance matrix of return data $\{r_t\}$.  Under DGP 2 where $r_t = BF_t + e_t + S\mathbb{I}(t=kv)$, we have
$$
\Sigma_r = BB^\top + \Sigma_e + cov(S\mathbb{I}(t=kv))
$$
where the interactive term is 0, and $cov(S\mathbb{I}(t=kv)) = 0$ since integration of finite point is 0. For DGP 3, constant shocks are required to keep  the variances of common factors unchanged. Specifically, under DGP 3, $f_{j,t} = \mu_{f,j} + \alpha_j f_{j,t-1}+u_{j,t} + H_j\mathbb{I}(t = kv), 1 \leq kv \leq T, j = 1,2$. 
Assume there are $K$ shocks before current time, such that $Kv \leq t$. then for calculating the variance of $f_{j,t}$, we first make following transformation
\begin{align*}
f_{j,t}& = \sum_{i=1}^{K}\alpha_j^{t-iv}H_j + (\mu_{f,j} + u_{j,t}) + \alpha_j\left(\mu_{f,j} + u_{j,t-1} \right) + \alpha_j^2\left(\mu_{f,j} + u_{j,t-2} \right)+\cdots
\\ &
= \sum_{i=1}^{K}\alpha_j^{t-iv}H_j  + \frac{\mu_{f,j}}{1-\alpha_j} + u_{j,t} + \alpha_ju_{j,t-1} + \alpha_j^2u_{j,t-2} + \cdots
\end{align*}
since $\{u_{j,t}\}$ is i.i.d. random variable with mean 0 and finite variance, we then have $E(f_{j,t}) = \sum_{i=1}^{K}\alpha_j^{t-iv}H_j  + {\mu_{f,j}}/{(1-\alpha_j)}$, hence the variance of $f_{j,t}$ is 
    $$
E(f_{j,t}-E(f_{j,t}))^2 = E\left( u_{j,t} + \alpha_ju_{j,t-1} + \alpha_j^2u_{j,t-2} + \cdots \right)
$$ 
which is fixed and the same as that in DGP 1. Hence, $\Sigma_F$ and $\Sigma_e$ are the same as the baseline model in DGPs 2, 3 and 4, and true population covariance matrix $\Sigma_{r} = BB^\top + \Sigma_e$ is fixed.

For values of parameters in the DGPs, we calibrate them from real data. In details, we select largest 100 stocks measured by market values in S\&P 500 index at January 2006, then we apply POET (with thresholding parameter 0.5) to daily excess  return data from Jan 2006 to Dec 2009 which contains 1008 observations  to obtain estimation of factor loading $\hat{B}$ and sparse residual covariance matrix $\hat{\Sigma}_{e}$. Then, we set $\mu_{b,1}  = 0.018, \mu_{b,2} = -0.001, \sigma_{b,1}= 0.0072, \sigma_{b,2}= 0.0084$ according to the mean and variance of $\hat{B}$.  For $\Sigma_{e}$, we set it to be the residual covariance matrix estimator $\hat{\Sigma}_e$ using $p$ largest stocks.	 The results of the following four combinations of sample size and portfolio dimension will be reported: $\{p = 50, T  = 100\}$, $\{p = 50, T =150\}$, $\{p = 80, T  = 100\}$, $\{p = 80, T =150\}$. Each portfolio is held for time $T$ (from $T+1$ to $2T$).

	\subsection{Performance Measures}
	To assess the performance of the proposed estimator, we apply the following out of sample  evaluation statistics for portfolios and covariance matrices:
	
	(1) Out of sample risk: the standard deviation of out of sample portfolio excess returns $\{r_{t}^p\}_{t=1}^{T}$ where $r_{t}^p = \hat{W}^\prime r_{t+T}, t \in [1,T]$ and $\hat{W}$  is the vector of estimated portfolio weights, specifically, 
	\begin{equation}
	\label{eq risk error}
	\text{SD} = \sqrt{\frac{1}{T-1}\sum_{t=1}^{T}\left(r_t^p - \hat{\mu}^p  \right) }, 
	\  \hat{\mu}^p = \frac{1}{T}\sum_{t=1}^{T} r_t^p 
	\end{equation}
	
	(2) Out of sample Sharpe ratio error: the absolute difference between out of sample Sharpe ratio of estimated and infeasible oracle portfolio, where out of sample Sharpe ratio is given by  $SR = \hat{\mu}^p/SD$.
	
	(3) Maximum drawdown (MDD): the largest difference among cumulative returns throughout the entire sample,
	$$
	MDD = \mathop{max}_{1\leq t_1 \leq t_2 \leq T}(\gamma_{t_1}-\gamma_{t_2})
	$$
	where $\gamma_{t_1}$  and $\gamma_{t_2}$ refer to the cumulative portfolio return from the beginning of investment to $t_1$ and $t_2$ respectively, and is defined as $\gamma_{t} = \sum_{i=1}^{t}r_i^p$ for $t= t_1, t_2$.
	
	(4) Weight error:  the  $\ell_2$ norm of the difference between the estimated and oracle weights.
	\begin{equation}
	\label{eq weight error}
	\text{Weight  error} = \mathop{\Vert\hat{W}-W\Vert}\nolimits_{{2}}=\sqrt{\sum\limits_{{i}}\mathop{\mathop{(\mathop{\mathop{\hat{W}-W}})}\nolimits_{{i}}}\nolimits^{{2}}} 
	\end{equation}
	where $W$ is true minimum variance weights calculated by using true $\Sigma_{r}$.

	(5) Covariance matrix error: the relative error matrix measured by Frobenius norm of estimated covariance matrix of assets returns. This measure is also used in \cite{fan2013large,fan2019robust,WANG202153}. 
	\begin{equation}
	\label{eq loss measure}
	\text{Covariance error} = \mathop{\Vert\Sigma_{r}^{-1/2}\hat{\Sigma}_{r}\Sigma_{r}^{-1/2} -  I_p \Vert}\nolimits_{{F}} =  \mathop{\Vert\hat{\Sigma}_{r} - \Sigma_{r} \Vert}\nolimits_{{\Sigma}}
	\end{equation}
	
	All the above statistics are reported based on the simulation replications of 200 times. 
	
	\subsection{Comparison Portfolios}
	For other comparison strategies, we consider the following  portfolios: 
	(1) Linear: portfolio using the linear shrinkage estimator proposed by \cite{ledoit2003improved}. (2) Nonlinear: portfolio when the nonlinear shrinkage estimator proposed by \cite{ledoit2017nonlinear} is used. 
	(3)  POET: portfolio whose covariance matrix is estimated using \cite{fan2013large}. The comparison with POET is informative to assess the robustness of our proposed R-MVP.
	The thresholding parameter ($c_\tau$ in our paper) for R-MVP and POET is set  to 0.5 following \cite{fan2013large} and we choose the soft threshold shrinkage function.  Finally, for  threshold parameter $\tau$ of Huber loss function in our proposal, we  set it as the 0.9-th quantile of $\{||r_t  - {B}^{(i-1)}{F}_{t}^{(i-1)}||, t=1,\ldots,T\}$ at each step i during iterative calculation, which implies that about 90\% of data is weighted by 0.5.
	\subsection{Results}\label{sim:res}
	
	\begin{table}[htbp!]
		\centering
		\caption{Simulation results under DGP 1-4 with $p= 50, T= 100$ and $p =50, T =150$.}
		\label{table DGP1 results1 p = 50}
		\begin{threeparttable}
			{\begin{tabular}{cccccccccccc}
					\noalign{\global\arrayrulewidth1pt}\hline \noalign{\global\arrayrulewidth0.4pt}
					& \multicolumn{5}{c}{p=50, T = 100}                                       & \multicolumn{1}{c}{}     & \multicolumn{5}{c}{p=50, T = 150}                                             \\ \cline{2-6} \cline{8-12} 
					& Risk & MDD    & SR     & Weight & \multicolumn{2}{l}{COV} & Risk & MDD    & SR    & Weight & COV \\ \hline
					& \multicolumn{11}{c}{DGP1, Benchmark}                                                                               \\
					Oracle     & 0.637   & 8.215   & 0       & 0       & 0                         &      & 0.639   & 10.270   & 0      & 0       & 0          \\
Linear     & 0.782   & 11.348  & 8.228   & 1.833   & 499.2                     &      & 0.740   & 13.863   & 5.061  & 1.603   & 410.5      \\
Non-linear & 0.756   & 12.147  & 9.569   & 1.569   & 548.5                     &      & 0.722   & 14.523   & 5.501  & 1.416   & 443.3      \\
POET       & 0.704   & 11.207  & 8.716   & 1.167   & 297.7                     &      & 0.688   & 13.601   & 5.068  & 1.037   & 262.5      \\
R-MVP      & 0.698   & 10.446  & 7.184   & 1.145   & 291.2                     &      & 0.686   & 13.114   & 4.476  & 1.022   & 259.6 \\
					   \hline
					& \multicolumn{11}{c}{DGP2, 5   times standard deviation heterogeneous shocks}                                       \\
					Oracle     & 0.637   & 8.215   & 0       & 0       & 0                         &      & 0.639   & 10.270   & 0      & 0       & 0          \\
Linear     & 0.987   & 16.024  & 10.525  & 2.789   & 868.1                     &      & 0.963   & 18.734   & 7.065  & 2.805   & 791.4      \\
Non-linear & 0.955   & 16.284  & 11.540  & 2.516   & 837.5                     &      & 0.934   & 19.092   & 7.410  & 2.546   & 771.5      \\
POET       & 0.885   & 15.530  & 12.926  & 2.344   & 594.8                     &      & 0.864   & 18.294   & 8.810  & 2.244   & 547.6      \\
R-MVP      & 0.793   & 12.832  & 10.035  & 1.826   & 443.4                     &      & 0.777   & 15.261   & 6.842  & 1.706   & 408.6  \\ \hline
					& \multicolumn{11}{c}{DGP3, 5 times standard deviation  homogeneous shocks}                                          \\
					Oracle     & 0.637   & 8.376   & 0       & 0       & 0                         &      & 0.639   & 10.378   & 0      & 0       & 0          \\
Linear     & 0.766   & 11.821  & 8.069   & 1.733   & 523.6                     &      & 0.733   & 13.965   & 4.993  & 1.544   & 427.4      \\
Non-linear & 0.744   & 12.882  & 9.638   & 1.551   & 557.3                     &      & 0.716   & 14.712   & 5.682  & 1.398   & 451.4      \\
POET       & 0.694   & 11.899  & 8.980   & 1.124   & 310.5                     &      & 0.682   & 13.703   & 5.241  & 1.000   & 274.5      \\
R-MVP      & 0.688   & 10.937  & 7.235   & 1.102   & 302.4                     &      & 0.680   & 13.124   & 4.296  & 0.984   & 269.1   \\ \hline
					& \multicolumn{11}{c}{DGP4, 5   times standard deviation homogeneous shocks and heterogeneous shocks} \\
					Oracle     & 0.637   & 8.376   & 0       & 0       & 0                         &      & 0.639   & 10.378   & 0      & 0       & 0          \\
Linear     & 0.982   & 17.212  & 11.276  & 2.836   & 876.5                     &      & 0.961   & 18.604   & 6.652  & 2.848   & 798.8      \\
Non-linear & 0.940   & 17.752  & 12.398  & 2.476   & 844.6                     &      & 0.922   & 18.910   & 7.221  & 2.509   & 776.4      \\
POET       & 0.847   & 16.417  & 13.136  & 2.105   & 549.8                     &      & 0.828   & 16.515   & 7.845  & 1.997   & 502.3      \\
R-MVP      & 0.761   & 13.142  & 9.935   & 1.626   & 427.8                     &      & 0.751   & 14.414   & 6.003  & 1.539   & 399.7   \\ \noalign{\global\arrayrulewidth1pt}\hline \noalign{\global\arrayrulewidth0.4pt}
			\end{tabular}}
			\begin{tablenotes}
				\item[a] The reported statistics are out of sample risk, maximum drawdown, out of sample Sharpe ratio error, 
				weight error, and covariance error.  
                \item[b] The values are all enlarged by 100 for ease of presentation. 
				\item[c] The number of factors in POET and R-MVP is set to 2, and thresholding parameter $c_\tau$ for R-MVP and POET are set to 0.5. 
			\end{tablenotes}
		\end{threeparttable}
	\end{table}
	
	\begin{table}[htbp!]
		\centering
		\caption{Simulation results under DGP 1-4 with $p= 80, T= 100$ and $p =80, T =150$.}
		\label{table DGP1 results1 p = 80}
		\begin{threeparttable}
			{\begin{tabular}{cccccccccccc}
					\noalign{\global\arrayrulewidth1pt}\hline \noalign{\global\arrayrulewidth0.4pt}
					& \multicolumn{5}{c}{p=80, T = 100}                                       & \multicolumn{1}{c}{}     & \multicolumn{5}{c}{p=80, T = 150}                                             \\ \cline{2-6} \cline{8-12} 
					& Risk & MDD    & SR     & Weight & \multicolumn{2}{l}{COV} & Risk & MDD    & SR    & Weight & COV \\ \hline
					& \multicolumn{11}{c}{DGP1, Benchmark}                                                                \\
					Oracle     & 0.492     & 5.683     & 0       & 0        & 0        &    & 0.487   & 7.317    & 0       & 0        & 0        \\
Linear     & 0.671     & 8.806     & 8.109    & 1.902    & 778.6    &    & 0.618   & 10.236   & 5.439   & 1.666    & 645.0    \\
Non-linear & 0.623     & 9.330     & 9.254    & 1.608    & 839.8    &    & 0.578   & 10.511   & 5.401   & 1.428    & 699.7    \\
POET       & 0.557     & 8.250     & 8.739    & 1.008    & 416.1    &    & 0.531   & 9.655    & 5.344   & 0.836    & 359.0    \\
R-MVP      & 0.550     & 7.714     & 7.016    & 0.986    & 400.6    &    & 0.528   & 9.016    & 4.556   & 0.822    & 350.5  \\ \hline
					& \multicolumn{11}{c}{DGP2, 5   times standard deviation heterogeneous shocks}                        \\
					Oracle     & 0.492     & 5.683     & 0        & 0        & 0        &    & 0.487   & 7.317    & 0       & 0       & 0        \\
Linear     & 0.903     & 12.671    & 10.749   & 2.742    & 1323.3   &    & 0.884   & 14.298   & 6.969   & 2.773    & 1205.1   \\
Non-linear & 0.844     & 11.943    & 10.559   & 2.297    & 1304.3   &    & 0.826   & 13.812   & 7.004   & 2.299    & 1199.8   \\
POET       & 0.777     & 11.232    & 11.027   & 2.315    & 989.2    &    & 0.751   & 12.283   & 7.757   & 2.229    & 883.2    \\
R-MVP      & 0.665     & 9.281     & 9.285    & 1.849    & 694.7    &    & 0.644   & 10.201   & 6.455   & 1.733    & 616.1 \\ \hline
					& \multicolumn{11}{c}{DGP3, 5 times standard deviation  homogeneous shocks}                           \\
					Oracle     & 0.492     & 5.715     & 0        & 0        & 0        &    & 0.488   & 7.338    & 0      & 0        & 0        \\
Linear     & 0.650     & 8.891     & 8.280    & 1.798    & 803.2    &    & 0.607   & 9.939    & 5.351   & 1.598    & 661.6    \\
Non-linear & 0.605     & 9.601     & 9.334    & 1.563    & 849.5    &    & 0.570   & 10.360   & 5.680   & 1.406    & 706.4    \\
POET       & 0.545     & 8.358     & 8.175    & 0.955    & 422.6    &    & 0.523   & 9.334    & 4.962   & 0.794    & 366.5    \\
R-MVP      & 0.537     & 7.752     & 6.641    & 0.937    & 408.1    &    & 0.521   & 8.864    & 4.155   & 0.785    & 356.9   \\ \hline
					& \multicolumn{11}{c}{DGP4, 5   times standard deviation homogeneous shocks and heterogeneous shocks} \\
					Oracle     & 0.492     & 5.715     & 0       & 0        & 0        &    & 0.488   & 7.338    & 0       &0        &0        \\
Linear     & 0.909     & 13.440    & 11.220   & 2.828    & 1341.8   &    & 0.891   & 14.190   & 6.884   & 2.854    & 1218.7   \\
Non-linear & 0.831     & 12.701    & 11.106   & 2.274    & 1310.3   &    & 0.816   & 13.626   & 7.055   & 2.279    & 1205.5   \\
POET       & 0.745     & 11.186    & 10.901   & 2.229    & 920.9    &    & 0.726   & 11.352   & 7.763   & 2.134    & 824.2    \\
R-MVP      & 0.628     & 9.169     & 9.301    & 1.689    & 647.0    &    & 0.612   & 9.753    & 5.974   & 1.592    & 582.2    \\ \noalign{\global\arrayrulewidth1pt}\hline \noalign{\global\arrayrulewidth0.4pt}
			\end{tabular}}
			\begin{tablenotes}
				\item[a] see notes in Table \ref{table DGP1 results1 p = 50}
			\end{tablenotes}
		\end{threeparttable}
	\end{table}

	Tables \ref{table DGP1 results1 p = 50}-\ref{table DGP1 results1 p = 80}  tabulate the simulation results under DGPs 1-4 with $p=50$ and $p = 80$, respectively. The reported values are all enlarged by 100 for ease of presentation.  	The main findings  can be summarized as follows: (1) In the baseline setting,  R-MVP is the best non-oracle portfolio in terms of all considered measures. The POET  enjoys the second lowest values (close to those of R-MVP) of out of sample risk, weight error and relative covariance error. 
 Non-linear shrinkage based portfolio has lower out of sample risk than linear shrinkage based portfolio, but it loses to linear based portfolio in terms of maximum drawdown and Sharpe ratio error.
(2) In the presence of heterogeneous outliers,  the superiority of R-MVP over other portfolios become more visible. E.g., in the case of $p  = 80, T =100$, the gaps of R-MVP over the second-best portfolio in terms of  out of sample risk, maximum drawdown, Sharpe ratio error, weight error and covariance error are {1.12e-3, 0.0195, 0.0127, 4.48e-3 and 2.95}, respectively, which are significantly higher than those of {7e-5, 0.0054, 0.0109, 2.2e-4 and 0.155} under baseline.
(3) The findings in presence of homogeneous outliers are interesting. Firstly, it can be observed that the risk and weight error decrease. Secondly, the changes of  maximum drawdown and Sharpe ratio error are uncertain, which depends on two forces : (i) With more accurate weight estimator, the portfolio's maximum drawdown and Sharpe ratio should be closer to the oracle. Thus, the MDD and Sharpe ratio error values will decrease. (ii) With positive shocks on common factors, the levels of factors rise, and it leads to higher oracle maximum drawdown  and Sharpe ratio (hence the magnitude of its error). Even though the weights remain unchanged from those in the baseline setting, the maximum drawdown and Sharpe ratio of all portfolios increase, {e.g., the oracle maximum drawdown increases to 0.08376 from 0.08215 in the case of $p=50$ and $T = 100$}.  
We note that R-MVP is still the best non-oracle portfolio under DGP 3, compared to other competitors. 
(4) In the presence of both homogeneous outliers and heterogeneous outliers, the advantages of R-MVP are quite visible.

	\begin{figure*}[htbp]
		\centering
		\includegraphics[width=16cm]{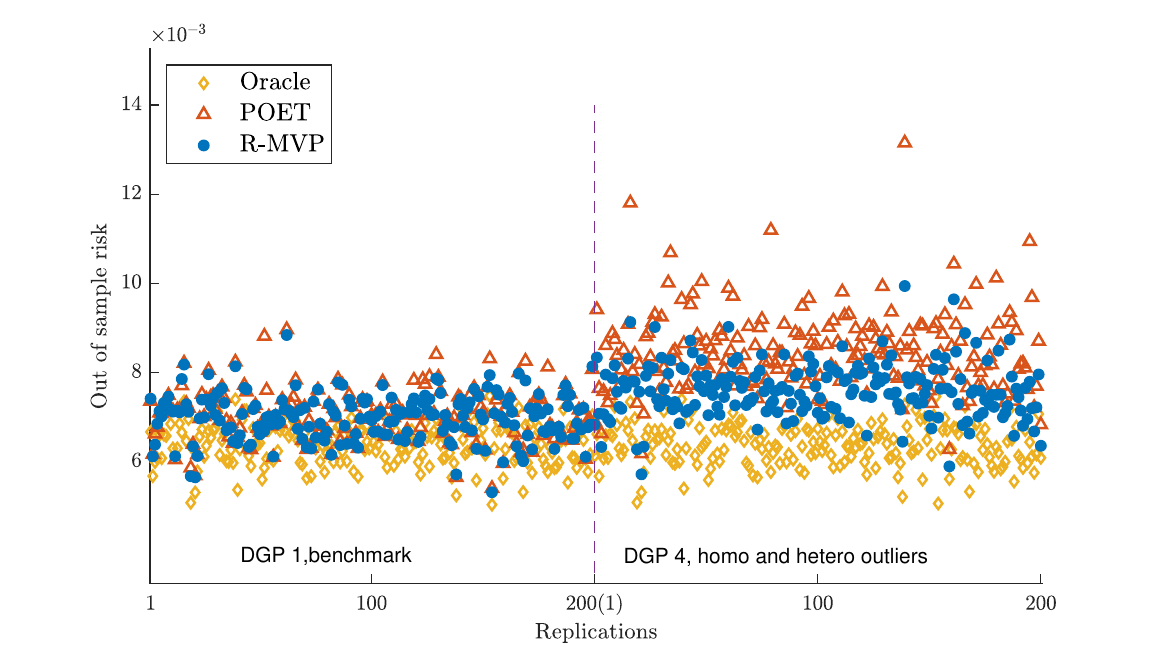}
		\caption{The out of sample risks of POET-based portfolio, R-MVP and corresponding oracle results over all replications under DGP 1 and 4 with $p = 50, T= 100$. }
		\label{Figure std_plot}
	\end{figure*}
	Figure \ref{Figure std_plot} plots the out of sample risks of POET-based portfolio, R-MVP and corresponding oracle results over 200 simulation replications under DGP 1 and 4 with $p = 50, T= 100$ (the results under other [p,T] combinations and near unit root situations are similar). 
	To make it an easy and direct comparison, the results for each replication in DGP 1 and DGP 4 are put in the same figure. And they are shown in the left and right halves of the x-axis, respectively. 
	Firstly, despite the risks of POET-based strategy are  slight higher than those of R-MVP under baseline DGP 1, the gaps are quite small and their risks coincide in many replications. Secondly, the risks of R-MVP are close to the oracle levels, the yellow diamonds and blue dots in the figure can not be separated simply. Thirdly, when the data includes large shocks, R-MVP has better risk control ability. It can be observed that the risk of POET-based portfolio under all simulation replications is much higher than that of R-MVP.
	
	\begin{table}[htbp!]
		\centering
		\caption{Simulation results under DGP 5 and 6 with $p= 50, T= 100, 150$ and $p =80, T =100, 150$.}
		\label{table DGP5-6 results p = 50,80}
		\begin{threeparttable}
			{\begin{tabular}{cccccccccccc}
					\noalign{\global\arrayrulewidth1pt}\hline \noalign{\global\arrayrulewidth0.4pt}
					& \multicolumn{5}{c}{p=50, T = 100}                                       & \multicolumn{1}{c}{}     & \multicolumn{5}{c}{p=50, T = 150}                                             \\ \cline{2-6} \cline{8-12} 
					& Risk & MDD    & SR     & Weight & \multicolumn{2}{l}{COV} & Risk & MDD    & SR    & Weight & COV \\ \hline
					& \multicolumn{11}{c}{DGP 5, 3 times   standard deviation   heterogeneous   shocks}                          \\
					Oracle     & 0.637  & 8.215  & 0      & 0      & 0                        &     & 0.639  & 0       & 0     & 0      & NaN        \\
Linear     & 0.902  & 13.695 & 9.383  & 2.492  & 597.6                    &     & 0.862  & 16.240  & 6.211 & 2.371  & 509.7      \\
Non-linear & 0.830  & 14.053 & 10.722 & 1.945  & 617.6                    &     & 0.802  & 16.561  & 6.711 & 1.877  & 520.3      \\
POET       & 0.754  & 12.769 & 11.021 & 1.550  & 359.5                    &     & 0.735  & 15.046  & 7.005 & 1.415  & 323.9      \\
R-MVP      & 0.728  & 11.307 & 8.331  & 1.406  & 336.1                    &     & 0.714  & 13.861  & 5.571 & 1.282  & 306.3  \\ \hline
					& \multicolumn{11}{c}{DGP 6, 2 times covariance matrix of shocks}                                            \\
					Oracle     & 0.637  & 8.215  & 0      & 0      & 0                        &     & 0.639  & 10.270  & 0     & 0      & 0          \\
Linear     & 0.971  & 15.705 & 10.252 & 2.702  & 912.8                    &     & 0.951  & 18.560  & 6.923 & 2.738  & 827.8      \\
Non-linear & 0.947  & 15.985 & 11.332 & 2.485  & 877.4                    &     & 0.927  & 18.926  & 7.206 & 2.522  & 805.5      \\
POET       & 0.882  & 15.426 & 12.667 & 2.327  & 636.0                    &     & 0.861  & 18.360  & 8.764 & 2.229  & 574.9      \\
R-MVP      & 0.784  & 12.637 & 9.743  & 1.776  & 450.9                    &     & 0.769  & 15.148  & 6.741 & 1.657  & 415.7  \\ \hline
					& \multicolumn{4}{c}{p=80, T = 100} & \multicolumn{1}{c}{} &  & \multicolumn{4}{c}{p=80, T = 150} &        \\ \cline{2-6} \cline{8-12}
					& \multicolumn{11}{c}{DGP 5, 3 times standard deviation   heterogeneous shocks}                              \\
					Oracle     & 0.492  & 5.683  & 0      & 0      & 0                        &     & 0.487  & 7.317   & 0     & 0      & 0          \\
Linear     & 0.824  & 11.494 & 10.131 & 2.606  & 927.9                    &     & 0.786  & 12.636  & 6.507 & 2.527  & 789.5      \\
Non-linear & 0.712  & 10.392 & 9.950  & 1.904  & 949.9                    &     & 0.682  & 11.839  & 6.663 & 1.845  & 811.3      \\
POET       & 0.619  & 9.065  & 10.281 & 1.518  & 533.6                    &     & 0.589  & 10.149  & 6.788 & 1.355  & 462.5      \\
R-MVP      & 0.585  & 8.256  & 8.440  & 1.360  & 479.3                    &     & 0.562  & 9.331   & 5.548 & 1.222  & 425.8 \\ \hline
					& \multicolumn{11}{c}{DGP 6, 2 times covariance matrix of shocks}                                            \\
					Oracle     & 0.492  & 5.683  & 0      & 0      & 0                        &     & 0.487  & 7.317   & 0     & 0      & 0          \\
Linear     & 0.874  & 12.343 & 10.528 & 2.637  & 1392.2                   &     & 0.863  & 14.042  & 6.946 & 2.701  & 1256.6     \\
Non-linear & 0.827  & 11.825 & 10.513 & 2.266  & 1367.1                   &     & 0.812  & 13.651  & 6.998 & 2.290  & 1246.8     \\
POET       & 0.767  & 11.161 & 11.179 & 2.269  & 1060.5                   &     & 0.743  & 12.333  & 7.725 & 2.205  & 926.6      \\
R-MVP      & 0.653  & 9.094  & 9.228  & 1.797  & 707.4                    &     & 0.631  & 10.016  & 6.366 & 1.686  & 622.1    \\ \noalign{\global\arrayrulewidth1pt}\hline \noalign{\global\arrayrulewidth0.4pt}
			\end{tabular}}
			\begin{tablenotes}
				\item[a] see notes in Table \ref{table DGP1 results1 p = 50}
			\end{tablenotes}
		\end{threeparttable}
	\end{table}

	Since portfolio performance is shown to be affected mainly by individual outliers, we try  two other DGP settings as follows: (1) $S_2 \sim N({\mu}_{S,2},\Sigma_{e})$ with fix frequency $v=50$ where ${\mu}_{S,2} = (3\Sigma_{e,11}^{1/2},\ldots, 3\Sigma_{e,pp}^{1/2})$; (2) $S_3 \sim N({\mu}_S,2\Sigma_{e})$ with fixed frequency $v= 50$.
	We denote these two data generating processes as  \textbf{DGP 5}, and \textbf{6}.
	DGP 5 imposes weaker shocks compared to DGP 2. On the other hand, DGP 6 considers shocks with a larger covariance matrix.   
	The results of four combinations of $p$ and $T$  are reported in Table \ref{table DGP5-6 results p = 50,80}. Some conclusions are further drawn: (1) Evidently, we can still observe that R-MVP manifests its superiority under new DGPs.  
	(2) In more details, when shocks become larger (with larger variance), R-MVP gets even larger advantages compared to the other portfolios. E.g., in the case of DGP 2 with $p = 80, T =150$, the gaps of R-MVP from POET are about 1.07e-3, 0.0208 and 0.0130, 4.96e-03 and 2.67 in terms of  risk, maximum drawdown, Sharp ratio error, weight error and covariance error, respectively,  
 and in DGP 6 that gaps increase to 1.12e-3, 0.0232, 0.0136, 5.19e-03, 3.05.

	\begin{figure*}[htbp]
		\centering
		\includegraphics[width=\textwidth]{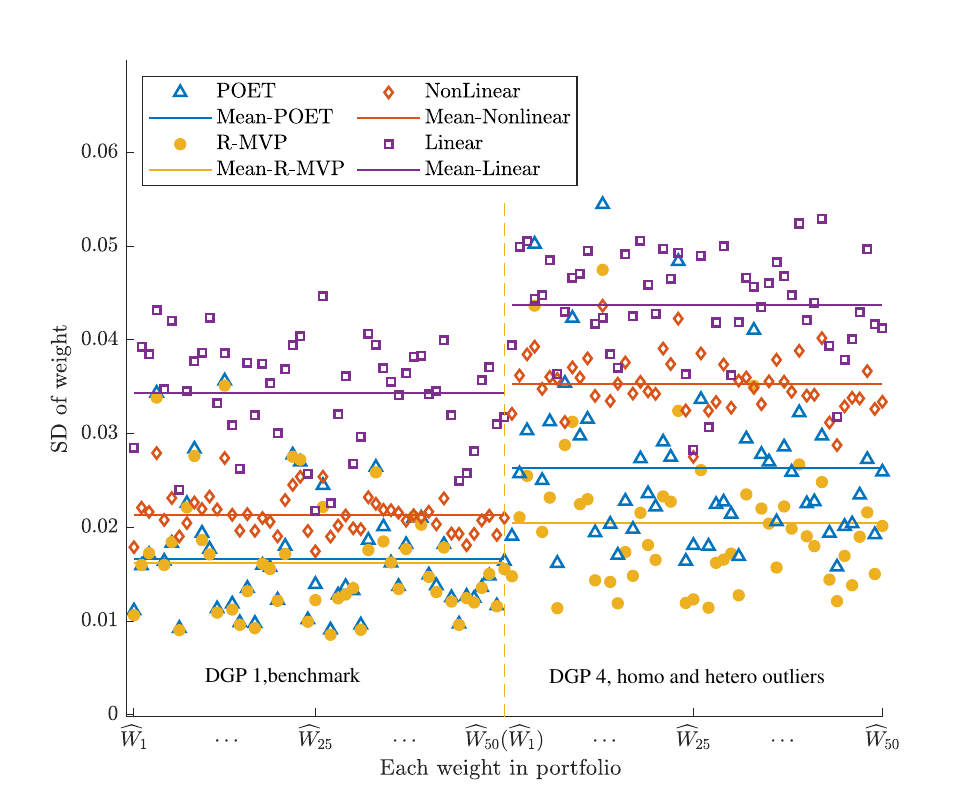}
		\caption{The standard deviation of each element of $p$-dimensional portfolio weights
			under DGP 1 and 4 with $p = 50, T= 100$. Left half of panel : results over 200 replications under DGP 1. Right half of panel : results over 200 replications under DGP 2.}
		\label{Figure std_weight}
	\end{figure*}
	Furthermore, we analyze the stability of portfolio weights of considered portfolios. The simulation data are randomly generated from the same population distribution in each setting. A robust portfolio should have small variances for the weights of each asset. Figure \ref{Figure std_weight} shows empirical standard deviation of estimated weights by different portfolios under DGP 1 (left panel) and DGP 4 (right panel) with $p=50, T = 100$. The results under other combinations of $[p, T]$ are similar and are not reported. In each replicated experiment, the sizes of imposed shocks follow a random distribution and are different. Therefore, the higher standard deviation of portfolio weights means the corresponding strategy is more  sensitive to the outliers of the return data. The oracle portfolio weight (not plotted) has zero standard deviation since the true covariance matrix maintains unchanged over replications. In DGP 1, the weights' standard deviations of POET-based portfolio and R-MVP are lowest (the average lines nearly coincide but R-MVP still show a slight lower average of all weights' standard deviations). The standard deviation of weights of nonlinear shrinkage based portfolio weight is slightly higher than R-MVP. The linear shrinkage based portfolio  has large variations in the weights. Also, it can be observed that the variation patterns of POET-based portfolio and R-MVP are similar. It is worth noting that standard deviations of each weight in nonlinear shrinkage based portfolio are close to each other and have the lowest variation across the weights. In the presence of both homogeneous and heterogeneous outliers (DGP 4, right half of the figure), it is clear that R-MVP is the most robust to outliers. The weights' standard deviations of R-MVP are still close to the levels under DGP 1. However, weight standard deviations of other comparison strategies increase sharply. E.g., although POET-based portfolio  still enjoys the second lowest weight standard deviation, its values are nearly 1.3 times those of R-MVP and 1.6 times the levels of DGP 1,  the variation of weights of nonlinear shrinkage based portfolio increases sharply and is also about 1.7 times the levels of DGP 1, but it is still better than linear shrinkage based portfolio. 
	\begin{figure}[htbp]
		\centering
		\includegraphics[width=14 cm]{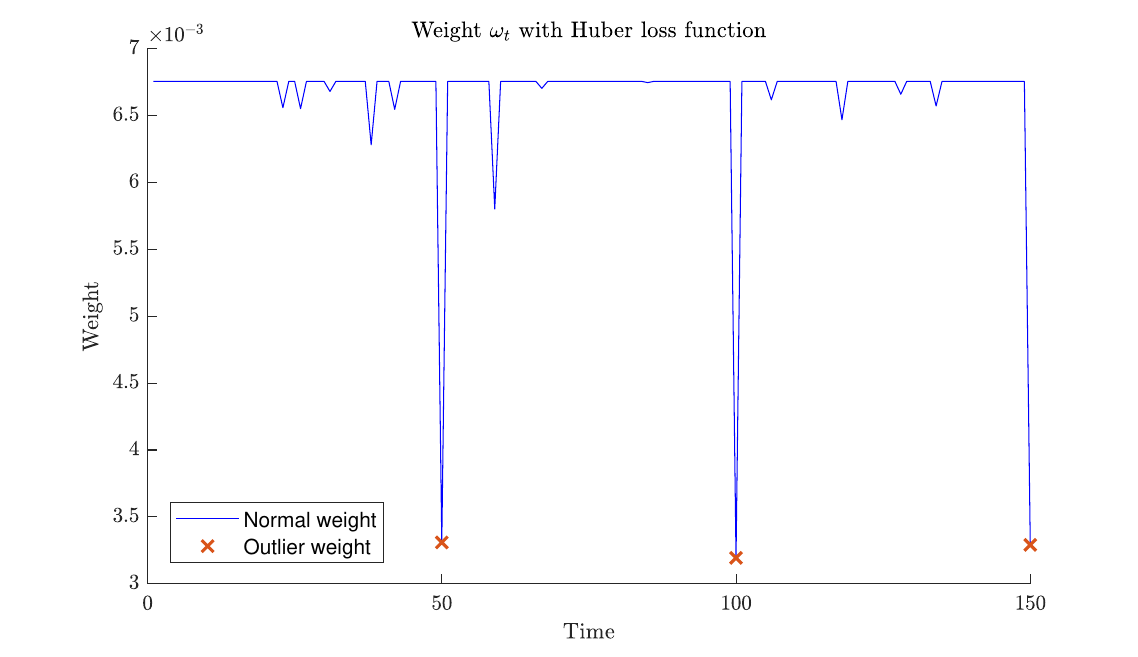}
		\caption{ Weighting sequence $\omega_t$ calculated by Eq. (\ref{weights omegat}) with bi-square loss function and $p = 50, T= 150$. Outlier weight: $\omega_t$ at time points hit by shocks, e.g., T= 50, 100, 150. Normal weight: $\omega_t$ other than the outlier weights.}
		\label{Figure weight}
	\end{figure}
	
	To get a better insight into the superiority of R-MVP in the presence of  outliers,  we plot weighting sequence $\omega_t$ calculated by (\ref{weights omegat})  under DGP 2 with $p = 50, T= 150$ in Figure \ref{Figure weight}. For easy comparison, we transform the weights into percentage form, and note that $\omega_t = 6.67\times 10^{-3}$ of POET which treats all time points equally. It is clear that R-MVP assigns much low weights  to time points where shocks occur, hence the influences of outliers are reduced largely, and as a consequence, R-MVP achieves the desired robust performance.
	
	\begin{figure}[htbp]
		\centering
		\includegraphics[width=18 cm]{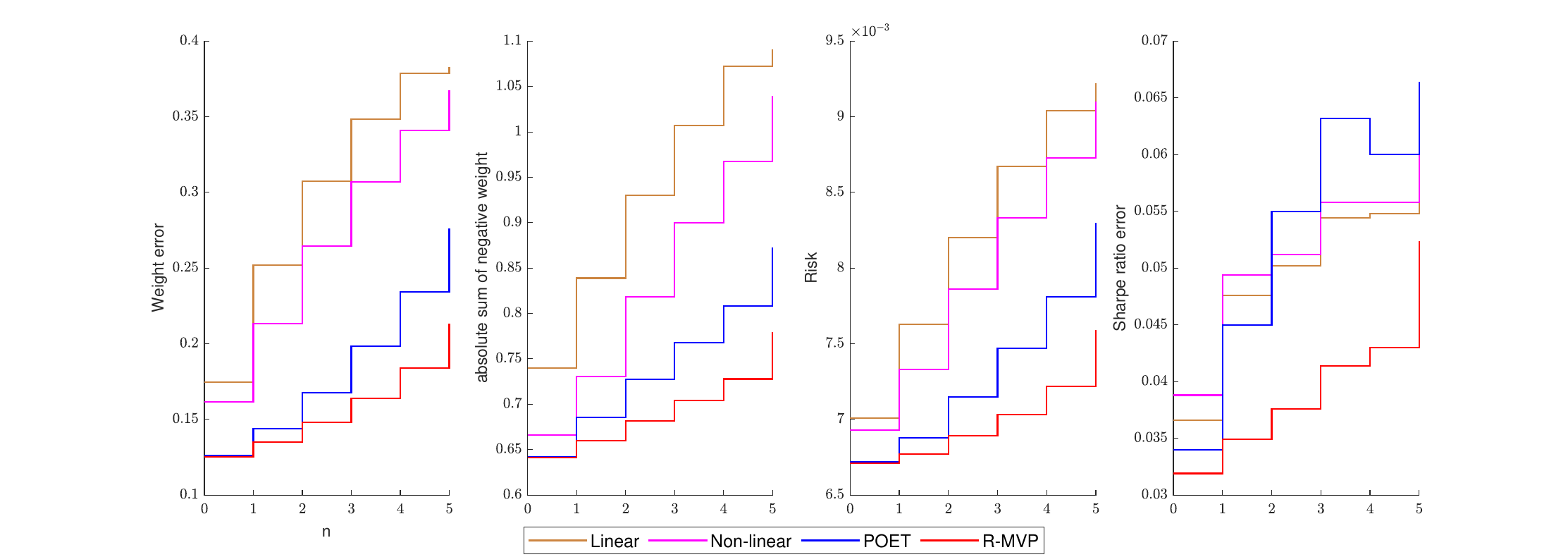}
		\caption{Weight error, negative weight error, out of sample risk and out of sample Sharpe ratio error of various portfolios with respect to the number of outliers in the training set.}
		\label{Figure mechan}
\end{figure}

At last, we show the robustness of the proposed R-MVP by observing the marginal effects of both types shocks in the training set. We generate the return data from DGP 4 but impose the homogeneous and heterogeneous shocks step by step. Specifically, let $n$ denote the number of shocks we impose, then homogeneous and heterogeneous shocks are imposed at time points $40\times i$ and $50\times i$, respectively, $ 1 \leq i\leq n$. Obviously when  $n = 0$, the data is the same as if generated from DGP 1. We consider the case with $p =50$ and $T =250$, which indicates the maximum value can be taken by $n$ is 5.
	Then with respect to $n$, we plot curves for following measures: (1) out of sample risk, (2) out of sample Sharpe ratio error, (3) weight error and (4) negative weight error which is defined as $\sum_{i=1}|\hat{w}_i\mathbb{I}(\hat{w}_i<0)|$. The curves are shown in Figure \ref{Figure mechan}. It is evidence that the average slopes of R-MVP are the smallest in terms of four measures, which indicates the  robustness of R-MVP that it is less affected by the large hits. Additionally, R-MVP has significantly lower amount of short positions (measured by summation of  negative weights) than other competitors when data is contaminated by shocks, which means  R-MVP possesses lower transaction cost in practical applications.

	\section{Real Data Study}\label{empirical}
		
	This section demonstrates the out of sample performance of the proposed R-MVP in real data.  We focus on four portfolio performance measures: out of sample Sharpe ratio, out of sample risk, MDD, which are all introduced in the simulation section,  and out of sample cumulative excess return, which is the sum of out of sample realized portfolio returns in the investment period.  
	The competing investment strategies include the ones in simulation studies, plus (1) the benchmark ``1/N'', (2) minimum variance portfolio using the sample covariance matrix, (3) the unified minimum variance problem estimator (MVP-UF) proposed by \cite{ding2021high}, (4) M-portfolio proposed by \cite{demiguel2009or} which also considers robust estimation in the minimum variance problem, (5) minimum variance portfolio based on robust covariance estimation (RCOV) developed by \cite{fan2019robust}, and (6) minimum variance portfolio based on robust covariance estimation under elliptical factor model (eFM-COV) developed by \cite{fan2018aos}. 
	
At the beginning of the investment decision period (denoted as $T_0$), we use the historical excess return data (the training data set) with length $T$ to compute the covariance matrix estimators and the weights of a portfolio. The holding period for all actively managed portfolios is $HT$. To mimic real-world mutual fund, we use the holding time of one week (5 trading days) and one month (21 trading days). During the holding period, we use excess return data $\mathop{r}\nolimits_{{t}}$ to calculate the returns of each portfolio. After each holding period, we rebalance the portfolios using historical data of the same length $T$ in a rolling window manner. We repeat this process until the last period of the testing sample. We first assume the same holding period for all strategies and no transaction fees to make a fair comparison. In the following we also consider transaction cost.
	
In practice, transaction cost is important for investors. Following \cite{demiguel2009optimal} and  \cite{ao2019approaching}, the excess portfolio return net of transaction cost is computed as, 
	\begin{equation}
	r_{t}^{net} = \left(1 - \sum_{i=1}^pc\left|\hat{W}_{t+1,i}-\hat{W}_{t,i}^+\right| \right)(1+ r_{t}) -1
	\label{eq transaction cost} 
	\end{equation}
	where $\hat{W}_{t+1,i}$($\hat{W}_{t,i}^+$) is the $i$-th element of  portfolio weight after(before) rebalancing, and	$r_{t}$ is the excess return of the portfolio without transaction cost.
The parameter $c$ controls the level of the transaction cost. In this study,  $c$ is set to 10 basis points following \cite{ao2019approaching}. Furthermore,  the total portfolio turnover  is defined as  
	\begin{equation}
	TO = \frac{1}{RT}\sum_{l =1 }^{RT}\sum_{i=1}^p|\hat{W}_{l+1,i}-\hat{W}_{l,i}^+|
	\label{eq turnover}
	\end{equation} 
	where $\hat W_{l+1,i}$ is the desired portfolio weight at $(l+1)$-th rebalancing, and $\hat W_{l,i}^+$ is portfolio weight before the $(l+1)$-th rebalancing.	
	\subsection{Data}
	We conduct empirical analysis based on  daily return data of S\&P 500 index and Russell 2000 index component stocks downloaded from CRSP. The stocks selected in S\&P 500 index are large cap stocks, and  Russell 2000 index consists of 2000 smallest stocks in Russell 3000 index and is often used to represent small-cap company stocks. We consider both indices' constituent stocks to represent a good mixture of both large- and small-cap stocks. Specifically, we consider the following investing pools (henceforth referred to as scenarios (1) and (2), respectively): (1) largest 200 stocks in S\&P 500 index measured by market value. (2) largest 200 stocks in Russell 2000 index measured by market value.  
	Above two asset pools can represent large-cap-specialized mutual fund, and small-cap-specialized mutual fund, respectively. 
	We consider out of sample period from 01/03/2011 to 12/31/2013 with sample size $T = 400$, which contains a total of three years. The historical data used at most decision nodes include many individual outliers; see the number of heterogeneous outliers during considered periods from the upper panel of Figure \ref{Figure cumulative plot}. For some anecdotal examples, during the studied sample period, JPMorgan Chase suffered a loss of around 2 billion dollars in April and May 2012 due to the ``errors'', and ``bad judgment'' of trading strategy.  
	The share price of Netflix, an American online streaming company, plummeted by more than 30\% on Oct 25, 2011 after it announced the mass loss of users in the previous quarter. There are many other cases of these individual firm-specific shocks. 
	We remove those stocks that do not have entire  history data when doing the first calculation under the rolling window scheme. Finally, we take risk-free return data from the Fama-French data library. And the results based on weekly and monthly holding frequencies are both considered. 
	
We first make a preliminary analysis of considered return data taking scenario (2) where the investment pool is composed of the largest 200 stocks in Russell 2000 index as an example (the results are similar for scenario (1)). Figure \ref{Figure qq plot} provides Q-Q plots of return residual calculated by applying principal component analysis on historical data at the first decision node, against Gaussian distribution and t-distribution with degrees of freedom 3, 4 and 6. It can be observed that the residual series (randomly selected) is well fitted by t-distribution with the degree of freedom 4. Generally speaking,   t-distribution is more suitable than Gaussian distribution for fitting daily return residuals, and the behaviors of most residuals can be depicted by t-distribution with degree of freedom varying from 3 to 6. It indicates that our robust method may be  more appropriate given the feature of heavy-tails in the data.
	Moreover,  we visualize the individual (residual) outliers in Figure \ref{Figure outlier plot}, where the upper panel provides the time plot of the number of outliers based on historical data at 01/2011 and the lower panel further draws allocation of outliers in the form of a sparse matrix that outliers take 1 and others take 0. For the criterion of judging outliers, we simply use 1.96 standard deviations around mean return as the critical boundary of outlier returns. From the upper panel, it seems that the gray background region has more outliers than the blank background region (with 5\% confidence level, if we roughly assume 200 residuals are normally distributed, the total outliers should be close to 10). The lower panel shows that there are plenty of outliers in estimated residuals, especially between 05/2009 and 08/2009. These findings further imply the suitability of our proposal.

	\begin{figure}[h]
		\centering
		\includegraphics[width=16cm]{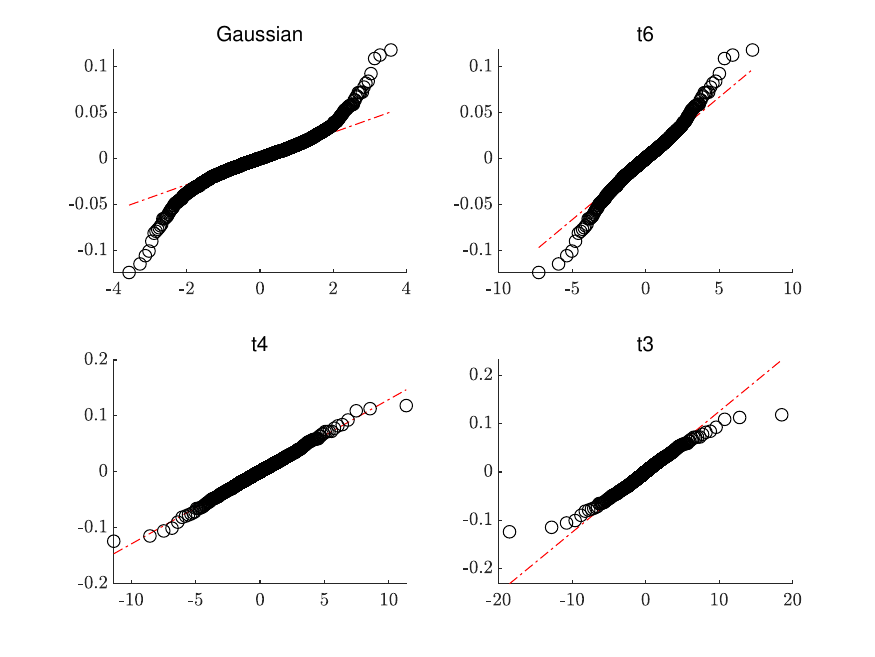}
		\caption{ Q-Q plot of excess return residual of stocks in Russell 2000 index against Gaussian distribution and t-distribution with degree of freedom 3, 4 and 6.}
		\label{Figure qq plot}
	\end{figure}
	
	\begin{figure}[h]
		\centering
		\includegraphics[width=16cm]{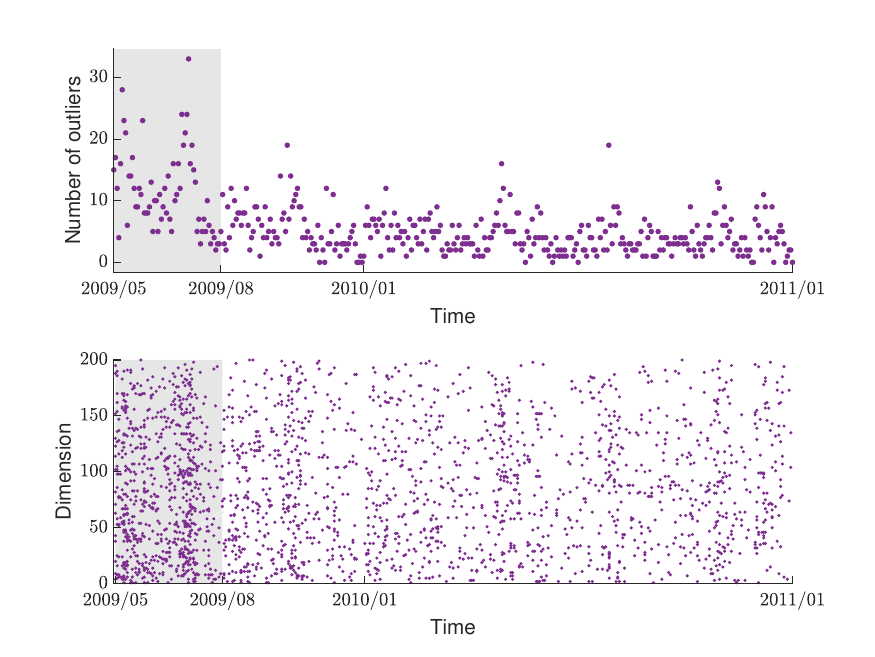}
		\caption{Time plot of the number of outliers (upper panel) of largest 200 Russell 2000 component residuals and corresponding allocated patterns (lower panel).}
		\label{Figure outlier plot}
	\end{figure}

	\subsection{Choice of Thresholding Parameter}
	In practice, the value of $c_\tau$ can be data driven, that we use $K$-fold  cross-validation method to determine it.  With the residual $\hat{e}_t$ from our robust PCA estimation procedure, we randomly divide them into two subsets, denoted as $\{\hat{e}_t\}_{t\in A} $ and $\{\hat{e}_t\}_{t\in B} $ respectively. The sample sizes of each subset are $T(A)$ and $T(B)$, with $T(A) + T(B) = T$. Subset $A$ is used for training and subset $B$ is used for validation. Therefore, we choose the threshold $c_\tau$ by minimizing the following objective function over a compact interval:
	\begin{align*}
	c_\tau^* = \mathop{argmin}_{\underline{c}_{\tau}<\tilde{c}_\tau \leq \overline{c}_\tau} \frac{1}{K}\sum_{j=1}^{K} \left\|\hat{\Sigma}_e^{A,j}(\tilde{c}_\tau) - S_e^{B,j} \right\|^2_F
	\end{align*}
where $\underline{c}_{\tau}$ is the minimum constant that guarantees the positive definiteness of $\hat{\Sigma}_e^{A,j}(\tilde{c}_\tau)$, $\overline{c}_{\tau}$ is large constant such that $\hat{\Sigma}_e^{A,j}(\tilde{c}_\tau)$ is diagonal, $\hat{\Sigma}_e^{A,j}(\tilde{c}_\tau)$ is the threshold residual covariance estimator by using subset $A$ in j-th loop with threshold $\tilde{c}_\tau$, and $S_e^{B,j} $ is sample covariance matrix by using subset $B$ in j-th loop. In this paper,  we  simply set $T(A)= [2T/3]$ in our empirical studies.

	\subsection{Results}
	
	\begin{table}[htbp!]
		\centering
		\caption{Empirical daily application under rolling window scheme, out of sample period is from 01/03/2011 to 12/31/2013, sample size T = 400, monthly holding period. }
		\begin{threeparttable}
			{
				\begin{tabular}{cccccccccccc}
					\noalign{\global\arrayrulewidth1pt}\hline \noalign{\global\arrayrulewidth0.4pt}
					& \multicolumn{5}{c}{max SP 200}   &         & \multicolumn{5}{c}{max Russell 200}      \\ \cline{3-5} \cline{8-11}
					& CR     & Risk    & SR      & MDD   & TR  &  & CR    & Risk    & SR     & MDD   & TR     \\ \hline
					\multicolumn{12}{c}{Without transaction   cost}                                                                                                        \\           
					R-MVP                    & 0.395  & 0.00576 & 0.0934  & 0.080 & -  &   & 0.363 & 0.00693 & 0.0712 & 0.106 & -    \\
					Sample                   & -0.009 & 0.00691 & -0.0018 & 0.225 & -  &   & 0.192 & 0.00864 & 0.0302 & 0.206 & -    \\
					1/N                      & 0.483  & 0.01205 & 0.0545  & 0.256 & -  &   & 0.379 & 0.01615 & 0.0320 & 0.396 & -    \\
					Linear                   & 0.100  & 0.00626 & 0.0217  & 0.141 & -  &   & 0.274 & 0.00780 & 0.0478 & 0.160 & -    \\
					Nonlinear                & 0.257  & 0.00573 & 0.0609  & 0.084 & -  &   & 0.310 & 0.00754 & 0.0560 & 0.145 & -    \\
					POET                & 0.386  & 0.00576 & 0.0911  & 0.083 & -  &   & 0.354 & 0.00694 & 0.0693 & 0.112 & -    \\
					MVP-UF                   & 0.386  & 0.00576 & 0.0911  & 0.083 & -  &   & 0.354 & 0.00694 & 0.0693 & 0.112 & -    \\
					M-portfolio              & 0.034  & 0.00739 & 0.0062  & 0.280 & -  &   & 0.141 & 0.00967 & 0.0199 & 0.244 & -    \\
					RCOV                      & 0.505  & 0.02080 & 0.0330  & 0.355 & -  &   & 0.417 & 0.00762 & 0.0745 & 0.161 & - \\ 
                    eFM-COV &-2.279&	0.08717&	-0.0356&	2.367&	-&		&2.602&	0.21674&	0.0163&	4.616&	-
                     \\\hline
					\multicolumn{12}{c}{With transaction   cost}                                                                                                    \\
					R-MVP                    & 0.355  & 0.00577 & 0.0838  & 0.082 & 0.84 & & 0.348 & 0.00688 & 0.0687 & 0.109 & 0.37 \\
					Sample                   & -0.097 & 0.00694 & -0.0189 & 0.260 & 1.87 & & 0.116 & 0.00863 & 0.0184 & 0.214 & 1.16 \\
					1/N                      & 0.482  & 0.01205 & 0.0544  & 0.256 & 0.01 & & 0.378 & 0.01612 & 0.0319 & 0.396 & 0.02 \\
					Linear                   & 0.043  & 0.00628 & 0.0092  & 0.163 & 1.22 & & 0.220 & 0.00779 & 0.0384 & 0.163 & 0.72 \\
					Nonlinear                & 0.229  & 0.00574 & 0.0543  & 0.089 & 0.59 & & 0.266 & 0.00751 & 0.0481 & 0.147 & 0.60 \\
					POET                & 0.346  & 0.00577 & 0.0816  & 0.085 & 0.85 & & 0.333 & 0.00691 & 0.0655 & 0.115 & 0.41 \\
					MVP-UF                   & 0.346  & 0.00577 & 0.0816  & 0.085 & 0.85 & & 0.333 & 0.00691 & 0.0655 & 0.115 & 0.41 \\
					M-portfolio              & -0.105 & 0.00745 & -0.0191 & 0.340 & 2.97 & & 0.020 & 0.00971 & 0.0029 & 0.296 & 1.97 \\
					RCOV                      & -0.193 & 0.02148 & -0.0122 & 0.527 & 15.12& & 0.383 & 0.00763 & 0.0684 & 0.168 & 0.63
					\\
                        eFM-COV &-4.627&	0.09446&	-0.0666&	4.701&	65.05&	&	0.114&	0.21625&	0.0007&	4.822&	43.09\\
					\noalign{\global\arrayrulewidth1pt}\hline \noalign{\global\arrayrulewidth0.4pt}    
				\end{tabular}	
			}
			\begin{tablenotes}
				\item[a] ``max SP 200'' (``max Russell 200'') means we use largest 200 stocks in S\&P 500 index (Russell 2000 index) measured by market values as asset pool. 
				\item[b] The reported measures CR, Risk, SR, MDD and TR are out of sample cumulative portfolio return, out of sample risk, out of sample Sharpe ratio, maximum drawdown and portfolio turnover ratio, respectively. 
			\end{tablenotes}
		\end{threeparttable}
		\label{Table empirical monthly}
	\end{table}

	\begin{table}[htbp!]
		\centering
		\caption{Empirical daily application under rolling window scheme, out of sample period is from 01/03/2011 to 12/31/2013, sample size T = 400, weekly holding period.}
		\begin{threeparttable}
			{
				\begin{tabular}{cccccccccccc}
					\noalign{\global\arrayrulewidth1pt}\hline \noalign{\global\arrayrulewidth0.4pt}
					& \multicolumn{5}{c}{max SP 200}   &         & \multicolumn{5}{c}{max Russell 200}      \\ \cline{3-5} \cline{8-11}
					& CR     & Risk    & SR      & MDD   & TR  &   & CR    & Risk    & SR     & MDD   & TR     \\ \hline
					\multicolumn{12}{c}{Without transaction   cost}                                                                                                        \\           
					R-MVP                 & 0.400  & 0.00566 & 0.0942  & 0.074 & -  &  & 0.386  & 0.00688 & 0.0747  & 0.094 & -    \\
					Sample                & 0.021  & 0.00682 & 0.0041  & 0.223 & -  &  & 0.185  & 0.00857 & 0.0288  & 0.202 & -    \\
					1/N                   & 0.508  & 0.01196 & 0.0567  & 0.256 & -  &  & 0.409  & 0.01606 & 0.0339  & 0.395 & -    \\
					Linear                & 0.134  & 0.00617 & 0.0290  & 0.140 & -  &  & 0.263  & 0.00774 & 0.0453  & 0.152 & -    \\
					Nonlinear             & 0.267  & 0.00567 & 0.0629  & 0.080 & -  &  & 0.319  & 0.00753 & 0.0564  & 0.146 & -    \\
					POET             & 0.396  & 0.00567 & 0.0932  & 0.075 & -  &  & 0.370  & 0.00694 & 0.0711  & 0.098 & -    \\
					MVP-UF                & 0.396  & 0.00567 & 0.0932  & 0.075 & -  &  & 0.370  & 0.00694 & 0.0711  & 0.098 & -    \\
					M-portfolio           & 0.043  & 0.00738 & 0.0078  & 0.236 & -  &  & 0.226  & 0.00946 & 0.0319  & 0.204 & -    \\
					RCOV                   & 1.299  & 0.02250 & 0.0770  & 0.299 & -  &  & 0.589  & 0.01167 & 0.0674  & 0.156 & -    \\ 
                    eFM-COV &	33.291&	1.15642&	0.0392&	7.220	&-	&	&2.125&	0.24704&	0.0117&	3.415&	- \\ \hline
					\multicolumn{12}{c}{With transaction   cost}                                                                                                    \\
					R-MVP                 & 0.320  & 0.00567 & 0.0753  & 0.078 & 0.40 & & 0.361  & 0.00687 & 0.0701  & 0.096 & 0.16 \\
					Sample                & -0.158 & 0.00687 & -0.0306 & 0.307 & 0.90 & & 0.064  & 0.00855 & 0.0100  & 0.216 & 0.54 \\
					1/N                   & 0.507  & 0.01196 & 0.0565  & 0.256 & 0.01 & & 0.408  & 0.01605 & 0.0345  & 0.395 & 0.01 \\
					Linear                & 0.019  & 0.00620 & 0.0040  & 0.178 & 0.58 & & 0.190  & 0.00773 & 0.0327  & 0.161 & 0.33 \\
					Nonlinear             & 0.205  & 0.00568 & 0.0481  & 0.089 & 0.31 & & 0.242  & 0.00751 & 0.0430  & 0.150 & 0.36 \\
					POET             & 0.316  & 0.00567 & 0.0743  & 0.079 & 0.40 & & 0.341  & 0.00693 & 0.0656  & 0.101 & 0.18 \\
					MVP-UF                & 0.316  & 0.00567 & 0.0743  & 0.079 & 0.40 & & 0.341  & 0.00693 & 0.0656  & 0.101 & 0.18 \\
					M-portfolio           & -0.330 & 0.00749 & -0.0587 & 0.443 & 1.86 & & -0.017 & 0.00945 & -0.0024 & 0.270 & 1.15 \\
					RCOV                   & -1.419 & 0.02678 & -0.0707 & 1.631 & 13.63 & & 0.459  & 0.01196 & 0.0512  & 0.169 & 0.86
					\\
                        eFM-COV	&23.264&	1.29198&	0.0240&	22.238&	112.16&	&	-1.440&	0.24937&	-0.0077&	4.395&	38.60 \\
					\noalign{\global\arrayrulewidth1pt}\hline \noalign{\global\arrayrulewidth0.4pt}    
				\end{tabular}	
			}
			\begin{tablenotes}
				\item[a] see notes in Table \ref{Table empirical monthly}.
			\end{tablenotes}
		\end{threeparttable}
		\label{Table empirical weekly}
	\end{table}
	
	The results with monthly and weekly holding periods are reported in Tables \ref{Table empirical monthly} and \ref{Table empirical weekly}, respectively.  The number of common factors and thresholding parameter values are computed at the first decision node and are kept fixed. For threshold value $\tau$ of Huber loss function, we take 0.9-th quantile of sequence $||r_t - B^{(i-1)}F_t^{(i-1)}||$ at each step of algorithm for R-MVP.  The soft threshold function is applied for  R-MVP and POET with threshold from cross-validation method. For others, threshold in M-portfolio is set to 0.0001 as recommended by \cite{demiguel2009or}, the tuning parameter of MVP-UF is  selected by cross-validation procedure proposed by \cite{ding2021high}, the parameter values of Huber loss threshold and soft  threshold function for RCOV and eFM-COV are specified according to their simulation settings.

From both tables, one can see that R-MVP has robust results both with and without transaction costs during the studied periods. Specifically, R-MVP achieves the highest Sharpe ratio in most cases, and its risks are also very close to the lowest level. For cumulative returns, R-MVP has a considerable advantage over other methods except for the ``1/N'' and RCOV strategies, which both undertake much higher risk than R-MVP. RCOV does not seem to have a stable performance, it is very good when asset pool consists of Russell 200, but very bad for the 200 largest stocks in S\&P 500 index. E.g. its risk is approximately 0.02 which is the twice as much as that of the ``1/N'' strategy, and its turnover ratio is quite substantial at 15.12. Furthermore, R-MVP enjoys the lowest maximum drawdown. The ``1/N'' strategy, on the other hand, has very large MDD. The turnover ratio of R-MVP and POET-based portfolios are close to each other, but R-MVP has overall smaller turnover. E.g., in the cases of 200 largest Russell 2000 firms with monthly rebalancing, the turnover ratio of R-MVP is 0.37, which is lower than that of POET-based portfolio 0.41. The M-portfolio, which is also established on robust method, does not perform well in high-dimensional problems. POET and MVP-UF have good out of sample performances in terms of Sharpe ratio and risks, although their performances are topped by R-MVP.

\subsection{Comparison of R-MVP and POET}
To get direct insight on the robustness performance of the proposed R-MVP,  we provide time plots of the number of individual outliers (upper panel) and the difference between cumulative returns of  R-MVP and POET-based portfolio (lower panel, weekly rebalancing, $p = 200$) in Figure \ref{Figure cumulative plot}. 
We use 144 stocks with complete observations during 05/2009 - 12/2013 from the largest 200 Russell 2000 stocks (we do the same for 200 largest stocks in S\&P 500 index,  which has 196 stocks with complete data) to count the number of heterogeneous outliers, which is shown  by upper panel of Figure \ref{Figure cumulative plot}. 
It shows that the return data during 07/2011-01/2012 (gray shadow region of the upper panel) has many outliers. Furthermore, we highlight the time period of decision nodes where the historical data contains periods 07/2011-01/2012. One can observe that the gap in cumulative return tends to increase in the investment period. A similar phenomenon can be observed from the results with the monthly holding period. The increment part can be regarded as the contributions of robust PCA. Moreover, it seems that the increasing performance gap of small-cap stock constituted portfolio is larger than that of large-cap based portfolio. This can be potentially explained by the fact that small-cap stocks returns are more volatile than large-cap stocks. We leave future studies for a more thorough investigation.
	\begin{figure}[h!]
		\centering
		\includegraphics[width=16cm]{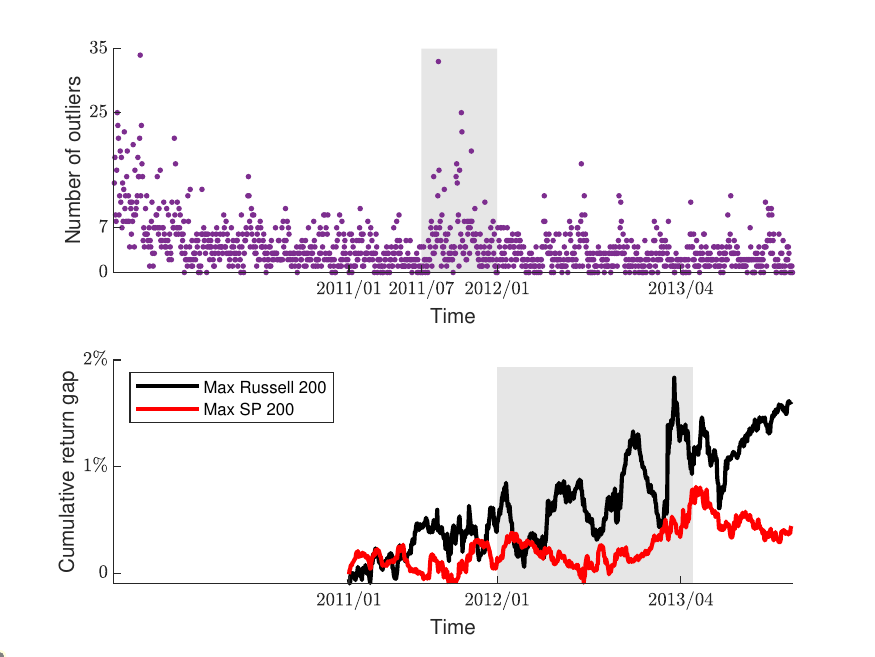}
		\caption{ Time plots of the number of heterogeneous outliers  and cumulative excess return gaps between R-MVP based portfolio and POET based portfolio.}
		\label{Figure cumulative plot}
	\end{figure}	
	
	\subsection{Long-term Performance}
	In this subsection, we further exhibit the out-of-sample performance of the proposed R-MVP over a  longer period of time. To be specific, we continue to examine the same investment universe as in the previous subsections, but over a longer out-of-sample periods: (1) largest 200 stocks in S\&P index, out of sample period is from 01/03/2006 to 12/31/2014; (2) largest 200 stocks in Russell 2000 index, out of sample period is from 01/03/2011 to 12/31/2019. The length of used historical returns is 400, and portfolios are rebalanced monthly. We report the results of cumulative return, risk, Sharpe ratio, maximum drawdown and turnover ratio with and without transaction costs in Table \ref{Table empirical monthly longrun}. It shows our strategy has good performance in the long term. 
	The R-MVP is the best overall performer in Sharpe ratio. It achieves a good balance between satisfactory returns while the risk is controlled at the near-minimum level. For example, R-MVP has the largest SR 0.0806 and CR 1.19, the lowest risk 0.00657 when investing in scenario (2). ``1/N'' strategy gets higher cumulative return than R-MVP, but undertaking a much higher risk and maximum drawdown when investing in S\&P components.
	And those metrics of RCOV and eFM-COV are not on par with other portfolios due to large risk exposures and hence the drawdown. Also one can observe that R-MVP outperforms POET-based portfolio.
	
	\begin{table}[htbp!]
		\centering
		\caption{Empirical daily application under rolling window scheme, (1) largest 200 stocks in S\&P 500 index, out of sample period is from 01/03/2006 to 12/31/2015; (2) largest 200 stocks in Russell 2000 index, out of sample period is from 01/03/2011 to 12/31/2019; sample size T = 400 for both cases, monthly holding period.}
		\begin{threeparttable}
			{
				\begin{tabular}{cccccccccccc}
					\noalign{\global\arrayrulewidth1pt}\hline \noalign{\global\arrayrulewidth0.4pt}
					& \multicolumn{5}{c}{max SP 200}   &         & \multicolumn{5}{c}{max Russell 200}      \\ \cline{3-5} \cline{8-11}
					& CR     & Risk    & SR      & MDD   & TR  &  & CR    & Risk    & SR     & MDD   & TR     \\ \hline
					\multicolumn{12}{c}{Without transaction   cost}                                                                                                        \\           
					R-MVP       & 0.722   & 0.00805 & 0.0399  & 0.341  & -     &  & 1.191  & 0.00657 & 0.0806  & 0.118  &       \\
Sample      & 0.441   & 0.00897 & 0.0219  & 0.443  & -     &  & 0.942  & 0.00778 & 0.0539  & 0.211  & -     \\
1/N         & 1.008   & 0.01544 & 0.0290  & 0.802  & -     &  & 0.768  & 0.01286 & 0.0266  & 0.396  & -     \\
linear      & 0.521   & 0.00813 & 0.0285  & 0.390  & -     &  & 1.079  & 0.00710 & 0.0676  & 0.160  & -     \\
nonlinear   & 0.627   & 0.00788 & 0.0354  & 0.384  & -     &  & 1.128  & 0.00703 & 0.0714  & 0.156  & -     \\
POET        & 0.693   & 0.00806 & 0.0382  & 0.374  & -     &  & 1.146  & 0.00660 & 0.0773  & 0.124  & -     \\
MVP-UF      & 0.693   & 0.00806 & 0.0382  & 0.374  & -     &  & 1.173  & 0.00662 & 0.0789  & 0.124  & -     \\
M-portfolio & -0.006  & 0.00951 & -0.0003 & 0.691  & -     &  & 1.122  & 0.00843 & 0.0592  & 0.263  & -     \\
RCOV        & -0.361  & 0.04232 & -0.0038 & 2.487  & -     &  & 1.156  & 0.00782 & 0.0658  & 0.195  & -     \\
eFM-COV     & -4.196  & 0.08466 & -0.0221 & 6.324  & -     &  & 2.382  & 0.48559 & 0.0022  & 33.073 & -  \\ \hline
     
					\multicolumn{12}{c}{With transaction   cost}                                                                                                    \\
					R-MVP       & 0.550   & 0.00803 & 0.0305  & 0.416  & 1.07  &  & 1.1039 & 0.00653 & 0.0752  & 0.119  & 0.48  \\
Sample      & 0.189   & 0.00894 & 0.0094  & 0.544  & 1.72  &  & 0.855  & 0.00842 & 0.0452  & 0.322  & 1.17  \\
1/N         & 1.030   & 0.01540 & 0.0298  & 0.780  & 0.01  &  & 0.813  & 0.01284 & 0.0282  & 0.396  & 0.02  \\
linear      & 0.341   & 0.00809 & 0.0188  & 0.471  & 1.12  &  & 0.967  & 0.00708 & 0.0608  & 0.163  & 0.65  \\
nonlinear   & 0.492   & 0.00783 & 0.0280  & 0.449  & 0.78  &  & 1.009  & 0.00700 & 0.0642  & 0.159  & 0.71  \\
POET        & 0.519   & 0.00804 & 0.0287  & 0.449  & 1.08  &  & 1.049  & 0.00657 & 0.0711  & 0.125  & 0.51  \\
MVP-UF      & 0.519   & 0.00804 & 0.0287  & 0.449  & 1.08  &  & 1.074  & 0.00658 & 0.0726  & 0.125  & 0.54  \\
M-portfolio & -0.382  & 0.00950 & -0.0179 & 0.890  & 2.96  &  & 0.855  & 0.00842 & 0.0452  & 0.322  & 1.98  \\
RCOV        & -2.321  & 0.04291 & -0.0241 & 3.511  & 16.46 &  & 0.917  & 0.00781 & 0.0523  & 0.269  & 1.98  \\
eFM-COV     & -10.568 & 0.08669 & -0.0543 & 11.284 & 38.16 &  & -3.071 & 0.47468 & -0.0029 & 32.148 & 85.62
					\\
					\noalign{\global\arrayrulewidth1pt}\hline \noalign{\global\arrayrulewidth0.4pt}    
				\end{tabular}	
			}
			\begin{tablenotes}
				\item[a] see notes in Table \ref{Table empirical monthly}.
			\end{tablenotes}
		\end{threeparttable}
		\label{Table empirical monthly longrun}
	\end{table}
	
	Furthermore, we visualize the out of sample cumulative excess returns in Figure \ref{Figure SP and Ru}. In the  plot, we stop drawing the line of cumulative returns once it  reaches -50\%,   due to the empirical observations that the average maximum drawdown of delisted funds is approximately 40\%. We do not plot the linear shrinkage based portfolio and the Kendall's tau covariance estimator based portfolio (eFM-COV) since they are dominated by nonlinear shrinkage based portfolio. The MVP-UF strategy in scenario (1) is exactly the same with POET based portfolio. Hence, we do not plot it as well. It is evident that the R-MVP exhibits superior overall performance in the long investment horizon. If we zoom in the period from 2008 to 2010, which includes the financial crisis, it can be observed that our R-MVP is much more robust than others and exhibits the lowest drawdown, while the cumulative returns of  ``1/N'' portfolio and M-portfolio exceed -50\%. 
	
	\begin{figure}[h!]
		\centering
		\includegraphics[width=18cm,height =18cm]{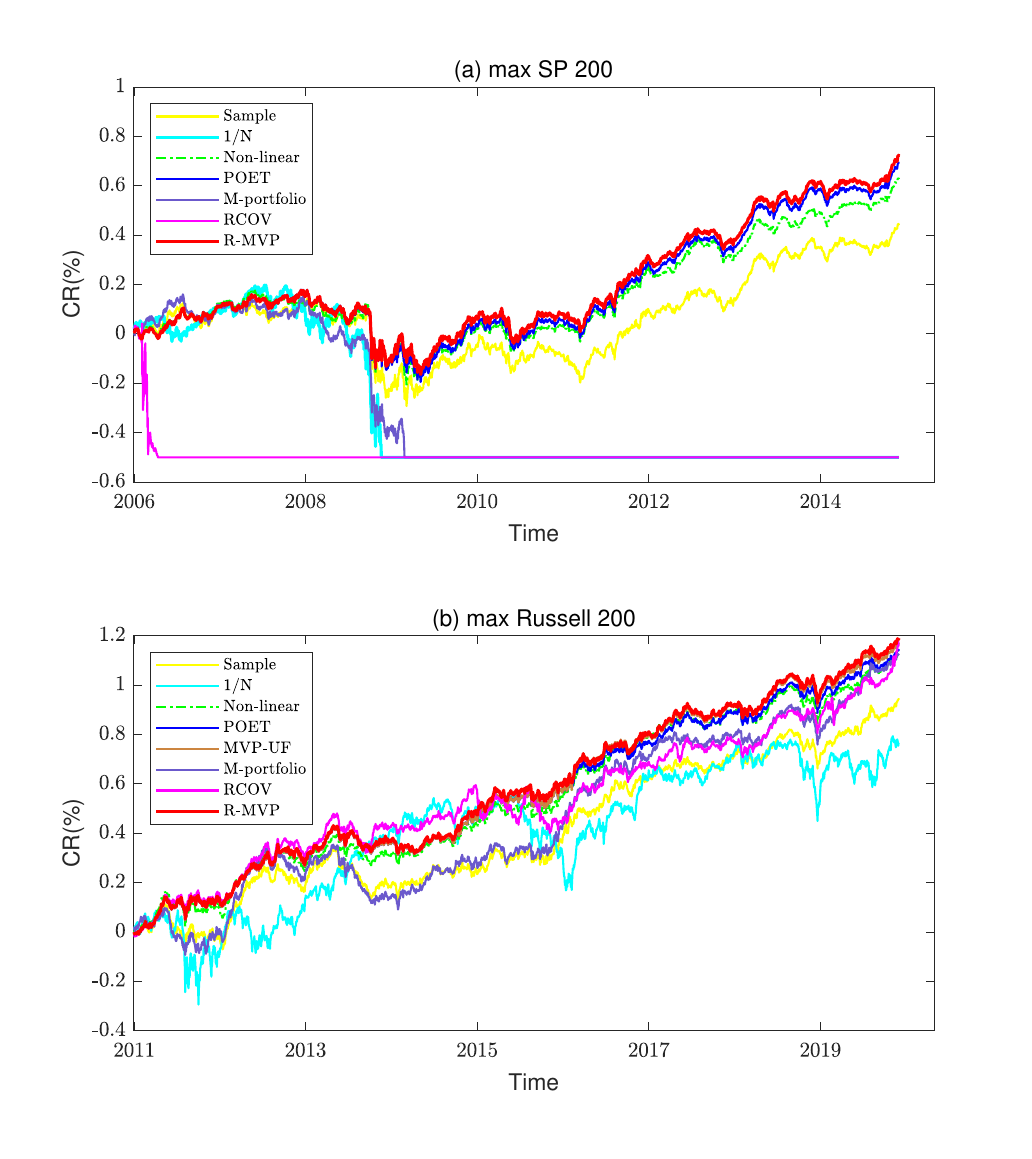}
  
		\caption{ Time plots of cumulative excess return of considered portfolios. Panel (a): largest 200 stocks in S\&P 500 index, time period 01/03/2006-12/31/2014; Panel (b): largest 200 stocks in Russell 2000 index, time period 01/03/2011-12/31/2019.}
		\label{Figure SP and Ru}
	\end{figure}

\section{Conclusion}\label{concl}
This paper proposes a robust minimum variance portfolio that allows for outliers in the form of different shocks. Our method is adaptive to global or idiocyncratic shocks data, utilizing the robust PCA for approximate factor model estimation and a thresholding method for the residual covariance matrix. We develop theorems of estimation consistency and the desired properties of the optimized portfolio. Simulation studies and real data analysis show the robust and superior performance of the proposed portfolio in various outlier settings.

\newpage
	
 \begin{appendices}
	
	\setcounter{page}{1}
	\counterwithin{figure}{section}
	\counterwithin{table}{section}
	\counterwithin{assumption}{section}
	\counterwithin{lemma}{section}
	
	{\Large \bf 
		\begin{center}
			Supplementary Material to ``Shocks-adaptive Robust Minimum Variance Portfolio for a Large Universe of Assets''
		\end{center}
	}

	This supplementary material consists of two parts. Appendix A offers some basic lemmas that are useful for proving the main results in the paper. Appendix B presents detailed proofs for Lemma 1 and Theorems 1-4 in the main paper.

	Throughout the appendix, $\lambda_{max}(A)$ and $\lambda_{min}(A)$ take the largest and smallest eigenvalue of matrix $A$, respectively. $\|A\|$ and $\|A\|_{F}$ are spectral norm and Frobenius norm  of matrix $A$, defined respectively by $\|A\| = \lambda_{max}^{1/2}(A^\top A)$ and $\|A\|_F = tr^{1/2}(A^\top A)$. And when $A$ is a vector, $\| A\|$ and $\|A\|_F$ are equal to Euclidean norm. $C$ is some positive constant that may change from line to line.

	\section{Basic Lemmas}
	In this section, we first derive the asymptotic results for the estimations from principal component analysis of return data. These results will then be applied to the transformed returns $\tilde{R}$ in the next section.
	
	Let us begin by considering the asset returns following the factor structure:
	\begin{eqnarray}
		r_t=BF_t+e_t, \ \ \ t=1, 2, \ldots, T, 
		\label{eq: vector form app}
	\end{eqnarray}
	where $r_t=\left(r_{1t}, r_{2t}, \ldots, r_{pt}\right)^{\top}$ is the vector of returns for $p$ assets, $B=\left(b_{1}, b_{2}, \ldots, b_{p}\right)^{\top}$ is the factor loading matrix, and $e_t=\left(e_{1t}, e_{2t}, \ldots, e_{pt}\right)^{\top}$. Further, let $\Sigma_e$ denote the covariance matrix of the error terms, with each element denoted by $\sigma_{e,ij}$. By vector form of factor structure \eqref{eq: vector form app}, we have the matrix form $$R = BF^\top + \mathcal{E},$$ where $R = (r_1,\ldots,r_T)$, $B^\top = (b_1,\ldots,b_p)$,  $F = (F_1,\ldots,F_T)^\top$ and $\mathcal{E} = (e_1,\ldots, e_T)$. Thus $R^\top = FB^\top + \mathcal{E}^\top$.  Let $Y = R^\top = (Y_1,\ldots,Y_p)$,   $\mathcal{E}^\top = (\varepsilon_1,\ldots,\varepsilon_p)$,  $\varepsilon_s = (\varepsilon_{s1},\ldots,\varepsilon_{sT})^\top$ and $\varepsilon_{st} = e_{st}$, we then have
	\begin{align}
		Y_i = F b_i + \varepsilon_i. 
		\label{eq: Appendix reform factor model}
	\end{align}

	We consider the following optimization problem: 
	\begin{align}
		\label{eq: PCA objective appendix}
		&argmin_{F,B} \| Y - FB^\top \|_F^2
		\\
		& \frac{1}{p}B^\top B = I_m, \  F^\top F \text{\ is diagonal.} \notag
	\end{align}
	It has been shown that the columns of estimated factor loading $\hat{B} = (\hat{b}_1,\ldots,\hat{b}_p)^\top$ are $\sqrt{p}$ times the eigenvectors corresponding to the $m$ largest eigenvalues of the $p \times p$ matrix $Y^\top Y$ where $\hat{b}_i = (\hat{b}_{1i},\ldots,\hat{b}_{mi})^\top$, and $\hat{F} = p^{-1}Y\hat{B}$ \citep{bai2002determining,bai2003}.

	By the same steps of (A.1) in \cite{bai2003}, we  have the following identity:
	\begin{align}
		\hat{b}_i - Hb_i = (V/T)^{-1}\left(\frac{1}{p}\sum_{s=1}^p\hat{b}_s E(\varepsilon_s^\top\varepsilon_i)/T + \frac{1}{p}\sum_{s=1}^p \hat{b}_s\zeta_{si} + \frac{1}{p}\sum_{s=1}^T\hat{b}_s\eta_{si} + \frac{1}{p}\sum_{s=1}^p\hat{b}_s\xi_{si}\right)
		\label{eq: appendix b-Hb}
	\end{align}
	where  $V$ is the $m \times m$ diagonal matrix of the first $m$ largest eigenvalues of $p^{-1}Y Y^\top$ in descending order, $H = \frac{1}{p}V^{-1}\hat{B}^\top BF^\top F$, $\zeta_{si} = \varepsilon_s^\top \varepsilon_i/T - E(\varepsilon_s^\top \varepsilon_i)/T$, $\eta_{si} = b_s^\top \sum_{t=1}^T F_t e_{it}/T$, and $\xi_{si} = b_i^\top \sum_{t=1}^T F_te_{st}/T$. 
	
	\begin{assumption}
		\label{assum: Appendix basic}
		\begin{itemize}
			\item[(i)] $\{e_t, F_t\}_{t\geq 1}$ is strictly stationary, and $E(e_{it}) = E(e_{it}F_{jt}) = 0$ for all $i \leq p, j \leq m$ and $t\leq T$. 
			\item[(ii)] $\max_i \sum_{s=1}^p|E(\varepsilon_s^\top\varepsilon_i)|/T = O(1) $.  
			\item[(iii)]  For all $s,i\leq p$, $$E\left(\varepsilon_s^\top \varepsilon_i - E(\varepsilon_s^\top \varepsilon_i ) \right)^4 = E\left[\frac{1}{T}\sum_{t=1}^T(\varepsilon_{st} \varepsilon_{it} ) - E(\varepsilon_{st} \varepsilon_{it})  \right]^4 =O(T^2).$$
			\item[(iv)] There exist $M >0$ such that for all $i\leq p$, $E\|b_i\|^2 < M$.
			
			\item[(v)] There exist $\varphi_1, \varphi_2 > 0$ and $d_1, d_2 > 0$, such that for any $s>0$, $i\leq p$ and $j\leq m$,
			$$
			\probP(|{e}_{it}| > s) \leq  exp(-(s/d_1)^{\varphi_1}),  \probP(|{F}_{jt}| > s) \leq  exp(-(s/d_2)^{\varphi_2}).
			$$
			
			\item[(vi)] $\{e_t, F_t\}$ is strong mixing process: there exists $\varphi_3 >0$ and $C>0$ such that for all $t\in \mathbb{Z}^+$(the set of positive integers),
			$$
			\alpha(t) \leq exp(-Ct^{\varphi_3})
			$$
			where $\alpha$ is $\alpha$-mixing coefficient defined based on $\sigma$-algebras generated by $\{e_t, F_t\}$.
			\item[(vii)] $\tilde{\varphi}^{-1} = 3\varphi_1^{-1} + \varphi_3^{-1} > 1$, $3\varphi_2^{-1} + \varphi_3^{-1} >1$. Let $\varphi = 1.5\varphi_1^{-1} + 1.5\varphi_2^{-1} + \varphi_3^{-1}$,  $(log p)^{2/\varphi - 1} = o(T)$ and $(logp)^{2/\tilde{\varphi}-1} = o(T)$.  
		\end{itemize}
	\end{assumption}

	\begin{lemma}
		\label{lemma basic C7} Suppose Assumption \ref{assum: Appendix basic} holds,
		for all $j \leq m$,
		\begin{enumerate}
			\item[(i)] $\frac{1}{p}\sum_{i = 1}^p\left( \frac{1}{p}\sum_{s=1}^p \hat{b}_{js}E(\varepsilon_s^\top\varepsilon_i)/T \right)^2 = O_p\left(\frac{1}{p}\right)$,
			\item[(ii)]
			$\frac{1}{p}\sum_{i = 1}^p\left(\frac{1}{p}\sum_{s=1}^p\hat{b}_{js}\zeta_{si} \right)^2 = O_p\left(\frac{1}{T}\right)$,
			\item[(iii)]
			$\frac{1}{p}\sum_{i = 1}^p\left(\frac{1}{p}\sum_{s=1}^p\hat{b}_{js}\eta_{si} \right)^2 = O_p\left(\frac{logp}{T}\right)$,
			\item[(iv)]
			$\frac{1}{p}\sum_{i = 1}^p\left(\frac{1}{p}\sum_{s=1}^p\hat{b}_{js}\xi_{si} \right)^2 = O_p\left(\frac{logp}{T}\right)$.
		\end{enumerate}
	\end{lemma}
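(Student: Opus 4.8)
The plan is to dispatch all four statements through a common two-step scheme: first reduce each quadratic form to a ``diagonal'' average by Cauchy--Schwarz, exploiting the normalization of the estimated loadings, and then control the resulting average either by a covariance-summability argument (parts (i)--(ii)) or by a maximal inequality for weakly dependent sums (parts (iii)--(iv)). The fact used throughout is that, since $\hat{B}$ consists of $\sqrt{p}$ times orthonormal eigenvectors, $p^{-1}\hat{B}^\top\hat{B}=I_m$, so $p^{-1}\sum_{s=1}^p\hat{b}_{js}^2=1$ for every $j\le m$. Consequently, for any array $\{a_{si}\}$, Cauchy--Schwarz over $s$ gives
\[
\left(\frac{1}{p}\sum_{s=1}^p\hat{b}_{js}a_{si}\right)^2\le\left(\frac{1}{p}\sum_{s=1}^p\hat{b}_{js}^2\right)\left(\frac{1}{p}\sum_{s=1}^p a_{si}^2\right)=\frac{1}{p}\sum_{s=1}^p a_{si}^2,
\]
so in each part it suffices to bound $p^{-1}\sum_{i=1}^p\,p^{-1}\sum_{s=1}^p a_{si}^2$ with $a_{si}$ taken to be $E(\varepsilon_s^\top\varepsilon_i)/T$, $\zeta_{si}$, $\eta_{si}$, and $\xi_{si}$, respectively.

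For part (i), write $\sigma_{si}=E(\varepsilon_s^\top\varepsilon_i)/T$. Assumption~\ref{assum: Appendix basic}(ii) gives $\max_i\sum_s|\sigma_{si}|=O(1)$, whence $\sum_s\sigma_{si}^2\le(\max_s|\sigma_{si}|)\sum_s|\sigma_{si}|=O(1)$ uniformly in $i$; averaging over $i$ yields $p^{-1}\sum_i p^{-1}\sum_s\sigma_{si}^2=O(1/p)$, which is the claim. For part (ii), note $\zeta_{si}=(\varepsilon_s^\top\varepsilon_i-E(\varepsilon_s^\top\varepsilon_i))/T$, so Assumption~\ref{assum: Appendix basic}(iii) gives $E\zeta_{si}^4=O(T^{-2})$ and hence $E\zeta_{si}^2=O(T^{-1})$ uniformly by the Lyapunov inequality. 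Taking expectations, $E\bigl[p^{-1}\sum_i p^{-1}\sum_s\zeta_{si}^2\bigr]=O(1/T)$, and Markov's inequality delivers the $O_p(1/T)$ rate.

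Parts (iii) and (iv) are where the real work lies. For (iii), $\eta_{si}=b_s^\top(T^{-1}\sum_t F_t e_{it})$, so $\eta_{si}^2\le\|b_s\|^2\|T^{-1}\sum_t F_t e_{it}\|^2$, and using $p^{-1}\sum_s\|b_s\|^2=O_p(1)$ (from Assumption~\ref{assum: Appendix basic}(iv) via Markov),
\[
\frac{1}{p}\sum_{i=1}^p\frac{1}{p}\sum_{s=1}^p\eta_{si}^2\le O_p(1)\cdot\frac{1}{p}\sum_{i=1}^p\Bigl\|\frac{1}{T}\sum_{t=1}^T F_t e_{it}\Bigr\|^2\le O_p(1)\cdot\max_{i\le p}\Bigl\|\frac{1}{T}\sum_{t=1}^T F_t e_{it}\Bigr\|^2.
\]
The crux is the maximal inequality $\max_{i\le p,\,j\le m}|T^{-1}\sum_t F_{jt}e_{it}|=O_p(\sqrt{\log p/T})$. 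Since $E(F_{jt}e_{it})=0$ by Assumption~\ref{assum: Appendix basic}(i), the product $F_{jt}e_{it}$ inherits exponential-type tails from (v), and $\{e_t,F_t\}$ is $\alpha$-mixing by (vi), a Bernstein-type inequality for weakly dependent sums applies; the rate conditions in (vii) are exactly what guarantees that a union bound over the $pm$ indices still yields the $\sqrt{\log p/T}$ rate. This gives $\max_i\|T^{-1}\sum_t F_t e_{it}\|^2=O_p(\log p/T)$ (recall $m$ is fixed), proving (iii). Part (iv) is identical with the indices swapped: $\xi_{si}=b_i^\top(T^{-1}\sum_t F_t e_{st})$, so $\xi_{si}^2\le\|b_i\|^2\|T^{-1}\sum_t F_t e_{st}\|^2$, and the same maximal inequality (now ranging over $s$) together with $p^{-1}\sum_i\|b_i\|^2=O_p(1)$ yields the $O_p(\log p/T)$ bound.

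The main obstacle is precisely the maximal inequality invoked in parts (iii)--(iv); everything else is Cauchy--Schwarz plus elementary moment bounds and Markov's inequality. I expect this inequality to be established as a separate preliminary lemma (via the Bernstein inequality for $\alpha$-mixing sequences in the spirit of \cite{fan2011high,fan2013large}), so that here it can simply be quoted, and the four parts then follow by the uniform reduction described above.
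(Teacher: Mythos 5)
Your proposal is correct, and for parts (i), (iii) and (iv) it is essentially the paper's own proof: the same use of the normalization $p^{-1}\sum_{s}\hat{b}_{js}^2=1$, the same Cauchy--Schwarz reductions, and the same key maximal inequality $\max_{i\le p,\,j\le m}\left|T^{-1}\sum_{t}F_{jt}e_{it}\right|=O_p(\sqrt{\log p/T})$, which the paper obtains exactly as you anticipate, by quoting Lemma B.1 of \cite{fan2011high} (a Bernstein-type inequality for $\alpha$-mixing, exponential-tailed data combined with a union bound over the $pm$ indices). The only place where you genuinely deviate is part (ii). The paper does not apply the diagonal reduction there; it keeps the cross-product structure, writing the quantity as $p^{-3}\sum_{s,l}\hat{b}_{js}\hat{b}_{jl}\left(\sum_{i}\zeta_{si}\zeta_{li}\right)$, applies Cauchy--Schwarz over the pairs $(s,l)$, and then bounds $E\sum_{s,l}\left(\sum_{i}\zeta_{si}\zeta_{li}\right)^2\le p^4\max_{s,i}E\zeta_{si}^4=O(p^4/T^2)$, which yields $O_p(1/T)$. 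Your route instead collapses the square to $p^{-2}\sum_{i,s}\zeta_{si}^2$ and needs only $E\zeta_{si}^2=O(1/T)$, deduced from the fourth-moment condition by Lyapunov's inequality plus Markov; this is simpler, uses a strictly weaker consequence of Assumption \ref{assum: Appendix basic}(iii), and delivers the same $O_p(1/T)$ rate, so nothing is lost. The paper's finer decomposition is inherited from the analogous lemma in \cite{bai2003}, where retaining the cross-terms can matter for sharper rates; for the rate actually claimed in this lemma it buys nothing extra, so your shortcut is perfectly adequate.
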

	
	\textbf{Proof:} (i) First, we have $\forall  j$, $\sum_{s=1}^p \hat{b}_{js}^2= p$. By Cauchy-Schwarz inequality,
	\begin{align*}
		&\frac{1}{p}\sum_{i = 1}^p\left( \frac{1}{p}\sum_{s=1}^p \hat{b}_{js}E(\varepsilon_s^\top\varepsilon_i)/T \right)^2 \leq  \frac{1}{p}\sum_{i = 1}^p \frac{1}{p} \sum_{s=1}^p (E(\varepsilon_s^\top\varepsilon_i)/T)^2
		\\ &
		\leq \max_{i \leq p} \frac{1}{p} \sum_{s=1}^p (E(\varepsilon_s^\top\varepsilon_i)/T)^2 \leq  \max_{i,s} | E(\varepsilon_s^\top\varepsilon_i)/T|\max_i \frac{1}{p} \sum_{s=1}^p |E(\varepsilon_s^\top\varepsilon_i)/T|.
		\\ &
		= O(p^{-1}). 
	\end{align*}
	where  the last equality holds by Assumption \ref{assum: Appendix basic}(ii).
	
	(ii) By using Cauchy-Schwarz inequality, 
	\begin{align*}
		\frac{1}{p}\sum_{i = 1}^p\left(\frac{1}{p}\sum_{s=1}^p\hat{b}_{js}\zeta_{si} \right)^2
		&= \frac{1}{p^3}
		\sum_{s=1}^p \sum_{l=1}^p \hat{b}_{js} \hat{b}_{jl}
		\left(\sum_{i=1}^p \zeta_{si}\zeta_{li}\right) \leq  \frac{1}{p^3}\left(\sum_{sl}\left(\hat{b}_{js} \hat{b}_{jl}\right)^2 \sum_{sl}  \left(\sum_{i=1}^p \zeta_{si}\zeta_{li}\right)^2\right)^{1/2}
		\\ &
		\leq \frac{1}{p^2} \left(\sum_{s=1}^p\sum_{l=1}^p  \left(\sum_{i=1}^p \zeta_{si}\zeta_{li}\right)^2\right)^{1/2}.
	\end{align*}
	Note that $E\left(\sum_{s=1}^p\sum_{l=1}^p  \left(\sum_{i=1}^p \zeta_{si}\zeta_{li}\right)^2\right) = p^2 E\left( \left(\sum_{i=1}^p \zeta_{si}\zeta_{li}\right)^2\right) \leq p^4 \max_{s,i}E|\zeta_{si}|^4$. By Assumption \ref{assum: Appendix basic}(iii), $max_{s,i} E\zeta_{si}^4 = O(T^{-2})$, it indicates that $\left(\sum_{s=1}^p\sum_{l=1}^p  \left(\sum_{i=1}^p \zeta_{si}\zeta_{li}\right)^2\right) = O_p(p^4/T^2)$ and thus yields the result.
	
	(iii) Similarly, we have
	\begin{align*}
		\frac{1}{p}\sum_{i = 1}^p\left(\frac{1}{p}\sum_{s=1}^p\hat{b}_{js}\eta_{si} \right)^2 & \leq \left\| \frac{1}{p}\sum_{s=1}^p \hat{b}_{js}b_s^\top\right\|^2 \frac{1}{p}\sum_{i=1}^p \left\|\frac{1}{T} \sum_{t=1}^T F_t e_{it} \right\|^2
		\\ & \leq \frac{1}{p}\sum_{i=1}^p \left\|\frac{1}{T} \sum_{t=1}^T F_t e_{it} \right\|^2
		\left(\frac{1}{p}\sum_{s=1}^p \hat{b}_{js}^2 \frac{1}{p}\sum_{s=1}^p \| b_s\|^2 \right).
	\end{align*}
	By applying Assumption \ref{assum: Appendix basic}(iv),  first we have $\frac{1}{p}\sum_{s=1}^p \| b_s\|^2 = O_p(1)$. Then,  based on Assumption \ref{assum: Appendix basic}(i) and (v)-(vii),  following the Lemma B.1 of \cite{fan2011high}, we also have $\max_{i,j} \left|\frac{1}{T} \sum_{t=1}^T F_{jt} e_{it} \right| = O_p\left(\sqrt{\frac{logp}{T}}\right)$,
	and thus
	$max_i \left\|\frac{1}{T} \sum_{t=1}^T F_t e_{it} \right\|^2 \leq max_i \sum_{j=1}^m \left(\frac{1}{T}\sum_{t=1}^TF_{jt}e_{it} \right)^2 = O_p\left(\frac{logp}{T}\right)$.

	(iv) Similar to part (iii), we have 
	\begin{align*}
		\frac{1}{p}\sum_{i = 1}^p\left(\frac{1}{p}\sum_{s=1}^p\hat{b}_{js}\xi_{si} \right)^2 & = \frac{1}{p}\sum_{i = 1}^p\left| \frac{1}{p}\sum_{s=1}^p{b}_{i}^\top \sum_{t=1}^T F_t e_{st} \frac{1}{T}\hat{b}_{js} \right|^2 \leq  \frac{1}{p}\sum_{i = 1}^p \| b_i\|^2 \times \left\|\frac{1}{p}\sum_{s=1}^p\sum_{t=1}^T F_t e_{st} \frac{1}{T} \hat{b}_{js} \right\|^2
		\\ & 
		\leq O_p(1) \frac{1}{p}\sum_{s=1}^p \left\| \frac{1}{T}\sum_{t=1}^T F_t e_{st} \right\|^2\times \frac{1}{p}\sum_{s=1}^T \hat{b}_{js}^2 = O_p\left(\frac{logp}{T}\right).
	\end{align*}

	\begin{lemma}
		\label{lemma: appendix C8}
		\begin{enumerate}
			Suppose Assumption \ref{assum: Appendix basic} holds, 
			\item[(i)] $\max_{i\leq p} \left\| \frac{1}{Tp}\sum_{s=1}^p \hat{b}_s E(\varepsilon_s^\top \varepsilon_i) \right\| = O_p\left(\frac{1}{\sqrt{p}}\right)$,
			\item[(ii)] 
			$\max_{i\leq p} \left\| \frac{1}{p}\sum_{s=1}^p \hat{b}_s\zeta_{si} \right\| = O_p\left(\frac{p^{1/4}}{\sqrt{T}}\right)$,
			\item[(iii)] 
			$\max_{i\leq p} \left\| \frac{1}{p}\sum_{s=1}^p \hat{b}_s\eta_{si} \right\| = O_p\left(\sqrt{\frac{logp}{T}}\right)$,
			\item[(iv)] 
			$\max_{i\leq p} \left\| \frac{1}{p}\sum_{s=1}^p \hat{b}_s\xi_{si} \right\| = O_p\left(\sqrt{\frac{logp}{T}}\right)$.
		\end{enumerate}
		
	\end{lemma}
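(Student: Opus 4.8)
The plan is to treat all four bounds through one reduction and then deal with the dependence on $i$ separately in each case. Writing the target as $\frac{1}{p}\sum_{s=1}^{p}\hat b_s c_{si}$, with $c_{si}$ equal to $E(\varepsilon_s^\top\varepsilon_i)/T$, $\zeta_{si}$, $\eta_{si}$, $\xi_{si}$ respectively, Cauchy--Schwarz gives $\big\|\frac1p\sum_s\hat b_s c_{si}\big\|^2\le\big(\frac1p\sum_s\|\hat b_s\|^2\big)\big(\frac1p\sum_s c_{si}^2\big)$, and the identification constraint $p^{-1}\hat B^\top\hat B=I_m$ yields $\frac1p\sum_s\|\hat b_s\|^2=\frac1p\tr(\hat B^\top\hat B)=m=O(1)$. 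So for (i) and (ii) it suffices to control $\max_{i\le p}\frac1p\sum_s c_{si}^2$, while for (iii) and (iv) I would instead exploit a bilinear factorization. Conceptually these are the maximal-in-$i$ analogues of Lemma~\ref{lemma basic C7}, so the work is upgrading a bound averaged over $i$ to one uniform in $i$.

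For (i) the summand is deterministic, and $\frac1p\sum_s(E(\varepsilon_s^\top\varepsilon_i)/T)^2\le\max_{s,i}|E(\varepsilon_s^\top\varepsilon_i)/T|\cdot\max_i\frac1p\sum_s|E(\varepsilon_s^\top\varepsilon_i)/T|=O(1/p)$ by Assumption~\ref{assum: Appendix basic}(ii); this is already uniform in $i$, giving $O_p(p^{-1/2})$. Part (ii) is the step I expect to be the main obstacle, since the averaged bound $E\big(\frac1p\sum_s\zeta_{si}^2\big)=O(1/T)$ must be promoted to $\max_i\frac1p\sum_s\zeta_{si}^2=O_p(\sqrt p/T)$. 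I would set $W_i=\frac1p\sum_s\zeta_{si}^2$ and bound its second moment: $E W_i^2=p^{-2}\sum_{s,l}E(\zeta_{si}^2\zeta_{li}^2)\le p^{-2}\sum_{s,l}(E\zeta_{si}^4)^{1/2}(E\zeta_{li}^4)^{1/2}=O(T^{-2})$, using $\max_{s,i}E\zeta_{si}^4=O(T^{-2})$ from Assumption~\ref{assum: Appendix basic}(iii). Markov's inequality then gives $\probP(W_i>M\sqrt p/T)\le E W_i^2/(M\sqrt p/T)^2=O(1)/(M^2 p)$, and a union bound over the $p$ indices yields $\probP(\max_i W_i>M\sqrt p/T)=O(1/M^2)\to0$, so $\max_i W_i=O_p(\sqrt p/T)$ and hence $\max_i\|\frac1p\sum_s\hat b_s\zeta_{si}\|=O_p(p^{1/4}/\sqrt T)$. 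The delicate point is that the naive bound $\sum_i E W_i=O(p/T)$ is too weak; it is precisely the $O(T^{-2})$ fourth-moment control that makes the per-index tail probability summable after the union bound.

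For (iii) and (iv) I would drop Cauchy--Schwarz and instead factor out the $m$-vector of noise averages. For (iii), $\eta_{si}=b_s^\top g_i$ with $g_i=\frac1T\sum_t F_t e_{it}$, so $\frac1p\sum_s\hat b_s\eta_{si}=\big(\frac1p\hat B^\top B\big)g_i$; I bound $\|\frac1p\hat B^\top B\|\le\frac1p\|\hat B\|\,\|B\|=O_p(1)$, using $\|\hat B\|=\sqrt p$ (from $\hat B^\top\hat B=pI_m$) and $\|B\|\le\|B\|_F=O_p(\sqrt p)$ (from $\sum_i E\|b_i\|^2<pM$ in Assumption~\ref{assum: Appendix basic}(iv)), together with $\max_i\|g_i\|=O_p(\sqrt{\log p/T})$ via the maximal inequality $\max_{i,j}|\frac1T\sum_t F_{jt}e_{it}|=O_p(\sqrt{\log p/T})$ (Lemma~B.1 of \cite{fan2011high}, valid under the exponential-tail and mixing Assumptions~\ref{assum: Appendix basic}(v)--(vii)). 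For (iv), $\xi_{si}=b_i^\top g_s$, so $\frac1p\sum_s\hat b_s\xi_{si}=M_p b_i$ with $M_p=\frac1p\sum_s\hat b_s g_s^\top$; here $\|M_p\|_F\le\big(\frac1p\sum_s\|\hat b_s\|^2\big)^{1/2}\big(\frac1p\sum_s\|g_s\|^2\big)^{1/2}=O_p(T^{-1/2})$ since $E\|g_s\|^2=O(1/T)$, and the remaining factor $\max_i\|b_i\|$ is controlled by the loading moment condition in Assumption~\ref{assum: Appendix basic}(iv), which delivers the stated rate $O_p(\sqrt{\log p/T})$ (a fortiori if loadings are bounded). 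Thus (iii)--(iv) reduce to an operator-norm bound times a maximal (noise or loading) bound, the only subtlety being the correct placement of the $i$-index, which sits in the noise term for (iii) and in the loading for (iv).
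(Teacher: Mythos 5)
Your proposal is correct and follows essentially the same route as the paper's own proof: Cauchy--Schwarz together with Assumption \ref{assum: Appendix basic}(ii) for (i); the second-moment bound on $W_i=\frac{1}{p}\sum_{s=1}^p\zeta_{si}^2$ followed by Chebyshev/Markov and a Bonferroni union bound for (ii); and the bilinear factorizations $\left(\frac{1}{p}\hat{B}^\top B\right)g_i$ and $\left(\frac{1}{p}\sum_{s=1}^p\hat{b}_s g_s^\top\right)b_i$ combined with the maximal inequality $\max_{i,j}\left|\frac{1}{T}\sum_{t=1}^T F_{jt}e_{it}\right|=O_p\left(\sqrt{\log p/T}\right)$ of \cite{fan2011high} for (iii)--(iv). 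The only deviation is immaterial: in (iv) you bound $\left\|\frac{1}{p}\sum_{s=1}^p\hat{b}_s g_s^\top\right\|_F$ via the asserted (standard, mixing-based) moment bound $E\|g_s\|^2=O(1/T)$, whereas the paper reuses the already-established uniform bound $\max_{s\le p}\|g_s\|=O_p\left(\sqrt{\log p/T}\right)$, and both your argument and the paper's rely in the same way on treating $\max_{i\le p}\|b_i\|$ as $O_p(1)$ from the loading moment condition.
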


	\textbf{Proof:} (i) By the Cauchy-Schwarz inequality and the fact that $\frac{1}{p}\sum_{i=1}^p \| \hat{b}_i\|^2 = O_p(1)$, we have 
	\begin{align*}
		\max_{i\leq p} \left\| \frac{1}{Tp}\sum_{s=1}^p \hat{b}_s E(\varepsilon_s^\top \varepsilon_i) \right\| & \leq \max_{i\leq p}\left( \frac{1}{p}\sum_{s=1}^p\| \hat{b}_s\|^2\frac{1}{p} \sum_{s=1}^p E(\varepsilon_s^\top \varepsilon_i/T)^2\right)^{1/2}
		\\& \leq O_p(1) \max_{i\leq p}\left( \frac{1}{p} \sum_{s=1}^p E(\varepsilon_s^\top \varepsilon_i/T)^2\right)^{1/2}
		\\ & \leq O_p(1) \max_{s,i} \sqrt{|E(\varepsilon_s^\top \varepsilon_i/T)|} \max_{i} \left( \frac{1}{p} \sum_{s=1}^p |E(\varepsilon_s^\top \varepsilon_i/T)|\right)^{1/2} 
		\\ & = O_p(p^{-1/2}),
	\end{align*}
	where we apply Assumption \ref{assum: Appendix basic}(i) and (ii).
	
	(ii) Similarly, 
	\begin{align*}
		\max_{i\leq p} \left\| \frac{1}{p}\sum_{s=1}^p \hat{b}_s\zeta_{si} \right\| \leq  \max_{i\leq p} \frac{1}{p} \left( \sum_{s=1}^p \| \hat{b}_s\|^2 \sum_{s=1}^p \zeta_{si}^2 \right)^{1/2} \leq \left(O_p(1) \max_i \frac{1}{p}\sum_{s=1}^p \zeta_{si}^2 \right)^{1/2}.
	\end{align*}
	By Assumption \ref{assum: Appendix basic}(iii), $E\left( \frac{1}{p}\sum_{s=1}^p \zeta_{si}^2 \right)^2 \leq \max_{s,i}E(\zeta_{si}^4) =O(T^{-2})$. It then follows from the Chebyshev's inequality and Bonferroni's method that $\max_i \frac{1}{p}\sum_{s=1}^p \zeta_{si}^2 = O_p(\sqrt{p}/T)$.
	
	(iii) We have 
	\begin{align*}
		\max_{i\leq p} \left\| \frac{1}{p}\sum_{s=1}^p \hat{b}_s\eta_{si} \right\| \leq \left\| \frac{1}{p}\sum_{s=1}^p \hat{b}_sb_s^\top \right\| \max_i \left\| \frac{1}{T}\sum_{t=1}^T F_te_{it}  \right\| = O_p\left(\sqrt{\frac{logp}{T}}\right),  
	\end{align*}
	where the last equality holds by the fact that $\max_i\left\| \frac{1}{T}\sum_{t=1}^TF_te_{it}\right\| = O_p(\sqrt{\frac{logp}{T}})$ from the proof of Lemma \ref{lemma basic C7}(iii).

	(iv) Also, we have
	\begin{align*}
		\max_{i\leq p} \left\| \frac{1}{p}\sum_{s=1}^p \hat{b}_s\xi_{si} \right\| \leq \max_{i\leq p}\| b_i \|\times \left\|\frac{1}{p}\sum_{s=1}^p\sum_{t=1}^T F_t e_{st}\frac{1}{T}\hat{b}_s \right\| = O_p\left(\sqrt{\frac{logp}{T}}\right),
	\end{align*}
	where $\left\|\frac{1}{p}\sum_{s=1}^p\sum_{t=1}^T F_t e_{st}\frac{1}{T}\hat{b}_s \right\| = O_p\left(\sqrt{\frac{logp}{T}}\right)$ from the proof of Lemma \ref{lemma basic C7}(iv) and $\max_{i\leq p} E\|b_i \| = O_p(1)$ by Assumption \ref{assum: Appendix basic}(iv).

	\begin{assumption}
		\label{assum Appendix basic 2}
		\begin{enumerate}
			\item[(i)]  $B^\top B$ is diagonal and  $\| I_m -  p^{-1}B^\top B\|_F =O_p(p^{-1/2})$.
			\item[(ii)] $cov(F_t) = I_m$, and $\|T^{-1}F^\top F - I_m\| = o_p(1)$.
			\item[(iii)] There exists some positive constant $c_1$ such that $\|\Sigma_e\| \leq c_1$, where $\Sigma_e$ is covariance matrix of $e_t$.
		\end{enumerate}
	\end{assumption}

	\begin{lemma}
		\label{lemma appendix C4}
		Suppose Assumption \ref{assum Appendix basic 2} holds, 
		let $\hat{\lambda}_m$ denote the $m$-th largest eigenvalue of ${p}^{-1}YY^\top$, then $\hat{\lambda}_m > C_1T$ with probability approaching one for some constant $C_1 > 0$.
	\end{lemma}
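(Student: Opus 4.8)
The plan is to split $p^{-1}YY^\top$ into a rank-$m$ ``signal'' block plus lower-order perturbations, bound the $m$-th eigenvalue of the signal block below at rate $T$ using Assumption~\ref{assum Appendix basic 2}(i)--(ii), and show the perturbations are $o_p(T)$. Writing $Y^\top = BF^\top + \mathcal{E}$ with $\mathcal{E} = (e_1,\ldots,e_T)$, I would decompose
\begin{equation*}
p^{-1}YY^\top = \underbrace{p^{-1}F(B^\top B)F^\top}_{=:A} + \underbrace{p^{-1}FB^\top\mathcal{E} + p^{-1}\mathcal{E}^\top B F^\top}_{=:C+C^\top} + \underbrace{p^{-1}\mathcal{E}^\top\mathcal{E}}_{=:D}.
\end{equation*}
Since $D\succeq 0$, the Courant--Fischer min--max principle shows that adding $D$ cannot decrease any eigenvalue, and combined with Weyl's inequality this gives
\begin{equation*}
\hat{\lambda}_m = \lambda_m(A + C + C^\top + D) \ge \lambda_m(A + C + C^\top) \ge \lambda_m(A) - 2\|C\|.
\end{equation*}
It then suffices to show $\lambda_m(A) \ge T(1-o_p(1))$ and $\|C\| = o_p(T)$.

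For the signal, $A = F M_p F^\top$ with $M_p := p^{-1}B^\top B$ has rank at most $m$, and its nonzero eigenvalues coincide with those of the $m\times m$ matrix $M_p^{1/2}(F^\top F)M_p^{1/2} = T\,M_p^{1/2}(T^{-1}F^\top F)M_p^{1/2}$. Hence $\lambda_m(A) \ge T\,\lambda_{\min}(M_p)\,\lambda_{\min}(T^{-1}F^\top F)$. Assumption~\ref{assum Appendix basic 2}(i) yields $\|I_m - M_p\|_F = O_p(p^{-1/2})$, so $\lambda_{\min}(M_p)\to 1$, and Assumption~\ref{assum Appendix basic 2}(ii) yields $\lambda_{\min}(T^{-1}F^\top F)\to 1$ (this also guarantees that $F$ has full column rank, so $A$ genuinely has $m$ nonzero eigenvalues with probability approaching one). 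Therefore $\lambda_m(A) \ge T(1-o_p(1))$.

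For the perturbation I would bound $\|C\| \le p^{-1}\|F\|\,\|B^\top\mathcal{E}\|$, where $\|F\| = \lambda_{\max}^{1/2}(F^\top F) = O_p(T^{1/2})$ by Assumption~\ref{assum Appendix basic 2}(ii). The delicate factor is $\|B^\top\mathcal{E}\|$, which couples the loadings to the idiosyncratic errors; I would control its spectral norm crudely through the Frobenius norm and a first-moment bound. Since $e_t$ is zero-mean and stationary, $E[\mathcal{E}\mathcal{E}^\top] = T\Sigma_e$, so conditioning on $B$,
\begin{equation*}
E\big[\|B^\top\mathcal{E}\|^2 \,\big|\, B\big] \le E\big[\|B^\top\mathcal{E}\|_F^2 \,\big|\, B\big] = T\,\tr(B^\top\Sigma_e B) \le T\,\|\Sigma_e\|\,\tr(B^\top B) \le T c_1\,\tr(B^\top B),
\end{equation*}
using $\|\Sigma_e\|\le c_1$ from Assumption~\ref{assum Appendix basic 2}(iii); and $\tr(B^\top B) = p\,\tr(M_p) = O_p(p)$ by part~(i). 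A conditional Markov inequality then gives $\|B^\top\mathcal{E}\| = O_p((Tp)^{1/2})$, so that $\|C\| = O_p\!\big(p^{-1}\cdot T^{1/2}\cdot (Tp)^{1/2}\big) = O_p(T p^{-1/2}) = o_p(T)$.

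Combining the three displays, $\hat{\lambda}_m \ge T(1-o_p(1)) - o_p(T) = T(1-o_p(1))$, which establishes $\hat{\lambda}_m > C_1 T$ with probability approaching one for any fixed $C_1 < 1$ (e.g. $C_1 = 1/2$). The main obstacle is precisely the cross term $\|C\|$: the lower bound on the signal is a routine eigenvalue inequality, whereas $\|B^\top\mathcal{E}\|$ must be shown negligible relative to $\sqrt{p}\,T$. For this one-sided bound the first-moment control above is enough; a sharper two-sided rate would instead require a concentration argument for $\|B^\top\mathcal{E}\|$, but that is not needed here.
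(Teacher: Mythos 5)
Your proof is correct, but it takes a genuinely different route from the paper's. The paper never works with the $T\times T$ Gram matrix directly: it observes that the eigenvalues of $p^{-1}YY^\top$ are, up to the factor $T/p$, the nonzero eigenvalues of the sample covariance $T^{-1}RR^\top$, lower-bounds the $m$-th eigenvalue of the population matrix $BB^\top+\Sigma_e$ by $Cp$ using pervasiveness (via Proposition 2.1 and Lemma C.4 of \cite{fan2013large}), and then proves the \emph{two-sided} concentration $\| T^{-1}RR^\top - BB^\top -\Sigma_e\| = o_p(p)$ before invoking Weyl's theorem. That concentration step needs $\|\frac{1}{T}\sum_{t}(e_te_t^\top-\Sigma_e)\| = O_p(p\sqrt{\log p/T})$ and $\max_i\|\frac{1}{T}\sum_t F_te_{it}\| = O_p(\sqrt{\log p/T})$, i.e.\ the exponential-tail and strong-mixing conditions of Assumption \ref{assum: Appendix basic}(v)--(vii) through the lemmas of \cite{fan2011high}. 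You instead exploit the fact that the claim is one-sided: discarding the positive semidefinite block $p^{-1}\mathcal{E}^\top\mathcal{E}$ by eigenvalue monotonicity, pinning the rank-$m$ signal at rate $T$ exactly, and killing the cross term with a crude first-moment Frobenius bound giving $\|C\| = O_p(Tp^{-1/2})$. This buys real economy: no concentration inequalities, no external lemmas, and your argument uses essentially only Assumption \ref{assum Appendix basic 2} (plus zero-mean stationarity of $e_t$), which actually matches the lemma's stated hypothesis more faithfully than the paper's own proof does. What it gives up is two-sided control of the sample covariance, which the paper needs elsewhere anyway and therefore gets here at no extra cost. One caveat worth stating explicitly: your step $E\bigl[\|B^\top\mathcal{E}\|_F^2 \mid B\bigr] = T\,\tr(B^\top\Sigma_e B)$ presumes $E[\mathcal{E}\mathcal{E}^\top\mid B] = T\Sigma_e$, i.e.\ that the loadings are deterministic or independent of the errors; this is standard in the approximate-factor-model literature and is implicit in the paper's own arguments as well, but it is an assumption beyond what is written.
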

	
	\textbf{Proof:} We first note that the eigenvalues of $T^{-1}YY^\top$ and $T^{-1}Y^\top Y$ only differ by $|T-p|$ zero eigenvalues. Since $Y= R^\top$,  $T^{-1}Y^\top Y = T^{-1}RR^\top$. Assumption \ref{assum Appendix basic 2}(i) indicates that all eigenvalues of the $m \times m$ matrix $p^{-1}B^\top B$ approach to 1  as $p \rightarrow \infty$, which are bounded away from both zero and infinity. As such, under Assumptions \ref{assum Appendix basic 2}, by applying Proposition 2.1 and Lemma C.4 of \cite{fan2013large},  we have that $\nu_m \geq C p $ for some $C > 0$ where $\nu_m$ is the $m$-th largest eigenvalue of $BB^\top + \Sigma_e$ for sufficiently large $p$. Using Weyl's theorem, if we show that $\| T^{-1}RR^\top - BB^\top -\Sigma_e\| = o_p(p)$, then we can conclude that $\hat{\nu}_m > C_1 p$ with probability approaching one for some $C_1 > 0$ where $\hat{\nu}_m$ is the $m$-th largest eigenvalue of $T^{-1}RR^\top$. As a result, $\hat{\lambda}_m = T\hat{\nu}_m/p > C_1 T$ with probability approaching one. 
	
	Based on factor structure, we have $$T^{-1}RR^\top = B\frac{1}{T}\sum_{t=1}^TF_tF_t^\top B^\top + \frac{1}{T}\sum_{t=1}^T e_te_t^\top + B\frac{1}{T}\sum_{t=1}^T F_te_t^\top + \left( B\frac{1}{T}\sum_{t=1}^T F_te_t^\top \right)^\top.$$
	Then, we have $\|B\left(\frac{1}{T}\sum_{t=1}^TF_tF_t^\top -I_m\right) B^\top\| = o_p(p)$ by Assumption \ref{assum Appendix basic 2}(ii). By Assumption \ref{assum: Appendix basic}(v)-(vii), the Lemma A.3 of \cite{fan2011high} implies that $\| \frac{1}{T}\sum_{t=1}^T(e_te_t^\top -\Sigma_e)\| = O_p(p\sqrt{logp/T}) = o_p(p)$. By the fact that $\max_i\left\| \frac{1}{T}\sum_{t=1}^TF_te_{it}\right\| = O_p(\sqrt{\frac{logp}{T}})$, it is easy to obtain that $\|B\frac{1}{T}\sum_{t=1}^T F_t e_t^\top \|= o_p(p)$. We finish the proof.

	\begin{lemma}
		\label{lemma Appendix C9}
		Suppose Assumptions \ref{assum: Appendix basic} and \ref{assum Appendix basic 2} hold, 
		\begin{enumerate}
			\item[(i)] $\max_{j\leq m} \frac{1}{p}\sum_{i=1}^p \left(\hat{b}_i - Hb_i \right)_j^2 = O_p\left(\frac{1}{p} + \frac{logp}{T}\right)$,
			\item[(ii)] $\frac{1}{p}\sum_{i=1}^p \left\| \hat{b}_i - Hb_i\right\|^2 = O_p\left(\frac{1}{p} + \frac{logp}{T}\right)$,
			\item[(iii)]
			$\max_{i\leq p} \left\| \hat{b}_i - Hb_i\right\|= O_p\left(\frac{1}{\sqrt{p}}+\frac{p^{1/4}}{\sqrt{T}}\right)$.
		\end{enumerate}
	\end{lemma}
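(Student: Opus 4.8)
The plan is to run the entire argument off the exact identity \eqref{eq: appendix b-Hb}. Writing
\[
a_i := \frac{1}{Tp}\sum_{s=1}^p\hat{b}_s E(\varepsilon_s^\top\varepsilon_i) + \frac{1}{p}\sum_{s=1}^p \hat{b}_s\zeta_{si} + \frac{1}{p}\sum_{s=1}^p\hat{b}_s\eta_{si} + \frac{1}{p}\sum_{s=1}^p\hat{b}_s\xi_{si},
\]
that identity reads $\hat{b}_i - Hb_i = (V/T)^{-1}a_i$. The first and decisive step is to control the scaling factor $(V/T)^{-1}$. Since $V=\mathrm{diag}(\hat\lambda_1,\ldots,\hat\lambda_m)$ collects the $m$ largest eigenvalues of $p^{-1}YY^\top$ in descending order, Lemma \ref{lemma appendix C4} gives $\hat\lambda_m>C_1T$ with probability approaching one, hence $\hat\lambda_j/T>C_1$ for every $j\le m$ and therefore $\|(V/T)^{-1}\|=\max_{j\le m}T/\hat\lambda_j<1/C_1=O_p(1)$. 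Thus in both the averaged ($L_2$) and the uniform ($\max_i$) norms, bounding $\hat b_i-Hb_i$ reduces to bounding the corresponding norm of $a_i$.

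For parts (i) and (ii) I would work with the aggregated bound $\tfrac1p\sum_{i=1}^p\|\hat b_i-Hb_i\|^2\le\|(V/T)^{-1}\|^2\,\tfrac1p\sum_{i=1}^p\|a_i\|^2$. Expanding $\|a_i\|^2=\sum_{j=1}^m (a_i)_j^2$ and using the $C_r$ (or triangle) inequality to split each coordinate $(a_i)_j$ into its four constituent sums, the four resulting averages $\tfrac1p\sum_i(\cdot)_j^2$ are precisely the quantities bounded in Lemma \ref{lemma basic C7}(i)--(iv), which contribute $O_p(1/p)$, $O_p(1/T)$, $O_p(\log p/T)$ and $O_p(\log p/T)$ respectively. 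Since $m$ is fixed, summing over $j\le m$ gives $\tfrac1p\sum_i\|a_i\|^2=O_p(1/p+\log p/T)$, and combined with $\|(V/T)^{-1}\|^2=O_p(1)$ this yields part (ii). Part (i) then follows at once because $(\hat b_i-Hb_i)_j^2\le\|\hat b_i-Hb_i\|^2$, so $\max_{j\le m}\tfrac1p\sum_i(\hat b_i-Hb_i)_j^2\le\tfrac1p\sum_i\|\hat b_i-Hb_i\|^2$, which is of the same order.

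For part (iii) I would instead pass to the uniform bound $\max_i\|\hat b_i-Hb_i\|\le\|(V/T)^{-1}\|\,\max_i\|a_i\|$, split $\max_i\|a_i\|$ across its four terms by the triangle inequality, and invoke Lemma \ref{lemma: appendix C8}(i)--(iv), giving $O_p(p^{-1/2})$, $O_p(p^{1/4}/\sqrt T)$, $O_p(\sqrt{\log p/T})$ and $O_p(\sqrt{\log p/T})$. The one point that must be spelled out is that the two logarithmic terms are absorbed: since $\sqrt{\log p}=o(p^{1/4})$, we have $\sqrt{\log p/T}=o(p^{1/4}/\sqrt T)$, so $\max_i\|a_i\|=O_p(p^{-1/2}+p^{1/4}/\sqrt T)$ and the claimed rate follows from $\|(V/T)^{-1}\|=O_p(1)$.

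The genuinely substantive ingredient is the eigenvalue lower bound $\hat\lambda_m>C_1T$ of Lemma \ref{lemma appendix C4}, which is what makes $(V/T)^{-1}$ uniformly bounded; once that and the term-by-term estimates of Lemmas \ref{lemma basic C7} and \ref{lemma: appendix C8} are in hand, this lemma is a routine assembly. The only place demanding a little care is the absorption $\sqrt{\log p}=o(p^{1/4})$ used to reconcile the raw bound in part (iii) with the stated rate.
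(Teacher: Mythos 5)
Your proposal is correct and follows essentially the same route as the paper's proof: the identity \eqref{eq: appendix b-Hb}, the uniform bound on $(V/T)^{-1}$ from Lemma \ref{lemma appendix C4}, the four-term splitting handled by Lemma \ref{lemma basic C7} for the averaged bounds and Lemma \ref{lemma: appendix C8} for the uniform bound. The only differences are cosmetic — you prove (ii) first and deduce (i) from it, whereas the paper proves (i) first and gets (ii) by summing over $j\leq m$, and you make explicit the absorption $\sqrt{\log p/T}=o(p^{1/4}/\sqrt{T})$ that the paper leaves implicit in part (iii).
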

	
	\textbf{Proof:} By Lemma \ref{lemma appendix C4}, all the eigenvalues of $V/T$ are bounded away from zero and infinity. Using the inequality $(a+b+c+d)^2 \leq 4(a^2 + b^2 +c^2+d^2)$ and \eqref{eq: appendix b-Hb}, for some constant $C>0$, we have
	\begin{align*}
		\max_{j\leq m} \frac{1}{p}\sum_{i=1}^p \left(\hat{b}_i - Hb_i \right)_j^2 &\leq C \max_{j}\frac{1}{p}\sum_{i = 1}^p\left( \frac{1}{p}\sum_{s=1}^p \hat{b}_{js}E(\varepsilon_s^\top\varepsilon_i)/T \right)^2 + C\max_{j} \frac{1}{p}\sum_{i = 1}^p\left(\frac{1}{p}\sum_{s=1}^p\hat{b}_{js}\zeta_{si} \right)^2 
		\\& + C \max_j \frac{1}{p}\sum_{i = 1}^p\left(\frac{1}{p}\sum_{s=1}^p\hat{b}_{js}\eta_{si} \right)^2 + C\max_j \frac{1}{p}\sum_{i = 1}^p\left(\frac{1}{p}\sum_{s=1}^p\hat{b}_{js}\xi_{si} \right)^2
	\end{align*}
	Each of the four terms on the right hand side above are bounded in Lemma \ref{lemma basic C7}. 
	
	(ii) The result follows that $\frac{1}{p}\sum_{i=1}^p \left\| \hat{b}_i - Hb_i\right\|^2 \leq m \max_j \frac{1}{p}\sum_{i=1}^p \left(\hat{b}_i - Hb_i \right)_j^2$.
	
	(iii) The result follows \eqref{eq: appendix b-Hb} and Lemma \ref{lemma: appendix C8} directly.

	\begin{lemma}
		\label{lemma Appendix C10}
		Suppose Assumptions \ref{assum: Appendix basic} and \ref{assum Appendix basic 2} hold, 
		\begin{enumerate}
			\item[(i)] $HH^\top = I_m + O_p\left(\sqrt{\frac{logp}{T}}+\frac{1}{\sqrt{p}}\right)$,
			\item[(ii)]$H^\top H = I_m + O_p\left(\sqrt{\frac{logp}{T}}+\frac{1}{\sqrt{p}}\right)$
		\end{enumerate}
	\end{lemma}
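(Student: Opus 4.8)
The plan is to exploit the exact identification constraint $p^{-1}\hat{B}^\top\hat{B}=I_m$ satisfied by the estimator (its columns are $\sqrt{p}$ times orthonormal eigenvectors) together with the loading consistency already established in Lemma \ref{lemma Appendix C9}. Writing $d_i:=\hat{b}_i-Hb_i$ and substituting $\hat{b}_i=Hb_i+d_i$ into $I_m=p^{-1}\sum_{i}\hat{b}_i\hat{b}_i^\top$, I would expand into four pieces:
\begin{equation*}
I_m = H\Big(\frac{1}{p}\sum_{i=1}^p b_i b_i^\top\Big)H^\top + \frac{1}{p}\sum_{i=1}^p H b_i d_i^\top + \frac{1}{p}\sum_{i=1}^p d_i b_i^\top H^\top + \frac{1}{p}\sum_{i=1}^p d_i d_i^\top.
\end{equation*}
The leading term equals $H(p^{-1}B^\top B)H^\top=HH^\top+H(p^{-1}B^\top B-I_m)H^\top$, where by Assumption \ref{assum Appendix basic 2}(i) the correction is $O_p(p^{-1/2})\|H\|^2$. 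Thus the whole proof reduces to bounding $\|H\|$ and the three remainder terms, after which $HH^\top-I_m$ is read off directly.

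First I would show $\|H\|=O_p(1)$. Using $H=p^{-1}V^{-1}\hat{B}^\top B F^\top F$, I bound each factor: $\|V^{-1}\|=O_p(T^{-1})$ from Lemma \ref{lemma appendix C4} (the $m$-th eigenvalue of $p^{-1}YY^\top$ exceeds $C_1T$ w.p.a.\ 1), $\|\hat{B}\|=\sqrt{p}$ from the identification constraint, $\|B\|=O_p(\sqrt{p})$ from Assumption \ref{assum Appendix basic 2}(i), and $\|F^\top F\|=O_p(T)$ from Assumption \ref{assum Appendix basic 2}(ii). Multiplying these gives $\|H\|=O_p(p^{-1}\cdot T^{-1}\cdot\sqrt{p}\cdot\sqrt{p}\cdot T)=O_p(1)$. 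For the remainders, Cauchy--Schwarz gives $\|p^{-1}\sum_i b_i d_i^\top\|_F\le(p^{-1}\sum_i\|b_i\|^2)^{1/2}(p^{-1}\sum_i\|d_i\|^2)^{1/2}$; the first factor is $O_p(1)$ by Assumption \ref{assum: Appendix basic}(iv) and the second is $O_p(\sqrt{\log p/T}+1/\sqrt{p})$ by Lemma \ref{lemma Appendix C9}(ii), so each cross term is $O_p(\|H\|)\cdot O_p(\sqrt{\log p/T}+1/\sqrt{p})$. The quartic term satisfies $\|p^{-1}\sum_i d_i d_i^\top\|\le p^{-1}\sum_i\|d_i\|^2=O_p(\log p/T+1/p)$, which is negligible relative to the cross terms. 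Collecting everything (and noting $p^{-1/2}$ and $\sqrt{\log p/T}$ dominate $1/p$ and $\log p/T$) yields $HH^\top=I_m+O_p(\sqrt{\log p/T}+1/\sqrt{p})$, which is part (i).

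For part (ii) I would avoid repeating the expansion and instead use an eigenvalue argument. Since $H$ is a square $m\times m$ matrix, $HH^\top$ and $H^\top H$ have identical eigenvalues. Part (i) forces every eigenvalue of the symmetric matrix $HH^\top$ into the interval $[1-O_p(a_T),\,1+O_p(a_T)]$ with $a_T:=\sqrt{\log p/T}+1/\sqrt{p}$, hence the same bound holds for every eigenvalue of the symmetric positive semidefinite matrix $H^\top H$. Because the spectral distance of a symmetric matrix from $I_m$ equals $\max_i|\lambda_i-1|$, I conclude $\|H^\top H-I_m\|=O_p(a_T)$, giving part (ii).

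The main obstacle is establishing $\|H\|=O_p(1)$, which needs a \emph{two-sided} eigenvalue control on $V$: Lemma \ref{lemma appendix C4} supplies only the lower bound $\lambda_{\min}(V)\gtrsim T$, so I would additionally note the matching upper bound $\lambda_{\max}(p^{-1}YY^\top)=O_p(T)$, which follows from the same Weyl-plus-concentration decomposition used there, since $\|T^{-1}RR^\top\|\le\|BB^\top\|+\|\Sigma_e\|+o_p(p)=O_p(p)$. Everything else is routine bookkeeping with Cauchy--Schwarz and the rates already proved, and the square-matrix eigenvalue identity makes part (ii) essentially free once part (i) is in hand.
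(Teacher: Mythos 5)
Your proof is correct. Part (i) follows essentially the paper's own route: both arguments get $\|H\|=O_p(1)$ from the factorization $H=p^{-1}V^{-1}\hat{B}^\top B F^\top F$ (using Lemma \ref{lemma appendix C4}, $\|\hat B\|=\sqrt{p}$, $\|B\|=O_p(\sqrt p)$, $\|F^\top F\|=O_p(T)$), and then compare $HH^\top$ with the identification identity $p^{-1}\sum_i\hat{b}_i\hat{b}_i^\top=I_m$ via Assumption \ref{assum Appendix basic 2}(i), Cauchy--Schwarz, and the rate $p^{-1}\sum_i\|\hat{b}_i-Hb_i\|^2=O_p(\log p/T+1/p)$ from Lemma \ref{lemma Appendix C9}(ii); the paper merely organizes the algebra as a two-step triangle inequality through $p^{-1}\sum_i Hb_i(Hb_i)^\top$ instead of your four-term expansion, so the substance is the same. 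Part (ii), however, is a genuinely different and cleaner argument. The paper right-multiplies the part (i) identity by $H$ and then left-multiplies by $H^{-1}$, which requires extracting $\|H^{-1}\|=O_p(1)$ from part (i); you instead use the exact fact that for a square matrix the symmetric matrices $HH^\top$ and $H^\top H$ have identical spectra, so $\|H^\top H-I_m\|=\max_i|\lambda_i(H^\top H)-1|=\max_i|\lambda_i(HH^\top)-1|=\|HH^\top-I_m\|$, and part (ii) follows from part (i) with no invertibility considerations at all. One correction to your closing paragraph: no two-sided eigenvalue control of $V$ is actually needed. Since $V$ collects the top $m$ eigenvalues of $p^{-1}YY^\top$, the lower bound $\hat{\lambda}_m\geq C_1T$ of Lemma \ref{lemma appendix C4} already yields $\|V^{-1}\|=O_p(T^{-1})$, and $V$ enters the bound on $\|H\|$ only through $V^{-1}$; your additional upper bound $\lambda_{\max}(p^{-1}YY^\top)=O_p(T)$ is true but superfluous.
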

	
	\textbf{Proof:} First of all, by Lemma \ref{lemma appendix C4}, $\| V^{-1}\| = O_p(T^{-1})$. Furthermore, $\|\hat{B}\| = \sqrt{p}$ by $p^{-1}\hat{B}^\top \hat{B} = I_m$,  $\| B\| = \lambda_{max}^{1/2}(BB^\top) = O_p(\sqrt{p})$ by Assumption \ref{assum Appendix basic 2}(i), and $\|\frac{1}{T}F^\top F\| = \|\frac{1}{T}F F^\top\| = O_p(1)$ by Assumption \ref{assum Appendix basic 2}(ii). It then follows from the definition of $H$ that $\|H\| = O_p(1)$. Then
	\begin{align*}
		\| HH^\top - I_m\|_F \leq \| HH^\top - \frac{1}{p}\sum_{i=1}^p Hb_ib_i^\top H^\top \|_F + \| \frac{1}{p}\sum_{i=1}^p Hb_ib_i^\top H^\top - I_m\|_F.
	\end{align*}
	For the first term, it has 
	$$
	\| HH^\top - H\left(\frac{1}{p}\sum_{i=1}^p b_ib_i^\top\right) H^\top \|_F \leq \|H\|^2 \| I_m - \frac{1}{p}\sum_{i=1}^p b_ib_i^\top\|_F =O_p(p^{-1/2}),
	$$
	where we apply Assumption \ref{assum Appendix basic 2}(i). The second term, by Cauchy-Schwarz inequality and Lemma \ref{lemma Appendix C9}, can be bounde as follows:
	\begin{align*}
		\| \frac{1}{p}\sum_{i=1}^p Hb_i(Hb_i)^\top - \frac{1}{p}\sum_{i=1}^p\hat{b}_i \hat{b}_i^\top \|_F &\leq 
		\| \frac{1}{p}\sum_{i=1}^p (Hb_i-\hat{b}_i)(Hb_i)^\top \|_F +  \| \frac{1}{p}\sum_{i=1}^p \hat{b}_i(Hb_i -\hat{b}_i)^\top \|_F 
		\\ & \leq 
		\left( \frac{1}{p}\sum_{i=1}^p\| Hb_i -\hat{b}_i\|^2 \frac{1}{p}\sum_{i=1}^p \|Hb_i\|^2 \right)^{1/2} +  \left( \frac{1}{p}\sum_{i=1}^p\| Hb_i -\hat{b}_i\|^2 \frac{1}{p}\sum_{i=1}^p \|\hat{b}_i\|^2 \right)^{1/2}
		\\ & = O_p\left(\sqrt{\frac{logp}{T}}+\frac{1}{\sqrt{p}}\right).
	\end{align*}

	(ii) Since ${HH}^{\top}={I}_m+O_p\left(\sqrt{\frac{logp}{T}}+\frac{1}{\sqrt{p}}\right)$ and $\left\|H\right\|=O_p(1)$, right
	multiplying $H$ gives $HH^\top{H}= H +O_p\left(\sqrt{\frac{logp}{T}}+\frac{1}{\sqrt{p}}\right)$. Part (i) also gives  that $\|{H}^{-1}\|=O_p(1)$. Hence further left multiplying ${H}^{-1}$ yields ${H}^{\top}{H}={I}_m+O_p\left(\sqrt{\frac{logp}{T}}+\frac{1}{\sqrt{p}}\right).$

	\section{Proof of the Main Results}

	\begin{proof}[Proof of Lemma 1] Recall that the columns of estimated factor loading estimator $\hat{B}$ are $\sqrt{p}$ times the corresponding (to its $m$ largest eigenvalues)
		eigenvectors of $\hat{\mathbf{V}} = \frac{1}{T}\sum_{t=1}^T\omega_t r_tr_t^\top$.
		By model (2.1) in the main paper, we notice that
		\begin{align}
			\label{yr(3)}
			\underbrace{\omega_t^{1/2}r_t}_{\Tilde{r}_t} = B\underbrace{\omega_t^{1/2}F_t}_{\tilde{F}_t} + \underbrace{\omega_t^{1/2} e_t}_{\tilde{e}_t}, t= 1,\ldots, T.
		\end{align}
		where $\Tilde{r}_t = \omega_t^{1/2}r_t$, $ \tilde{F}_t = \omega_t^{1/2}F_t$ and $ \tilde{e}_t =\omega_t^{1/2} e_t $. Let $\tilde{R} = (\tilde{r}_1,\ldots,\tilde{r}_T)$, we have $\tilde{R}\tilde{R}^\top = \sum^{T}_{t=1}\omega_tr_tr_t^{\top}$. As such,  it is clear that the factor loading estimator $\hat{B}$ is also the solution of problem \ref{eq: PCA objective appendix} with $Y = \tilde{R}^\top$. Furthermore, the conditions assumed in Lemma 1 guarantee that the transformed factors $\tilde{F}_t$ and error terms $\tilde{e}_t$ satisfy the Assumptions \ref{assum: Appendix basic} and \ref{assum Appendix basic 2}. As a result, by Lemma \ref{lemma Appendix C9}, $\hat{b}_i$
		has the asymptotic property as follows  
		\begin{align}
			\label{convergence rate for B}
			\max_{i\leq p}\left\vert\left\vert  \hat{b}_i - \tilde{H}b_i\right\vert \right\vert = O_p(p^{-1/2}+p^{1/4}T^{-1/2}),
		\end{align}
		where $\tilde{H} = p^{-1}\tilde{V}^{-1}\hat{B}^\top B \tilde{F}^\top \tilde{F}$, $\tilde{V}$  is the $m \times m$ diagonal matrix of the first $m$ largest eigenvalues of $p^{-1}\tilde{R}^\top \tilde{R}$ in descending order, and $\tilde{F} = (\tilde{F}_1,\ldots,\tilde{F}_T)^\top$.
		
		Now, we further consider $\hat{F}_t = \frac{1}{p}\sum_{j=1}^p r_{jt}\hat{b}_j$. By the fact that  $\frac{1}{p}\sum_{j=1}^p \hat{b}_j\hat{b}_j^\top = I_m$, we can make decomposition as follow, 
		\begin{align}
			\hat{F}_t - \tilde{H}F_t = \frac{1}{p}\sum_{j=1}^p \tilde{H}{b}_j e_{jt} + \frac{1}{p}\sum_{j=1}^p r_{jt}\left(\hat{b}_j - \tilde{H}b_j \right) + \tilde{H }\left( \frac{1}{p}\sum_{j=1}^p b_j b_j^\top -I_m \right)F_t.
			\label{eq: F - HF}
		\end{align}

		For the first term on the right hand side of (\ref{eq: F - HF}), by Assumption 5(c), $E\|\sum_{j=1}^p b_j e_{jt} \|^4 =O(p^2)$ and using Chebyshev's inequality and Bonferroni's method, we  yield $\max_t \| \sum_{j=1}^p b_j e_{jt}\| = O_p(T^{1/4}\sqrt{p})$ with probability one. Thus, by Cauchy-Schwarz inequality, it follows that $$\max_t \left\| \frac{1}{p}\sum_{j=1}^p \tilde{H} {b}_j e_{jt} \right\|  \leq \|\tilde{H}\| \max_t \left\|\frac{1}{p}\sum_{j=1}^p {b}_j e_{jt} \right\| = O_p(T^{1/4}p^{-1/2}),
		$$
		where  $\| \tilde{H}\| = O_p(1)$ indicated by the proof of Lemma \ref{lemma Appendix C10}.
		For the second term, by Cauchy-Schwarz inequality,
		$$
		\max_t \left\| \frac{1}{p}\sum_{j=1}^p r_{jt}\left(\hat{b}_j - \tilde{H}b_j \right) \right\| \leq \max_t \left(\frac{1}{p}\sum_{j=1}^p r_{jt}^2 \frac{1}{p}\sum_{j=1}^p \| \hat{b}_j - \tilde{H}b_j  \|^2 \right)^{1/2} = O_p\left(\delta^{1/2}\left(\frac{1}{\sqrt{p}}+\sqrt{\frac{logp}{T}}\right)\right),
		$$
		where we apply Lemma \ref{lemma Appendix C9}(ii) and the fact that $E(r_{jt}^2) = O(\delta)$ indicated  by Assumption 1(b) and 2(a). Finally, for the third term, by Assumption 1, we can conclude that $$\max_t \left\| \tilde{H }\left( \frac{1}{p}\sum_{j=1}^p b_j b_j^\top -I_m \right)F_t \right\| = O_p\left( \Delta p^{-1/2} \right).$$ As a result, we obtain 
		\begin{align}
			\label{eq: F - HF rate}
			\max_t \| \hat{F}_t - \tilde{H}F_t \| = O_p\left( \frac{\delta^{1/2}+ T^{1/4} + \Delta}{\sqrt{p}} + \sqrt{\frac{\delta log p}{{T}}}\right).
		\end{align}
	\end{proof}
	
	\bigskip
	
	\bigskip

	\begin{lemma}
		\label{lemma: bf- bf}
		Under the assumptions of Lemma 1,
		$$
		\max_{i\leq p, t\leq T} \left\| \hat{b}_i^\top\hat{F}_t - b_i^\top F_t \right\| = O_p\left(\frac{\delta^{1/2}+ T^{1/4}+\Delta}{\sqrt{p}} + \frac{\sqrt{\delta logp }+\Delta p^{1/4}}{\sqrt{T}} \right)
		$$
	\end{lemma}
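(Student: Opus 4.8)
The plan is to reduce the joint fluctuation of $\hat b_i^\top\hat F_t$ around $b_i^\top F_t$ to the two one-sided rates already established in Lemma~1 (Lemma~\ref{lemma B and F}), together with the near-orthogonality of the rotation $\tilde H$ from Lemma~\ref{lemma Appendix C10}. Writing $u_i := \hat b_i-\tilde H b_i$ and $v_t := \hat F_t-\tilde H F_t$, I would add and subtract and expand the product to obtain the four-term identity
\begin{align*}
\hat{b}_i^\top \hat{F}_t - b_i^\top F_t
&= b_i^\top(\tilde{H}^\top \tilde{H} - I_m)F_t + (\tilde{H}b_i)^\top v_t \\
&\quad + u_i^\top \tilde{H}F_t + u_i^\top v_t .
\end{align*}
Taking $\max_{i\le p,\,t\le T}$ and applying the Cauchy--Schwarz inequality term by term is then the whole strategy; the point is that each factor is controlled either by Lemma~1, by Lemma~\ref{lemma Appendix C10}, or by the moment/tail bounds on the loadings and factors (using $\max_i\|b_i\|=O_p(1)$ from Assumption~\ref{assu: factor loading} and $\max_t\|F_t\|=O_p(\Delta)$ from the definition $\max_t E\|F_t\|=C\Delta$, exactly as in the proof of Lemma~1).

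Carrying out the four bounds, I expect the cross terms to supply the stated rate and the remaining terms to be dominated. For the third term, $\max_{i,t}|u_i^\top\tilde H F_t|\le \max_i\|u_i\|\,\|\tilde H\|\,\max_t\|F_t\|=O_p(\varpi_T\,\Delta)=O_p\!\big(\Delta/\sqrt p+\Delta p^{1/4}/\sqrt T\big)$, which delivers the delicate $\Delta p^{1/4}/\sqrt T$ contribution. For the second term, $\max_{i,t}|(\tilde H b_i)^\top v_t|\le \|\tilde H\|\,\max_i\|b_i\|\,\max_t\|v_t\|=O_p\!\big((\delta^{1/2}+T^{1/4}+\Delta)/\sqrt p+\sqrt{\delta\log p/T}\big)$, supplying the $(\delta^{1/2}+T^{1/4}+\Delta)/\sqrt p$ and $\sqrt{\delta\log p}/\sqrt T$ pieces. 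The first term is handled by Lemma~\ref{lemma Appendix C10} applied to $\tilde H$ (legitimate because the transformed model meets the appendix assumptions), giving $\max_{i,t}|b_i^\top(\tilde H^\top\tilde H-I_m)F_t|\le\max_i\|b_i\|\,\|\tilde H^\top\tilde H-I_m\|\,\max_t\|F_t\|=O_p\!\big(\Delta\sqrt{\log p/T}+\Delta/\sqrt p\big)$, and the fourth term $\max_{i,t}|u_i^\top v_t|\le\max_i\|u_i\|\max_t\|v_t\|$ is a product of two vanishing rates and is of strictly higher order. Summing the four bounds yields precisely $O_p\!\big((\delta^{1/2}+T^{1/4}+\Delta)/\sqrt p+(\sqrt{\delta\log p}+\Delta p^{1/4})/\sqrt T\big)$.

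The main obstacle is bookkeeping rather than any new estimate: I must verify that the remainder produced by the lack of exact orthogonality of $\tilde H$ does not inflate the rate. Concretely, the first term contributes $\Delta\sqrt{\log p/T}$, and this is absorbed into the target only because $\sqrt{\log p}\lesssim p^{1/4}$ for large $p$, so $\Delta\sqrt{\log p/T}\le \Delta p^{1/4}/\sqrt T$; I would state this comparison explicitly. The other subtle point is the uniform (over $i$ and $t$) control of $\|b_i\|$ and $\|F_t\|$, which I would invoke exactly as in the proof of Lemma~1 so as to keep $\max_i\|b_i\|=O_p(1)$ and $\max_t\|F_t\|=O_p(\Delta)$; once these are granted, every step is a routine Cauchy--Schwarz application and no further probabilistic work is needed.
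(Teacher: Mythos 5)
Your proposal is correct and is essentially the paper's own proof: the four-term expansion in $u_i=\hat b_i-\tilde Hb_i$, $v_t=\hat F_t-\tilde HF_t$ and $\tilde H^\top\tilde H-I_m$ is exactly the decomposition the paper bounds term by term via Cauchy--Schwarz, using Lemma~1 for $\max_i\|u_i\|$ and $\max_t\|v_t\|$, Lemma~A.5 for $\|\tilde H\|$ and $\|\tilde H^\top\tilde H-I_m\|$, and the same (slightly informal) uniform controls $\max_i\|b_i\|=O_p(1)$, $\max_t\|F_t\|=O_p(\Delta)$ that the paper itself invokes. Your explicit remark that $\Delta\sqrt{\log p/T}$ is absorbed since $\sqrt{\log p}\lesssim p^{1/4}$ is a small bookkeeping point the paper leaves implicit, but the argument is the same.
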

	\begin{proof}[Proof of Lemma \ref{lemma: bf- bf}]
		Uniformly in $i$ and $t$, we have
		\begin{align}
			\left\| \hat{b}_i^\top\hat{F}_t - b_i^\top F_t \right\| & \leq \left\| \hat{b}_i - \tilde{H}b_i\right\| \left\| \hat{F}_t-\tilde{H}F_t\right\| + \left\| \tilde{H}b_i\right\| \left\| \hat{F}_t - \tilde{H}F_t \right\| \notag
			\\  &+ \left\|\hat{b}_i - \tilde{H}b_i \right\|\|\tilde{H}F_t\| + \left\|b_i\right\| \left\|F_t \right\|\left\|\tilde{H}^\top \tilde{H} - I_m \right\| \notag
			\\ & =O_p\left(\frac{\delta^{1/2}+ T^{1/4}+\Delta}{\sqrt{p}} + \frac{\sqrt{\delta logp}+\Delta p^{1/4}}{\sqrt{T}} \right)
		\end{align}    
	\end{proof}

	\begin{lemma}
		Under the assumptions of Lemma 1, 
		$\max_{i} \frac{1}{T}\sum_{t=1}^T\left| e_{it} - \hat{e}_{it} \right|^2 = O_p\left(\frac{(\delta^{1/2}+ T^{1/4}+\Delta)^2}{{p}} + {\frac{\delta\sqrt{p}}{{T}}}\right)$. If $\frac{\delta^{1/2}+ T^{1/4}+\Delta}{\sqrt{p}} + \frac{\sqrt{\delta logp}+\Delta p^{1/4}}{\sqrt{T}}=o(1)$, then $\max_{i,t}\left| e_{it} - \hat{e}_{it} \right| = o_p(1) $.
		\label{lemma: e -hat e}
	\end{lemma}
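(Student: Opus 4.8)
The plan is to start from the exact algebraic identity $e_{it}-\hat e_{it}=\hat b_i^\top\hat F_t-b_i^\top F_t$, which is immediate from $\hat e_t=r_t-\hat B\hat F_t$ and $e_t=r_t-BF_t$. With this identity in hand the second assertion is essentially free: Lemma~\ref{lemma: bf- bf} already bounds $\max_{i\le p,\,t\le T}\lVert\hat b_i^\top\hat F_t-b_i^\top F_t\rVert$ by precisely the quantity $\frac{\delta^{1/2}+T^{1/4}+\Delta}{\sqrt p}+\frac{\sqrt{\delta\log p}+\Delta p^{1/4}}{\sqrt T}$ appearing in the hypothesis, so under the stated $o(1)$ condition we conclude $\max_{i,t}\lvert e_{it}-\hat e_{it}\rvert=o_p(1)$ directly.

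For the first, averaged, assertion I would reuse the four-term split underlying Lemma~\ref{lemma: bf- bf}. Writing $A_i=\hat b_i-\tilde H b_i$ and $G_t=\hat F_t-\tilde H F_t$,
\begin{align*}
\hat b_i^\top\hat F_t-b_i^\top F_t &= A_i^\top G_t+(\tilde H b_i)^\top G_t+A_i^\top(\tilde H F_t)+b_i^\top(\tilde H^\top\tilde H-I_m)F_t,
\end{align*}
and then apply $(a+b+c+d)^2\le 4(a^2+b^2+c^2+d^2)$ before averaging $T^{-1}\sum_{t=1}^T(\cdot)^2$. Each of the four averaged squares is handled by Cauchy--Schwarz: the $i$-dependent loading factor is pulled out, leaving a $T^{-1}\sum_t$ of either $\lVert G_t\rVert^2$ or $\lVert F_t\rVert^2$. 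The ingredients I would invoke are $\max_i\lVert A_i\rVert^2=O_p(p^{-1}+\sqrt p/T)$ (the square of the transformed-data bound $\max_i\lVert\hat b_i-\tilde H b_i\rVert=O_p(p^{-1/2}+p^{1/4}T^{-1/2})$ from Lemma~\ref{lemma Appendix C9}), $\lVert\tilde H\rVert=O_p(1)$ and $\lVert\tilde H^\top\tilde H-I_m\rVert=O_p(p^{-1/2}+\sqrt{\log p/T})$ from Lemma~\ref{lemma Appendix C10}, together with $\max_i\lVert b_i\rVert=O_p(1)$ as treated in the earlier appendix bounds.

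The decisive step is the third term, $T^{-1}\sum_t(A_i^\top\tilde H F_t)^2\le\lVert A_i\rVert^2\lVert\tilde H\rVert^2\,T^{-1}\sum_t\lVert F_t\rVert^2$. Here the main obstacle is that one must \emph{not} bound $\lVert F_t\rVert$ uniformly in $t$: a uniform bound would inject $\max_t\lVert F_t\rVert^2\asymp\Delta^2$ and produce the inferior rate $\Delta^2\sqrt p/T$, which is strictly worse than the target under exponential-tail factors (where $\Delta\asymp(\log T)^{1/s_3}$ diverges while $\delta$ stays finite). Instead I would use the averaged second moment $T^{-1}\sum_t\lVert F_t\rVert^2=O_p(\delta)$, obtained from $E\lVert F_t\rVert^2=O(\delta)$ by Markov's inequality; combined with $\max_i\lVert A_i\rVert^2=O_p(p^{-1}+\sqrt p/T)$ this yields exactly the $\delta\sqrt p/T$ contribution. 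This is the only place where the possibility of a divergent-but-finite factor second moment is exploited, and it is the crux of the argument.

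Finally I would check that the remaining three terms stay within the target. The first term $T^{-1}\sum_t(A_i^\top G_t)^2\le\lVert A_i\rVert^2\max_t\lVert G_t\rVert^2$ is a product of two vanishing factors and is negligible. The second and fourth terms reduce, via $\max_t\lVert G_t\rVert^2=O_p\!\left(\frac{(\delta^{1/2}+T^{1/4}+\Delta)^2}{p}+\frac{\delta\log p}{T}\right)$ (the square of the $\hat F_t$ bound established in the proof of Lemma~1) and $T^{-1}\sum_t\lVert F_t\rVert^2=O_p(\delta)$ respectively, to quantities already dominated by the target once one uses $\delta\log p/T\le\delta\sqrt p/T$ and $\delta/p\le(\delta^{1/2}+T^{1/4}+\Delta)^2/p$. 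Summing the four bounds and taking $\max_i$ then gives $\max_i T^{-1}\sum_t\lvert e_{it}-\hat e_{it}\rvert^2=O_p\!\left(\frac{(\delta^{1/2}+T^{1/4}+\Delta)^2}{p}+\frac{\delta\sqrt p}{T}\right)$, as claimed.
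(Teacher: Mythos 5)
Your proposal is correct and is essentially the paper's own argument: the paper uses the three-term decomposition $e_{it}-\hat e_{it}=b_i^\top\tilde H^\top(\hat F_t-\tilde H F_t)+(\hat b_i-\tilde H b_i)^\top\hat F_t+b_i^\top(\tilde H^\top\tilde H-I_m)F_t$, which coincides with your four-term split once your $A_i^\top G_t$ and $A_i^\top(\tilde H F_t)$ terms are merged (since $\hat F_t=G_t+\tilde H F_t$), and it bounds that middle term by $\max_i\|\hat b_i-\tilde H b_i\|^2\cdot T^{-1}\sum_t\|\hat F_t\|^2$ with $T^{-1}\sum_t\|\hat F_t\|^2=O_p(\delta)$ --- precisely the averaged-second-moment device you identify as the crux, used there on $\hat F_t$ rather than on $\tilde H F_t$. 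The paper also obtains the second assertion exactly as you do, directly from Lemma~\ref{lemma: bf- bf}.
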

	
	\begin{proof}[Proof of Lemma \ref{lemma: e -hat e}]
		We have $e_{it} - \hat{e}_{it} = b_i^\top \tilde{H}^\top(\hat{F}_t - \tilde{H}F_t) + (\hat{b}_i^\top  -b_i^\top \tilde{H}^\top)\hat{F}_t + b_i^\top(\tilde{H}^\top \tilde{H} - I_m)F_t$, using the inequality $(a+b+c)^2\leq 4a^2 + 4b^2+4c^2$, we have
		\begin{align}
			\frac{1}{T}\sum_{t=1}^T\left| e_{it} - \hat{e}_{it} \right|^2 
			&\leq 
			4 \max_i \left\| b_i^\top \tilde{H}^\top\right\|^2 \max_t \left\|\hat{F}_t - \tilde{H}F_t \right\|^2 \notag
			\\ &+ 4 \max_i \left\| \hat{b}_i^\top - b_i^\top \tilde{H}^\top\right\|^2 \frac{1}{T}\sum_{t=1}^T \left\| \hat{F}_t\right\|^2 + 4\max_i \left\| b_i\right\|^2 \frac{1}{T}\sum_{t=1}^T \left\|F_t \right\|^2 \left\| \tilde{H}^\top \tilde{H} - I_m\right\|_F^2 \notag
			\\ &=O_p\left(\frac{(\delta^{1/2}+ T^{1/4}+\Delta)^2}{{p}} + {\frac{\delta logp }{{T}}}\right) + O_p\left( \frac{\delta}{p} + \frac{\delta\sqrt{p}}{T}\right) + O_p\left(\delta\left(\frac{1}{p} +\frac{logp}{T} \right)\right) \notag
			\\ &= O_p\left(\frac{(\delta^{1/2}+ T^{1/4}+\Delta)^2}{{p}} + {\frac{\delta\sqrt{p}}{{T}}}\right).
		\end{align}
		where we use 
		Lemma \ref{lemma Appendix C10}, \eqref{convergence rate for B}, \eqref{eq: F - HF rate} and the fact that 
		$\frac{1}{T}\sum_{t=1}^T \left\| \hat{F}_t\right\|^2$ and $\frac{1}{T}\sum_{t=1}^T \left\| {F}_t\right\|^2$ are both $O_p(\delta)$. The second part follows from Lemma \ref{lemma: bf- bf} directly.
	\end{proof}
	
	\bigskip
	
	\bigskip

	Recall that the adaptive thresholding estimator for $\Sigma_e$ is given by 
	\begin{align}
		\hat{\Sigma}_e = (\hat{\sigma}_{\hat{e},ij})_{p\times p}, \quad \hat{\sigma}_{\hat{e},ij} = \left\{ \begin{array}{ll} s_{\hat{e},ii} & i=j \\ \chi_{ij}(s_{\hat{e},ij}) & i\neq j \end{array} \right.
		\label{eq: adpative thresholding estimator appen}
	\end{align}
	where $s_{\hat{e},ij} = (1/T)\sum_{t=1}^{T}\hat{e}_{it}\hat{e}_{jt}$ is the $(i,j)$-th element of sample covariance matrix based on  $\hat{e}_t$, $\chi_{ij}(\cdot)$ is a shrinkage function satisfying
	$\chi_{ij}(z)=0$ if $|z| \leq \tau_{ij}$, and $|\chi_{ij}(z)-z|\leq \tau_{ij}$, where $\tau_{ij} = C\sqrt{\hat{\theta}_{ij}}\varsigma_T$ is a positive threshold and and $\hat{\theta}_{ij} = \frac{1}{T}\sum_{t=1}^T\left( \hat{e}_{it}\hat{e}_{jt} - s_{\hat{e},ij} \right)^2$.

	\begin{proof}[Proof of Theorem 1] For notation simplicity, we abbreviate $s_{\hat{e},ij}$ as $s_{ij}$ in this proof. By the condition of threshold function, we have $\chi_{ij}(t) = \chi_{ij}(t)\mathbb{I}(|t|>C\varsigma_T \sqrt{\hat{\theta}_{ij}})$.  Under the assumptions of Theorem 1, by Lemma \ref{lemma: e -hat e}, we have $\max_{i,t}|e_{it}-\hat{e}_{it}| = o_p(1)$, $max_i \frac{1}{T}\sum_{t=1}^T |e_{it} - \hat{e}_{it} |^2 = O_p(a_T^2)$ and $a_T = \sqrt{{(\delta^{1/2}+ T^{1/4}+\Delta)^2}/{{p}} + {{\delta\sqrt{p}}/{{T}}}} = o(1)$. Let ${\varsigma}_T = \sqrt{\frac{logp}{T}} + a_T$.  Under the exponential tails and mixing dependence conditions, we can apply  the Lemmas A.3 and A.4 of \cite{fan2011high} and obtain that for any $x > 0$, there are positive constants  $M, \theta_1$ and $\theta_2$ such that each of the events 
		\begin{align*}
			&A_1 = \{\max_{i,j}|s_{\hat{e},ij} - \sigma_{ij} | \leq M {\varsigma}_T\}
			\\ &
			A_2 =  \{ \theta_2 < \sqrt{\hat{\theta}_{ij}} < \theta_1, \text{all} i\leq p, j \leq p\}
		\end{align*}
		occurs with probability at least $1-x$. 
		
		Now for $C = \theta_2^{-1}2M$, under the event $A_1 \cap A_2$, we have
		\begin{align}
			\left\|\hat{\Sigma}_e - \Sigma_e  \right\| &\leq \max_i \sum_{j=1}^p |\chi_{ij}(s_{ij}) - \sigma_{e,ij}| \notag
			\\ & 
			= \max_i \sum_{j=1}^p \left| \chi_{ij}(s_{ij})\mathbb{I}(|s_{ij}|> C\varsigma_T \sqrt{\hat{\theta}_{ij}}) - \sigma_{e,ij}\mathbb{I}(|s_{ij}|> C\varsigma_T \sqrt{\hat{\theta}_{ij}}) - \sigma_{e,ij}\mathbb{I}(|s_{ij}|\leq C\varsigma_T \sqrt{\hat{\theta}_{ij}}) \right| \notag 
			\\& \leq \max_i \sum_{j=1}^p|\chi_{ij}(s_{ij}) - s_{ij} |\mathbb{I}(|s_{ij}|> C\varsigma_T \sqrt{\hat{\theta}_{ij}}) + \sum_{j=1}^p |s_{ij}-\sigma_{e,ij}|\mathbb{I}(|s_{ij}|> C\varsigma_T \sqrt{\hat{\theta}_{ij}})
			\notag
			\\& 
			+ \sum_{j=1}^p|\sigma_{e,ij}|\mathbb{I}(|s_{ij}|\leq C\varsigma_T \sqrt{\hat{\theta}_{ij}}) \notag
			\\ &
			\leq \max_i \sum_{j=1}^p C\varsigma_T \sqrt{\hat{\theta}_{ij}}
			\mathbb{I}(|s_{ij}|> C\varsigma_T \theta_2) + M\varsigma_T \sum_{j=1}^p \mathbb{I}(|s_{ij}|> C\varsigma_T \theta_2) + \sum_{j=1}^p |\sigma_{e,ij}|\mathbb{I}(|s_{ij}|< C\varsigma_T \theta_1) \notag
			\\ & \leq (C\theta_1 + M)\varsigma_T \max_i \sum_{j=1}^p\mathbb{I}(|\sigma_{e,ij}|> M\varsigma_T ) + \max_i
			\sum_{j=1}^p |\sigma_{e,ij}|\mathbb{I}(|\sigma_{e,ij}|< C\varsigma_T \theta_1 + M\varsigma_T) \notag
			\\ & \leq (C\theta_1 + M)\varsigma_T \max_i \sum_{j=1}^p \frac{|\sigma_{e,ij}|^q}{M^q\varsigma_T^q}\mathbb{I}(|\sigma_{e,ij}|> M\varsigma_T ) \notag
			\\ & + \max_i
			\sum_{j=1}^p  |\sigma_{e,ij}| \frac{(C\theta_1 + M)^{1-q}\varsigma_T^{1-q}}{|\sigma_{e,ij}|^{1-q}}\mathbb{I}(|\sigma_{e,ij}|< (C\theta_1+M)\varsigma_T ) \notag
			\\ & \leq
			\frac{C\theta_1 + M}{M^q} \varsigma_T^{1-q}\max_i \sum_{j=1}^p |\sigma_{e,ij}|^q + (C\theta_1 + M)^{1-q}\varsigma_T^{1-q}\max_i \sum_{j=1}^p|\sigma_{e,ij}|^q \notag
			\\ & = \kappa_q\varsigma_T^{1-q}(C\theta_1 + M)\left(M^{-q} + (C\theta_1 + M)^{-q} \right).
		\end{align}
		where $\kappa_q = \max_{i}\sum_{j\leq p}|\sigma_{e,ij}|^{q}$.
		Let $M_1 = (C\theta_1 + M)\left(M^{-q} + (C\theta_1 + M)^{-q} \right)$, then with probability at least $1-2x$, $\left\|\hat{\Sigma}_e - \Sigma_e  \right\| \leq M_1\kappa_q\varsigma_T^{1-q}$. Since $x$ is arbitrary, we have $\left\|\hat{\Sigma}_e - \Sigma_e  \right\| = O_p(\kappa_q\varsigma_T^{1-q})$. If in addition $\varsigma_T^{1-q}\kappa_q = o(1)$, then the minimum eigenvalue of $\hat{\Sigma}_e$ is bounded away from zero with probability approaching one since $\lambda_{min}(\Sigma_e) > c_1$ by Assumption 2(b). This then implies $\left\|\hat{\Sigma}_e^{-1} - \Sigma_e^{-1}  \right\| = O_p(\kappa_q\varsigma_T^{1-q})$. Finally, note that 
		$$
		\varsigma_T \asymp \sqrt{\frac{logp}{T}} + \sqrt{\frac{(\delta^{1/2}+T^{1/4}+\Delta)^2}{p}} + \sqrt{\frac{\delta\sqrt{p}}{T}} \asymp \zeta_T :=  \frac{\delta^{1/2}+T^{1/4}+\Delta}{\sqrt{p}}+\frac{\delta^{1/2}p^{1/4}}{\sqrt{T}}. 
		$$
		We complete the proof of Theorem 1.
		
	\end{proof}

	\bigskip
	
	\bigskip
	
	Define $$G_T = \hat{B} - B\tilde{H}^\top$$ where $\hat{B} = \left(\hat{b}_1,\ldots, \hat{b}_p \right)^\top$.
	\begin{lemma}
		\label{lemma: GT related}
		Under the conditions of Theorem 1, 
		\begin{itemize}
			\item[(i)] $\left\|G_T \right\|_F^2 = O_p\left( 1 + \frac{p^{3/2}}{T} \right)$.
			\item[(ii)]      $\left\| \hat{B}^\top \hat{\Sigma}_e^{-1}\hat{B} - (B\tilde{H}^\top)^\top{\Sigma}_e^{-1} B\tilde{H}^\top \right\| = O_p\left(p\left(\varpi_T+ \zeta_T^{1-q}\kappa_q\right) \right) = o_p(p)$.
		\end{itemize}
	\end{lemma}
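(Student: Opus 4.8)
The plan is to prove (i) first and then feed the resulting Frobenius bound into (ii). For part (i), observe that the $i$-th row of $G_T = \hat B - B\tilde H^\top$ is precisely $(\hat b_i - \tilde H b_i)^\top$, so that
\[
\|G_T\|_F^2 = \sum_{i=1}^p \|\hat b_i - \tilde H b_i\|^2 \le p\,\max_{i\le p}\|\hat b_i - \tilde H b_i\|^2 .
\]
By Lemma \ref{lemma B and F} the maximum is $O_p(\varpi_T^2)$, and expanding $p\varpi_T^2 = 1 + 2p^{3/4}/\sqrt T + p^{3/2}/T$ and controlling the cross term by AM--GM (namely $p^{3/4}/\sqrt T = \sqrt{1\cdot(p^{3/2}/T)} \le \tfrac12(1 + p^{3/2}/T)$) yields $\|G_T\|_F^2 = O_p(1 + p^{3/2}/T)$, which is the claim.

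For part (ii) I would write $A := B\tilde H^\top$, so that $\hat B = A + G_T$, and expand the quadratic form into four pieces:
\begin{align*}
\hat B^\top \hat\Sigma_e^{-1}\hat B - A^\top \Sigma_e^{-1}A
&= A^\top(\hat\Sigma_e^{-1}-\Sigma_e^{-1})A + A^\top\hat\Sigma_e^{-1}G_T \\
&\quad + G_T^\top\hat\Sigma_e^{-1}A + G_T^\top\hat\Sigma_e^{-1}G_T .
\end{align*}
The norm bounds I would invoke are: $\|A\| \le \|B\|\,\|\tilde H\| = O_p(\sqrt p)$, since Assumption \ref{assu: factor loading}(a) forces $\lambda_{max}(B^\top B) = O_p(p)$ and the transformed analog of Lemma \ref{lemma Appendix C10} gives $\|\tilde H\| = O_p(1)$; the bounds $\|\hat\Sigma_e^{-1}\| = O_p(1)$ and $\|\hat\Sigma_e^{-1}-\Sigma_e^{-1}\| = O_p(\zeta_T^{1-q}\kappa_q)$ from Theorem \ref{theorem sigma eps}; and $\|G_T\| \le \|G_T\|_F = O_p\big((1 + p^{3/2}/T)^{1/2}\big)$ from part (i). The first term is $\le \|A\|^2\|\hat\Sigma_e^{-1}-\Sigma_e^{-1}\| = O_p(p\,\zeta_T^{1-q}\kappa_q)$. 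Each cross term is $\le \|A\|\,\|\hat\Sigma_e^{-1}\|\,\|G_T\| = O_p\big(\sqrt p\,(1 + p^{3/4}/\sqrt T)\big) = O_p(\sqrt p + p^{5/4}/\sqrt T) = O_p(p\varpi_T)$, using the exact identity $p\varpi_T = \sqrt p + p^{5/4}/\sqrt T$. The last term is $\le \|G_T\|^2\,\|\hat\Sigma_e^{-1}\| = O_p(1 + p^{3/2}/T)$, which is $o_p(p\varpi_T)$ because $1 = o(\sqrt p)$ and $p^{3/2}/T = o(p^{5/4}/\sqrt T)$ (the ratio being $p^{1/4}/\sqrt T = o(1)$). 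Summing the four bounds gives $O_p\big(p(\varpi_T + \zeta_T^{1-q}\kappa_q)\big)$; this is $o_p(p)$ since $\varpi_T = o(1)$ and $\zeta_T^{1-q}\kappa_q = o(1)$ under Assumption \ref{assum: regularization}(d).

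I expect the only genuinely delicate point to be the bookkeeping that matches the cross-term bound $\sqrt p\cdot\|G_T\|_F$ exactly to the advertised rate $p\varpi_T$, and the verification that the pure-$G_T$ term $G_T^\top\hat\Sigma_e^{-1}G_T$ is strictly lower order than $p\varpi_T$ rather than merely comparable to it. Everything else is a routine triangle-inequality/submultiplicativity argument once the three norm estimates $\|A\| = O_p(\sqrt p)$, $\|\tilde H\| = O_p(1)$, and $\|\hat\Sigma_e^{-1}\| = O_p(1)$ are in hand — the first resting on the pervasiveness condition in Assumption \ref{assu: factor loading}(a) and the transformed version of Lemma \ref{lemma Appendix C10}, and the last two being direct consequences of Theorem \ref{theorem sigma eps}.
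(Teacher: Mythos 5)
Your proposal is correct and follows essentially the same route as the paper: part (i) is bounded by $p\max_{i\le p}\|\hat b_i-\tilde Hb_i\|^2$ via Lemma \ref{lemma B and F}, and part (ii) uses exactly the paper's decomposition of $\hat B^\top\hat\Sigma_e^{-1}\hat B-(B\tilde H^\top)^\top\Sigma_e^{-1}B\tilde H^\top$ into the pure-$G_T$ term, the cross terms, and the $A^\top(\hat\Sigma_e^{-1}-\Sigma_e^{-1})A$ term, with the same three ingredients $\|B\tilde H^\top\|=O_p(\sqrt p)$, $\|\hat\Sigma_e^{-1}\|=O_p(1)$, and Theorem \ref{theorem sigma eps}. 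Your write-up is in fact more explicit than the paper's (which states the rate bounds for the three terms without the bookkeeping), and the extra details you supply are accurate.
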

	
	\begin{proof}[Proof of Lemma \ref{lemma: GT related}]
		(i) We have $\left\|G_T \right\|_F^2 \leq \max_i p\left\|\hat{b}_i - \tilde{H}b_i \right\|^2 = O_p\left( 1 + \frac{p^{3/2}}{T} \right)$. 
		
		(ii)   By result (i) and Theorem 1,  we have
		\begin{align*}
			&\left\| \hat{B}^\top \hat{\Sigma}_e^{-1}\hat{B} - (B\tilde{H}^\top)^\top{\Sigma}_e^{-1} B\tilde{H}^\top \right\| \\& \leq \left\|G_T^\top\hat{\Sigma}_e^{-1}G_T \right\|
			+ 2 \left\|G_T^\top \hat{\Sigma}_e^{-1}B\tilde{H}^\top \right\| + \left\|B\tilde{H}^\top(\hat{\Sigma}_e^{-1} - {\Sigma}_e^{-1})B\tilde{H}^\top \right\| 
			\\ &= O_p\left(p\left(\varpi_T+ \zeta_T^{1-q}\kappa_q\right) \right),
		\end{align*}
		where $\varpi_T = \frac{1}{\sqrt{p}} + \frac{p^{1/4}}{\sqrt{T}}$.
	\end{proof}

	\begin{lemma}
		\label{lemma Omega}
		Under the conditions of Theorem 1, with probability approaching one, for some $c>0$,
		\begin{itemize}
			\item[(i)] $\lambda_{min}\left(I_m + (B\tilde{H}^\top)^\top{\Sigma}_e^{-1} B\tilde{H}^\top\right) \geq cp$.
			\item[(ii)] $\lambda_{min}\left(I_m + \hat{B}^\top \hat{\Sigma}_e^{-1}\hat{B} \right) \geq cp$.
			\item[(iii)]   $\lambda_{min}\left(I_m + B^\top{\Sigma}_e^{-1} B\right) \geq cp$.
			\item[(iv)]   $\lambda_{min}\left((\tilde{H}\tilde{H}^\top)^{-1}+ B^\top{\Sigma}_e^{-1} B\right) \geq cp$.
		\end{itemize}
	\end{lemma}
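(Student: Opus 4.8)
The plan is to reduce all four statements to two facts: that the smallest eigenvalue of $B^\top B$ is of order $p$, and that $\Sigma_e$ is well-conditioned. First I would record that, by Assumption 1(a), $\|p^{-1}B^\top B - I_m\|_F = O_p(p^{-1/2})$, so Weyl's theorem gives $\lambda_{min}(p^{-1}B^\top B) \geq 1 - O_p(p^{-1/2})$, whence with probability approaching one (w.p.a.1.) $\lambda_{min}(B^\top B) \geq p/2$. Combining this with $\lambda_{max}(\Sigma_e) < c_2$ from Assumption 2(b), for any unit vector $x$ one has $x^\top B^\top \Sigma_e^{-1} B x = (Bx)^\top \Sigma_e^{-1}(Bx) \geq c_2^{-1}\|Bx\|^2 \geq c_2^{-1}\lambda_{min}(B^\top B)$, so that $\lambda_{min}(B^\top \Sigma_e^{-1} B) \geq p/(2c_2)$ w.p.a.1. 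This single estimate is the engine behind every part.

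Parts (iii) and (iv) would then follow immediately. For (iii), adding $I_m$ gives $\lambda_{min}(I_m + B^\top \Sigma_e^{-1} B) \geq 1 + p/(2c_2) \geq cp$ with $c = 1/(2c_2)$. For (iv) I would invoke the Weyl additivity bound $\lambda_{min}(A+B) \geq \lambda_{min}(A) + \lambda_{min}(B)$: since $\tilde{H}\tilde{H}^\top = I_m + o_p(1)$ by Lemma \ref{lemma Appendix C10} (applied to $\tilde{H}$), the matrix $(\tilde{H}\tilde{H}^\top)^{-1}$ is well-defined and positive definite w.p.a.1, so its smallest eigenvalue is nonnegative and $\lambda_{min}((\tilde{H}\tilde{H}^\top)^{-1} + B^\top \Sigma_e^{-1} B) \geq \lambda_{min}(B^\top \Sigma_e^{-1} B) \geq p/(2c_2)$.

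Part (i) requires handling the rotation. Writing $(B\tilde{H}^\top)^\top \Sigma_e^{-1} B\tilde{H}^\top = \tilde{H}(B^\top \Sigma_e^{-1} B)\tilde{H}^\top$, for a unit vector $x$ I would bound $x^\top \tilde{H}(B^\top \Sigma_e^{-1} B)\tilde{H}^\top x \geq \lambda_{min}(B^\top \Sigma_e^{-1} B)\|\tilde{H}^\top x\|^2$, and then use $\|\tilde{H}^\top x\|^2 = x^\top \tilde{H}\tilde{H}^\top x \geq \lambda_{min}(\tilde{H}\tilde{H}^\top) \geq 1/2$ w.p.a.1., again from Lemma \ref{lemma Appendix C10}. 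This yields $\lambda_{min}(I_m + (B\tilde{H}^\top)^\top \Sigma_e^{-1} B\tilde{H}^\top) \geq p/(4c_2) \geq cp$. Part (ii) is then a perturbation of (i): Lemma \ref{lemma: GT related}(ii) gives $\|\hat{B}^\top \hat{\Sigma}_e^{-1}\hat{B} - (B\tilde{H}^\top)^\top \Sigma_e^{-1} B\tilde{H}^\top\| = o_p(p)$, so Weyl's theorem produces $\lambda_{min}(I_m + \hat{B}^\top \hat{\Sigma}_e^{-1}\hat{B}) \geq \lambda_{min}(I_m + (B\tilde{H}^\top)^\top \Sigma_e^{-1} B\tilde{H}^\top) - o_p(p) \geq cp - o_p(p)$, which is at least $(c/2)p$ w.p.a.1.

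The argument is essentially a chain of quadratic-form and Weyl estimates, so there is no deep obstacle; the main thing to be careful about is that every bound is a high-probability rather than a deterministic statement. Because $B$ is random, the order-$p$ lower bound on $\lambda_{min}(B^\top B)$ holds only w.p.a.1., and I must keep intersecting the relevant limiting-probability events — pervasiveness of $B$, near-orthogonality of $\tilde{H}$, and the spectral-norm bound of Lemma \ref{lemma: GT related}(ii) — before taking the final constant $c$ to be any fixed value below $1/(4c_2)$.
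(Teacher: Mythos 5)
Your proof is correct and follows essentially the same route as the paper's: both reduce everything to $\lambda_{\min}(\Sigma_e^{-1})\lambda_{\min}(B^\top B)\lambda_{\min}(\tilde{H}\tilde{H}^\top) \geq cp$ (using pervasiveness from Assumption 1(a), $\lambda_{\max}(\Sigma_e) < c_2$, and Lemma \ref{lemma Appendix C10}), and both obtain part (ii) as a Weyl perturbation of part (i) via the $o_p(p)$ bound of Lemma \ref{lemma: GT related}(ii). Your quadratic-form phrasing and explicit constants are just a spelled-out version of the paper's eigenvalue-product inequalities, so no substantive difference.
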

	
	\begin{proof}[Proof of Lemma \ref{lemma Omega}]
		(i) By Lemma \ref{lemma Appendix C10}, with probability approaching one, $\lambda_{min}(\tilde{H}\tilde{H}^\top )$ is bounded away from zero. Hence,
		\begin{align*}
			\lambda_{min}\left(I_m + (B\tilde{H}^\top)^\top{\Sigma}_e^{-1} B\tilde{H}^\top\right) &\geq \lambda_{min}\left( (B\tilde{H}^\top)^\top{\Sigma}_e^{-1} B\tilde{H}^\top\right)
			\\& \geq \lambda_{min}\left( {\Sigma}_e^{-1} \right) \lambda_{min}\left( (B\tilde{H}^\top)^\top B\tilde{H}^\top\right)
			\\ & \geq \lambda_{min}\left( {\Sigma}_e^{-1} \right) \lambda_{min}\left( B^\top B\right)\lambda_{min}\left( \tilde{H}\tilde{H}^\top \right) \\& \geq cp.
		\end{align*}
		
		(ii) The result follows from part (i) and Lemma \ref{lemma: GT related}(ii). Part (iii) and (iv) follow from a similar argument of part (i) and Lemma \ref{lemma Appendix C10}.
	\end{proof}
	
	\bigskip
	
	\bigskip

	\begin{proof}[Proof of Theorem 2.]
		Define $$\tilde{\Sigma}_r = B\tilde{H}^\top \tilde{H}B^\top + \Sigma_e.$$
		Note that $\hat{\Sigma}_r = \hat{B}\hat{B}^\top + \hat{\Sigma}_e$ and $\Sigma_r = BB^\top + \Sigma_e$. Then by triangular inequality, it follows that 
		$$
		\left\|\hat{\Sigma}_r^{-1} -\Sigma_r^{-1} \right\| \leq \left\|\hat{\Sigma}_r^{-1} - \tilde{\Sigma}_r^{-1} \right\| + \left\| \tilde{\Sigma}_r^{-1} - \Sigma_r^{-1} \right\|.
		$$
		Using the Sherman-Morrison-Woodbury formula, we have $\left\|\hat{\Sigma}_r^{-1} - \tilde{\Sigma}_r^{-1} \right\| \leq \sum_{i=1}^6 L_i$, where
		\begin{align}
			&L_1 = \left\|\hat{\Sigma}_e^{-1} - \tilde{\Sigma}_e^{-1} \right\| \notag
			\\
			&
			L_2 = \left\|(\hat{\Sigma}_e^{-1} - \tilde{\Sigma}_e^{-1})\hat{B}\left[ I_m + \hat{B}^\top\hat{\Sigma}_e^{-1}\hat{B} \right]^{-1} \hat{B}^{\top}\hat{\Sigma}_e^{-1} \right\| \notag
			\\ &
			L_3 = \left\|(\hat{\Sigma}_e^{-1} - \tilde{\Sigma}_e^{-1})\hat{B}\left[ I_m + \hat{B}^\top\hat{\Sigma}_e^{-1}\hat{B} \right]^{-1} \hat{B}^{\top}\Sigma_e^{-1}\right\| \notag
			\\ &
			L_4 = \left\| \Sigma_e^{-1}(\hat{B}-B\tilde{H}^\top)\left[ I_m + \hat{B}^\top\hat{\Sigma}_e^{-1}\hat{B} \right]^{-1} \hat{B}^{\top}\Sigma_e^{-1}\right\| \notag
			\\ &
			L_5 = \left\| \Sigma_e^{-1}(\hat{B}-B\tilde{H}^\top)\left[ I_m + \hat{B}^\top\hat{\Sigma}_e^{-1}\hat{B} \right]^{-1} \tilde{H}B^\top \Sigma_e^{-1}\right\| \notag
			\\ &
			L_6 = \left\| \Sigma_e^{-1}B\tilde{H}^\top\left( \left[ I_m + \hat{B}^\top\hat{\Sigma}_e^{-1}\hat{B} \right]^{-1} -\left[ I_m + \tilde{H}B^\top \Sigma_e^{-1}B\tilde{H}^\top \right]^{-1} \right)\tilde{H}B^\top\Sigma_e^{-1} \right\|
			\label{eq: L1-L6}
		\end{align}
		
		Now, we bound each of the six terms respectively. For $L_1$, it is bounded by Theorem 1. Let $\Omega = \left[ I_m + \hat{B}^\top\hat{\Sigma}_e^{-1}\hat{B} \right]^{-1} $, then
		$$
		L_2 \leq \left\| \hat{\Sigma}_e^{-1} - {\Sigma}_e^{-1} \right\| \left\|\hat{B}^\top \Omega \hat{B}^\top \right\| \left\|\hat{\Sigma}_e^{-1} \right\|. 
		$$
		Note that Theorem 1 implies that $\left\| \hat{\Sigma}_e^{-1} \right\| = O_p(1)$. Lemma \ref{lemma Omega} implies that $\|\Omega \| = O_p(p^{-1})$. This shows that $L_2 = O_p(L_1)$. Similarly, $L_3 = O_p(L_1)$. In addition, since $\|G_T \|_F^2 = O_p(1 + \frac{p^{3/2}}{T})$, $L_4 \leq \left\| \Sigma_e^{-1}(\hat{B}-B\tilde{H}^\top)\right\| \left\|\Omega \right\| \left\| \hat{B}^{\top}\Sigma_e^{-1}\right\| = O_p(\varpi_T)$. Similarly, $L_5 = O_p(L_4)$. 
		Finally, let $\Omega_1 = \left[ I_m + (B\tilde{H}^\top)^\top\Sigma_e^{-1}B\tilde{H}^\top\right]^{-1}$. By Lemma \ref{lemma Omega}, $\|\Omega_1 \| = O_p(p^{-1})$, then by Lemma \ref{lemma: GT related}(ii), 
		\begin{align*}
			\left\| \Omega-\Omega_1 \right\| & = \left\| \Omega(\Omega^{-1}-\Omega_1^{-1})\Omega_1 \right\| \leq O_p(p^{-2})\left\| (BH^\top)\Sigma_e^{-1}BH^\top - \hat{B}^\top\hat{\Sigma}_e^{-1}\hat{B}  \right\| = O_p\left(p^{-1}\zeta_T^{1-q}\kappa_q + p^{-1}\varpi_T\right).
		\end{align*}
		As a result, $L_6 \leq \left\| \Sigma_e^{-1}BH^\top\right\|^2\left\| \Omega-\Omega_1 \right\| = O_p\left(\zeta_T^{1-q}\kappa_q + \varpi_T\right)$. Adding up $L_1 - L_6$ gives 
		$$
		\left\|\hat{\Sigma}_r^{-1} - \tilde{\Sigma}_r^{-1} \right\|= O_p\left(\zeta_T^{1-q}\kappa_q + \varpi_T\right).
		$$
		Note that 
		$\varpi_T/\zeta_T = o(1)$. As a result,  $$
		\left\|\hat{\Sigma}_r^{-1} - \tilde{\Sigma}_r^{-1} \right\|= O_p\left(\zeta_T^{1-q}\kappa_q \right).
		$$

		On the other hand, using Sherman-Morrison-Woodbury formula, we have
		\begin{align*}
			\left\|\tilde{\Sigma}_r^{-1} - {\Sigma}_r^{-1} \right\| & \leq \left\| \Sigma_e^{-1}B\left( \left[(\tilde{H}^\top \tilde{H})^{-1} + B^\top\Sigma_e^{-1}B\right]^{-1} - \left[ I_m + {B}^\top{\Sigma}_e^{-1}{B} \right]^{-1} \right)B^\top\Sigma_e^{-1} \right\|
			\\ &
			\leq 
			O_p(p)\left\| \left[(\tilde{H}^\top \tilde{H})^{-1} + B^\top\Sigma_e^{-1}B\right]^{-1} - \left[ I_m + {B}^\top{\Sigma}_e^{-1}{B} \right]^{-1}\right\|
			\\& = O_p(p^{-1}) \left\|(\tilde{H}^\top \tilde{H})^{-1} - I_m \right\| = o_p(\zeta_T^{1-q}\kappa_q).
		\end{align*}
		
	\end{proof}
	
	\bigskip
	
	\bigskip

	\begin{proof}[Proof of Theorem 3]
		Recall that
		\begin{eqnarray*}
			&&\hat{R}_{\min}=\frac{1}{1_p^{\top}\hat{\Sigma}^{-1}_{r}1_{p}}, \ \ R_{\min}=\frac{1}{1_p^{\top}\Sigma_{r}^{-1}1_p}.
		\end{eqnarray*}
		Then, we have 
		\begin{eqnarray}\label{eq Rmin}
			\left|\frac{\hat{R}_{\min}}{R_{\min}}-1\right|
			=\left|\frac{1_p^{\top}\Sigma_{r}^{-1}1_p}{1_p^{\top}\hat{\Sigma}_{r}^{-1}1_p}-1\right|
			=\frac{\left|1_p^{\top}\Sigma_{r}^{-1}1_p-1_p^{\top}\hat{\Sigma}_{r}^{-1}1_p\right|}{1_p^{\top}\hat{\Sigma}_{r}^{-1}1_p}
			\leq\frac{p\cdot\left \Vert \Sigma_{r}^{-1}-\hat{\Sigma}_{r}^{-1}\right \Vert }{1_p^{\top}\hat{\Sigma}_{r}^{-1}1_p}. 
		\end{eqnarray}
		The bound for the numerator of (\ref{eq Rmin}) is $O_p\left(p\zeta_T^{1-q}\kappa_q\right)$ by  Theorem 2. The exact order of the denominator of (\ref{eq Rmin}) is $p^{1-\eta}$, which is guaranteed by Assumption 6 and the same argument as the proposition of \cite{ding2021high}. In view of these results, we have the result 
		\begin{eqnarray}\label{y2}
			\left|\frac{\hat{R}_{\min}}{R_{\min}}-1\right|=O_p\left(p^{\eta}\zeta_T^{1-q}\kappa_q\right). 
		\end{eqnarray}
	\end{proof}
	
	
	\bigskip
	
	\bigskip
	
	\begin{proof}[Proof of Theorem 4]
		We decompose the Sharpe ratio in the following way. 
		\begin{eqnarray*}
			\frac{\hat{SR}-SR}{SR}
			=\frac{Z_1+Z_2}{\sqrt{1_p^{\top}\hat{\Sigma}_{r}^{-1}1_p}\cdot1_p^{\top}\Sigma_{r}^{-1}\mu}, 
		\end{eqnarray*}
		where
		\begin{eqnarray*}
			&&Z_1=1_p^{\top}\left(\hat{\Sigma}_{r}^{-1}\hat{\mu}-\Sigma_{r}^{-1}\mu\right)\cdot\sqrt{1_p^{\top}\Sigma_{r}^{-1}1_p},\\
			&&Z_2=1_p^{\top}\Sigma_{r}^{-1}\mu\left(\sqrt{1_p^\top\Sigma_{r}^{-1}1_p}-\sqrt{1_p^\top\hat{\Sigma}_{r}^{-1}1_p}\right). 
		\end{eqnarray*}
		Now we consider the first term involving $Z_1$, 
		\begin{eqnarray}\label{yu90}
			&&\left|J_1\right|
			:=\frac{|Z_1|}{\sqrt{1_p^\top\hat{\Sigma}_{r}^{-1}1_p}\cdot\left|1_p^\top\Sigma_{r}^{-1}\mu\right|}\non
			&\leq&\frac{\left|1_p^\top\left(\hat{\Sigma}_{r}^{-1}-\Sigma_{r}^{-1}\right)\hat{\mu}\right|+\left|1_p^\top\Sigma_{r}^{-1}\left(\hat{\mu}-\mu\right)\right|}{\sqrt{1_p^\top\hat{\Sigma}_{r}^{-1}1_p}\cdot\left|1_p^\top\Sigma_{r}^{-1}\mu\right|}\cdot\sqrt{1_p^\top\Sigma_{r}^{-1}1_p}\non
			&\leq&C \left(p^{\phi}\left|\left|\hat{\Sigma}_{r}^{-1}-\Sigma_{r}^{-1}\right|\right|+p^{\phi-1/2}\left|\left|\hat{\mu}-\mu\right|\right|\right)\non
			&=&O_p\left(p^{\phi}\zeta_T^{1-q}\kappa_q+p^{\phi}\tilde{\zeta}_T\right), 
		\end{eqnarray}
		where we apply Assumption 7 to determine the order of magnitude of $1_p^\top \Sigma_r^{-1}\mu$,  the last equality uses the results that 
		\begin{align*}
			\left|\left|\hat{\mu}-\mu\right|\right| &= \left\| \hat{B}\hat{\mu}_f - B\mu_f \right\|
			\\& 
			\leq \left\| \hat{B}\hat{\mu}_f - B\tilde{H}^\top \hat{\mu}_f \right\| + \left\| B\frac{1}{T}\sum_{t=1}^T\left(\tilde{H}^\top\hat{F}_t - F_t  \right) \right\|
			\\& 
			\leq  \left\| \hat{B} - B\tilde{H}^\top \right\|\left\|\hat{\mu}_f \right\| + \left\| B \right\| \max_t\left\| \tilde{H}^\top\hat{F}_t - F_t   \right\|
			\\
			&=O_p\left(\sqrt{p}\varpi_T + \sqrt{p}\left(\frac{\delta^{1/2}+T^{1/4}+\Delta}{\sqrt{p}}+\sqrt{\frac{logp}{T}}\right)\right)
			\\& 
			= O_p\left(\sqrt{p}\tilde{\zeta}_T \right)
			, 
		\end{align*}
		where $\hat{\mu}_f = T^{-1}\sum_{t=1}^T \hat{F}_t$ and $\tilde{\zeta}_T = \frac{\delta^{1/2}+T^{1/4}+\Delta}{\sqrt{p}} + \frac{p^{1/4}}{\sqrt{T}}$. 
		
		Similarly, we can get
		\begin{eqnarray}\label{yu91}
			&&\left|J_2\right|:=\frac{|Z_2|}{\sqrt{1_p^\top\hat{\Sigma}_{r}^{-1}1_p}\cdot\left|1_p^\top\Sigma_{r}^{-1}\mu\right|}
			\leq\frac{\left|\sqrt{1_p^\top\Sigma_{r}^{-1}1_p}-\sqrt{1_p^\top\hat{\Sigma}_{r}^{-1}1_p}\right|}{\sqrt{1_p^\top\hat{\Sigma}_{r}^{-1}1_p}} \notag\\
			&=&\frac{\left|1_p^\top\left(\Sigma_{r}^{-1}-\hat{\Sigma}_{r}^{-1}\right)1_p\right|}{\sqrt{1_p^\top\hat{\Sigma}_{r}^{-1}1_p}\left(\sqrt{1_p^\top\Sigma_{r}^{-1}1_p}+\sqrt{1_p^\top\hat{\Sigma}_{r}^{-1}1_p}\right)}
			\leq {p^\eta}\left|\left|\hat{\Sigma}_{r}^{-1}-\Sigma_{r}^{-1}\right|\right|\non
			&=&O_p\left(p^\eta\zeta_T^{1-q}\kappa_q\right). 
		\end{eqnarray}
		Then the result is derived from (\ref{yu90}) and (\ref{yu91}). 
	\end{proof}	
	
\end{appendices}
	
	\bibliographystyle{chicago}
	\bibliography{reference}

\end{document}